\def\cH{{\mathcal H}}
\newcommand{\bm}{{\bf{m}}}
\newcommand{\cN}{{\mathcal N}}
\newcommand{\non}{\nonumber}
\newcommand{\fa}{{\frak  a}}
\newcommand{\fb}{{\frak  b}}
\newcommand{\fc}{{\frak c}}
\newcommand{\wt}{\widetilde}
\newcommand{\wh}{\widehat}
\newcommand{\e}{\varepsilon}
\newcommand{\pt}{\partial}
\newcommand{\rd}{{\rm d}}
\newcommand{\bR}{{\mathbb R}}
\newcommand{\ba}{{\bf{a}}}
\newcommand{\bb}{{\bf{b}}}
\newcommand{\bx}{{\bf{x}}}
\newcommand{\by}{{\bf{y}}}
\newcommand{\bu}{{\bf{u}}}
\newcommand{\bv}{{\bf{v}}}
\newcommand{\al}{\alpha}
\newcommand{\be}{\begin{equation}}
\newcommand{\ee}{\end{equation}}
\newcommand{\la}{\lambda}
\newcommand{\om}{{\omega}}
\newcommand{\cA}{{\cal A}}
\newcommand{\cG}{{\cal G}}
\newcommand{\im}{{\text Im }}
\newcommand{\E}{{\mathbb E }}
\newcommand{\R}{{\mathbb R }}
\newcommand{\N}{{\mathbb N}}
\renewcommand{\P}{{\mathbb P}}
\newtheorem{theorem}{Theorem}
\newtheorem{corollary}[theorem]{Corollary}
\newtheorem{lemma}[theorem]{Lemma}
\newtheorem{proposition}[theorem]{Proposition}
\newtheorem{remark}[theorem]{Remark}
\newcommand{\qed}{\hfill\fbox{}\par\vspace{0.3mm}}
\newenvironment{proof}{{\bf Proof.}} {\hfill\qed}
\numberwithin{equation}{section}
\numberwithin{theorem}{section}
\numberwithin{definition}{section}
\begin{document}

\title{Universality of Random Matrices and Local Relaxation Flow}

\author{L\'aszl\'o Erd\H os${}^1$\thanks{Partially supported
by SFB-TR 12 Grant of the German Research Council},
Benjamin Schlein${}^2$\;
and Horng-Tzer Yau${}^3$\thanks{Partially supported
by NSF grants DMS-0757425, 0804279} \\
\\
Institute of Mathematics, University of Munich, \\
Theresienstr. 39, D-80333 Munich, Germany${}^1$ \\ \\
Department of Pure Mathematics and Mathematical Statistics
\\  University of Cambridge \\
Wilberforce Rd, Cambridge CB3 0WB, UK${}^2$ \\ \\
Department of Mathematics, Harvard University\\
Cambridge MA 02138, USA${}^3$ \\ \\
\\}

\date{Nov 24, 2010}

\maketitle

\begin{abstract}

Consider the  Dyson Brownian motion with parameter $\beta$, where  $\beta=1, 2, 4$ corresponds to 
 the eigenvalue flows for the eigenvalues of symmetric, hermitian and quaternion 
self-dual ensembles.  For any $\beta \ge 1$, 
we prove that the  relaxation time to local equilibrium for  the Dyson Brownian motion is 
bounded above by $N^{-\zeta}$ for some $\zeta> 0$. The proof is based on an estimate 
of the entropy flow of  the Dyson Brownian motion w.r.t. a ``pseudo 
equilibrium measure".  As an application of this estimate, we prove that 
the eigenvalue spacing statistics in the bulk of the spectrum for
$N\times N$ symmetric  Wigner  ensemble  is the same as 
that of  the Gaussian Orthogonal Ensemble  (GOE) in the limit $N \to \infty$. The 
 assumptions on 
the probability distribution of the matrix elements of the Wigner ensemble are  
a subexponential decay and some  minor restriction on the support.

\end{abstract}

{\bf AMS Subject Classification:} 15A52, 82B44

\medskip

{\it Running title:} Universality for Wigner matrices

\medskip

{\it Key words:}  Wigner random matrix, Dyson Brownian Motion.

\newpage

\section{Introduction}

A central question concerning random matrices is the universality conjecture
which states that  local statistics  of eigenvalues are determined by
the symmetries of the ensembles
but are otherwise  independent of the details of  the distributions.
There are two types of universalities: the edge universality and the
bulk universality concerning the interior of the spectrum.  The edge
universality is commonly approached via the moment method \cite{SS, Sosh} while 
the
bulk universality was proven for very general classes of unitary invariant 
ensembles
(see, e.g. \cite{BI, DKMVZ1, DKMVZ2, M, PS} and references therein) based on
detailed analysis of
orthogonal polynomials.
The most prominent non-unitary ensembles are the Wigner matrices,
i.e., random matrices with i.i.d. matrix elements that follow a general distribution.
The bulk universality  for  {\it Hermitian} Wigner ensembles
was first established in \cite{EPRSY} for ensembles with smooth distributions.
The later work \cite{TV} by Tao and Vu did not assume smoothness
but it required some moment condition which was removed later in \cite{ERSTVY}.
Our  approach  \cite{EPRSY}  to prove the universality was based  on
the following three steps.

\begin{description}
\item[\it Step 1: Local semicircle law. ] It states that the number of eigenvalues in a
spectral  window containing about $N^\e$ eigenvalues is given by the semicircle law 
with a very high probability \cite{ESY1, ESY3}. The factor 
$N^\e$ can be improved to  any sufficiently  large constant
at the expense of deterioriation of the probability estimate.

\item[\it Step 2: Universality for Gaussian divisible ensembles.] The Gaussian divisible ensembles 
are given by matrices of the form 
\be
\wh H+ \sqrt{s} V,
\label{HaV}
\ee
where $\wh H$  is a Wigner matrix,
$V$ is an independent standard GUE matrix
and $s > 0$.  Johansson  \cite{J}  and the later improvement in \cite{BP}  proved 
that the bulk universality holds for
ensembles  of the form \eqref{HaV} if $s > 0$ is independent of $N$. In the work \cite{EPRSY}, 
this result was extended to $s = N^{-1 + \e}$ for any $\e > 0$. 
The key ingredient for this extension was the local semicircle law.

\item[\it Step 3: Approximation by Gaussian divisible ensembles.]  For any given Wigner
 matrix $H$, we find another Wigner matrix $\wh H$ so that
the eigenvalue statistics  of  $H$ and $\wh H+ \sqrt{s} V$ are close to each other.
 The choice of $\wh H$ is given by a  reverse heat flow argument.

\end{description}

Johansson's proof  of the universality of Hermitian Wigner ensembles 
relied  on the asymptotic  analysis of an
explicit formula by Br\'ezin-Hikami  \cite{BH, J} for
the correlation functions of the eigenvalues of $\wh H+ \sqrt s V$.
Unfortunately, the similar formula for GOE is not very explicit
and the corresponding result is not available.
On the other hand, the eigenvalue distribution of the matrix 
$\wh H+ \sqrt s V$ is the same as that of $\wh H + V(s)$, where the matrix elements of $V(s)$ are 
independent standard Brownian motions with variance $s/N$. Dyson  observed that 
the evolution of the eigenvalues of the flow $s\to \wh H + V(s)$ is given by a  system of coupled
stochastic differential equations (SDE), commonly called
the  Dyson Brownian motion (DBM) \cite{Dy}.

If we replace the Brownian motions by the
Ornstein-Uhlenbeck  processes, the resulting dynamics on the eigenvalues,
which we still call
DBM,  has the GUE or GOE  eigenvalue  distributions as the invariant measures
depending on the  symmetry type of the ensembles. 
Thus  the result of Johansson can be interpreted as stating that the
local statistics of GUE is
reached via DBM for time of order one. In fact, by analyzing the dynamics
of DBM with ideas from the hydrodynamical limit,
we have extended Johansson's result  to $s\gg N^{-3/4}$  \cite{ERSY}.
The  key
observation of \cite{ERSY}   is that the local statistics of
eigenvalues  depend exclusively on the approach to local equilibrium.
This method  avoids the usage of explicit formulae for correlation functions,   
but
the identification of  local equilibria, unfortunately,  still uses
explicit representations of correlation functions by  orthogonal polynomials
(following e.g. \cite{PS}),
 and the extension to other ensembles is not a simple task.

Therefore, the universality for symmetric random matrices  remained
open and the only partial result is  Theorem 23 of \cite{TV} 
for Wigner matrices with  the first four moments of the matrix elements
matching  those of GOE. 
The approach of \cite{TV} consisted of three similar steps
as outlined above. For Step 2, it used the result of \cite{J}.
For Step 3, a four moment comparison theorem
for individual eigenvalues was proved in \cite{TV} and the
 local semicircle law (Step 1)
was one of the key inputs in this proof.

 In this paper, we  introduce a general approach to prove local ergodicity 
of DBM, partially motivated by the previous work \cite{ERSY}. 
  In this approach the
analysis of orthogonal polynomials or explicit formulae
are completely eliminated and the method applies to both Hermitian and symmetric 
ensembles.
In fact, the heart of the proof is a convex analysis and it
applies to $\beta$-ensembles for any $\beta \ge 1$. The model specific 
information
required to complete this approach involves only rough estimates on the
accuracy of the local density of eigenvalues.
We expect this method to apply to a very general class of models. More detailed explanations 
will be given in Section \ref{sec3}.

\section{Statement of Main Results}

To fix the notation, we will present the case of symmetric  Wigner matrices; the 
modification to the Hermitian case is straightforward and will be omitted.
The extension to the quaternion self-dual case is also
standard, see, e.g. \cite{ESYY} for the notations and setup.
 On the other hand, the main theorem on DBM  (Theorem \ref{thm:main})
 is valid for  general $\beta$-ensembles. Thus 
all notations for matrices will be restricted to symmetric matrices but  
all results for flows will 
be stated and proved for  general $\beta$-ensembles.
We first explain our general result about  DBM
and in Section \ref{sec:univwigner} we will present its  application 
to Wigner matrices.

\subsection{ Local ergodicity of Dyson Brownian motion}

The joint distributions of the
eigenvalues $\bx= (x_1, x_2, \ldots , x_N)$
of the Gaussian Unitary Ensemble (GUE)
and the Gaussian Orthogonal Ensemble (GOE)
are given by the following measure
\be\label{H}
\mu=\mu_N^{(\beta)}(\rd{\bf x})=
\frac{e^{-N\cH({\bf x})}}{Z_\beta}\rd{\bf x},\qquad \cH({\bf x}) =
\left [ \beta \sum_{i=1}^N \frac{x_{i}^{2}}{4} -  \frac{\beta}{N} \sum_{i< j}
\log |x_{j} - x_{i}| \right ],
\ee
where
$\beta=1$ for GOE and $\beta=2$ for GUE. 
We will sometimes use $\mu$ to denote the density
of the measure as well, i.e., $\mu(\bx) \rd\bx = \mu(\rd \bx)$.
We consider $\mu$  defined on the {\it ordered set}
$$
  \Sigma_N : = \{ \bx\in\bR^N \; : \; x_1 < x_2 < \ldots < x_N\},
$$
and  this measure is well-defined for all $\beta  >0 $. 
The Dyson Brownian motion (DBM) is 
characterized by the   
generator
\be
L=   \sum_{i=1}^N \frac{1}{2N}\partial_{i}^{2}  +\sum_{i=1}^N
\Bigg(- \frac{\beta}{4} x_{i} +  \frac{\beta}{2N}\sum_{j\ne i}
\frac{1}{x_i - x_j}\Bigg) \partial_{i}, 
\label{L}
\ee
acting on $L^2(\mu)$.  The DBM is reversible with respect to $\mu$
with the Dirichlet form
\be
D(f) = -\int  f L f  \rd \mu =  \sum_{j=1}^N \frac{1}{2N}
\int (\partial_j f)^2 \rd \mu, 
\label{def:dir}
\ee
where $\partial_j=\partial_{x_j}$.
Notice that we have added a drift $\frac{\beta}{4} x_{i}  \partial_{i}$ 
so that the DBM is reversible w.r.t. $\mu$.  The original definition
by Dyson in \cite{Dy} was slightly different; it contained
no drift term.


Denote the distribution of the process  at the time $t$
by $f_t ({\bf x})\mu(\rd {\bf x})$.
Then $f_t$ satisfies
\be\label{dy}
\partial_{t} f_t =  L f_t.
\ee
The corresponding stochastic differential equation
for  ${\bf x}(t)$ is now given by  (see, e.g.
Section 12.1 of \cite{G})
\be\label{sde}
 \rd  x_i  =    \frac{\rd B_i}{\sqrt{N}} +  \left [ -  \frac{\beta}{4}
x_i+  \frac{\beta}{2 N}\sum_{j\ne i}
\frac{1}{x_i - x_j}  \right ]  \rd t, \qquad 1\leq i\leq N,
\ee
where $\{ B_i\; : \; 1\leq i\leq N\}$ is a collection of
independent Brownian motions. The  well-posedness of DBM  on $\Sigma_N$
has been proved
in Section 4.3.1 of \cite{G}, see the Appendix for some more  details. 
This  step requires $\beta \ge 1$ which we will assume from now on.

 The dynamics given by \eqref{dy} and \eqref{sde} with  $\beta=1, 2, 4$ 
can be realized by the evolution of the eigenvalues of symmetric,  hermitian and 
quaternion self-dual matrix ensembles, 
but the dynamics is well-defined for $\beta \ge 1$ independently
of the original matrix models.
Our main result, Theorem \ref{thm:main}, is valid 
for all  $\beta\ge1$.

The expectation with respect to the  density $f_t$ will be denoted by $\E_t$ with
$\E:= \E_0$. 
For any $k\ge 1$ we define the $k$-point
correlation functions (marginals) of the probability measure $f_t\rd\mu$ by
\be
 p^{(k)}_{t,N}(x_1, x_2, \ldots,  x_k) := \int_{\R^{N-k}}
f_t(\bx) \mu(\bx) \rd x_{k+1}\ldots
\rd x_N.
\label{corr}
\ee
Similarly,  the correlation functions of the equilibrium measure  are denoted by
$$
 p^{(k)}_{\mu,N}(x_1, x_2, \ldots,  x_k) := \int_{\R^{N-k}}
\mu(\bx) \rd x_{k+1}\ldots
\rd x_N.
$$
For the purpose of these definitions only,
 we have extended the measures  $f_t(\bx) \mu(\bx)\rd\bx$ and
 $\mu(\bx)\rd\bx$ from the ordered set $\Sigma_N$ to $\R^N$
by requiring that the resulting measure on $\R^N$  is symmetric w.r.t. 
 all permutations of $\{1, \ldots  N \}$  and 
with a slight abuse of notations,
 we  use the same notation $f_t(\bx) \mu(\bx)$ to denote the density of 
the symmetrized measure. Apart from this definition, we follow the 
convention that the measures are defined on $\Sigma_N$
and all integrations in this paper are on the set $\Sigma_N$ unless otherwise  specified.

Recall that 
$$
  \varrho_{sc}(x) : =\frac{1}{2\pi}\sqrt{(4-x^2)_+}
$$
is the density of the semicircle law and  it is well-known that $ \varrho_{sc}$ is also the density 
w.r.t.  the measure $\mu$ in the limit $N \to \infty$ for $\beta \ge 1$.
Define
\be
 n_{sc}(E) := \int_{-\infty}^E \varrho_{sc}(x) \rd x,
\label{def:Nsc}
\ee
and let $\gamma_j$ be the classical location of the $j$-th eigenvalue
\be\label{gamma}
\gamma_j = n_{sc}^{-1} (j/N).
\ee
Introduce the quantity
\be\label{Q}
Q: =  \sup_{t\ge 0}  N  \int  \left [  \frac 1 {N }  \sum_{j=1}^N
|x_j-\gamma_j|  \right ]^2   f_t(\bx) \rd \mu(\bx).
\ee
Our key result on the local ergodicity of DBM is the following theorem.

\begin{theorem} [Local Ergodicity  of Dyson Brownian Motion] \label{thm:main} 
 Suppose  the initial density $f_0$ satisfies
$S_{\mu}(f_0):=\int f_0\log f_0\rd\mu \le CN^m$ with some fixed exponent $m$
independent of $N$.
Let $f_t$ be  the solution  of the forward equation
\eqref{dy}. 
Suppose that  the following three assumptions are satisfied
for all sufficiently large $N$.
\begin{description}
\item[Assumption 1. ]  For some  $\fa>0$  we have
\be\label{Qbound}
Q \le N^{ - 2 \fa }.
\ee 
\item[Assumption 2. ] There exist  constants $\fb > 0$ and $\fc>0$ such that 
\be\label{sign2} 
\sup_{t \ge 0} 
  \int {\bf 1}  \Big\{ \max_{j=1, \ldots N} |x_j-\gamma_j| 
  \ge N^{-\frak b}\Big\} f_t\rd\mu\le \exp {\big[ -N^{\fc}\big]}.
\ee
\item[Assumption 3. ]
For any compact subinterval $I_0\subset (-2, 2)$,
and for  any $\delta>0$,  $\sigma>0$ and  $n\in \N$ there is a  constant $C_n$ depending on $I_0$,
$\delta$ and $\sigma$ such that for any interval $I\subset I_0$ with
$|I|\ge N^{-1+\sigma}$ and for any $K\ge 1$, we have
\be
   \sup_{t \ge 0}
    \int {\bf 1}\big\{ \cN_I \ge KN|I| \big\}f_t \rd\mu
\le C_{n} K^{-n}.
\label{ass4}
\ee
\end{description}
 Let $E\in \bR$ such that $ |E| < 2$ and let $0<b<2-|E|$.
There exists a constant $\zeta > 0$,  depending only on $\fa$ and $\fb$, such that, 
for any integer $k\ge 1$ and for any compactly supported continuous test function
$O:\bR^k\to \bR$   we have
\be
\begin{split}
\lim_{N\to \infty} \;
\int_{E-b}^{E+b}  \frac { \rd E' } { 2 b} 
\int_{\R^k} &  \rd\alpha_1
\ldots \rd\alpha_k \; O(\alpha_1,\ldots,\alpha_k) \\
&\times \frac{1}{\varrho_{sc}(E)^k} \Big ( p_{t,N}^{(k)}  - p_{\mu, N} ^{(k)} \Big )
\Big (E'+\frac{\alpha_1}{N\varrho_{sc}(E)},
\ldots, E'+\frac{\alpha_k}{ N\varrho_{sc}(E)}\Big) =0
\label{abstrthm}
\end{split}
\ee
for  $ t = N^{-\zeta}$. 
\end{theorem}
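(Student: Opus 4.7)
My strategy is to force rapid local relaxation of DBM by comparing the flow against a \emph{pseudo equilibrium measure} that carries extra convexity in the directions transverse to the classical locations. Define
\be
\omega(\bx) = \frac{1}{Z_\tau}\exp\!\Bigl(-\frac{N}{2\tau}\sum_{j=1}^{N}(x_j-\gamma_j)^{2}\Bigr)\mu(\bx),
\ee
for a parameter $\tau>0$ to be optimized. The Hessian of $-\log\omega$ is the Hessian of $\cH$ (already positive on $\Sigma_N$, since the log-interaction contributes a graph-Laplacian term and the quadratic term $\beta/2\cdot\mathrm{Id}$) plus the strictly positive term $\tau^{-1}\mathrm{Id}$. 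Bakry--\'Emery then gives a logarithmic Sobolev inequality for $\omega$ with constant of order $\tau$. The generator $\widetilde L$ reversible with respect to $\omega$ differs from $L$ only by the additional drift $-\frac{1}{2\tau}\sum_j (x_j-\gamma_j)\partial_j$.

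The core step is to control the relative entropy $S_t := \int f_t \log(f_t/\psi)\,\rd\mu$ with $\psi:=\omega/\mu$. Writing $\partial_t f_t = \widetilde L f_t + \frac{1}{2\tau}\sum_j(x_j-\gamma_j)\partial_j f_t$ and differentiating $S_t$, one obtains after a Cauchy--Schwarz on the extra drift an inequality of the schematic form
\be
\partial_t S_t \le -\tfrac{1}{2}\, D_\omega\!\bigl(\sqrt{f_t/\psi}\,\bigr) + \frac{C\,N}{\tau^{2}}\int \sum_{j=1}^{N}(x_j-\gamma_j)^2 f_t\,\rd\mu.
\ee
By Assumption 1 the error integral is at most $Q/N \le N^{-1-2\fa}$, and the log-Sobolev inequality upgrades $D_\omega$ into $C^{-1}\tau^{-1}S_t$. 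Gronwall, together with $S_0 \le CN^m$, then yields $S_t \le C\tau^{-1}N^{1-2\fa}$ for $t\gg \tau\log N$. Choosing $\tau=N^{-\zeta}$ with $\zeta$ strictly between $0$ and $2\fa$ makes $S_t$ sufficiently small for $t$ on the same scale, and the entropy inequality then allows one to replace $f_t\mu$ by $\omega$ in the left-hand side of \eqref{abstrthm}.

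It remains to compare $\omega$ with $\mu$ at the microscopic scale. This is a direct perturbation: interpolating through $\omega_s(\bx) = Z_s^{-1}\exp(-sN(2\tau)^{-1}\sum_j(x_j-\gamma_j)^2)\mu(\bx)$ for $s\in[0,1]$ and differentiating the microscopically-rescaled $k$-point integral produces an error controlled by $N\tau^{-1}\int\sum_j(x_j-\gamma_j)^2\,\rd\omega_s$. Assumption 2 rules out macroscopic excursions of individual eigenvalues from their classical locations, Assumption 3 bounds concentration of eigenvalues in small bins on the relevant scale $N^{-1+\sigma}$, and the $E'$-average in \eqref{abstrthm} washes out the residual $O(1/N)$ mismatch between $\gamma_j$ and nearby random positions. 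One concludes that the $E'$-averaged, microscopically-rescaled $k$-point functions of $\omega$ and of $\mu$ coincide in the limit $N\to\infty$, completing the chain of comparisons.

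The main obstacle I expect is closing the entropy dissipation step sharply: the drift produced by the mismatch between $L$ and $\widetilde L$ must, after integration by parts, be split into a piece genuinely absorbed by $D_\omega$ and a remainder controlled by the $Q$-quantity, with constants that survive Gronwall. The balance between $\tau$ small enough to make the log-Sobolev-driven relaxation beat $t=N^{-\zeta}$ and $\tau$ large enough that $\omega$ still reproduces the bulk microscopic statistics of $\mu$ is where Assumptions 1--3 must all be invoked simultaneously, and where the precise exponent $\zeta$ is pinned down in terms of $\fa$ and $\fb$.
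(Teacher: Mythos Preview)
Your proposal has a genuine gap at the entropy-dissipation step, and it stems from the choice of pseudo equilibrium measure.

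With your $\omega$, the extra drift is $b_j=\frac{1}{2\tau}(x_j-\gamma_j)$, so the error term after Cauchy--Schwarz is proportional to $\int\sum_j(x_j-\gamma_j)^2\,f_t\,\rd\mu$. You claim this is bounded by $Q/N$, but $Q=N\int\bigl[\tfrac{1}{N}\sum_j|x_j-\gamma_j|\bigr]^2 f_t\,\rd\mu$ controls the \emph{square of the average}, not the \emph{average of the squares}: Cauchy--Schwarz gives $\tfrac{1}{N}\bigl[\sum_j|x_j-\gamma_j|\bigr]^2\le \sum_j(x_j-\gamma_j)^2$, so $Q$ is only a \emph{lower} bound on your error integral. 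If instead you try Assumption~2, you get $\sum_j(x_j-\gamma_j)^2\le N^{1-2\fb}$ on the good set, hence after Gronwall $S_t\lesssim \tau\cdot N\tau^{-2}N^{1-2\fb}=N^{2-2\fb}/\tau$, which cannot be made $o(1)$ for $\tau\le 1$ unless $\fb>1$---and no such hypothesis is available.

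This is precisely why the paper's pseudo equilibrium measure is built from a mean-field potential $W_j(x)=-\tfrac{\beta}{N}\sum_{|k-j|>N\eta}\log(|x-\gamma_k|+\eta)$ rather than a simple quadratic pinning. With that choice the drift error $b_j$ becomes a \emph{difference} of two long-range sums that cancels to leading order, yielding $|b_j|\le C\eta^{-2}\cdot\tfrac{1}{N}\sum_k|x_k-\gamma_k|$; then $\sum_j b_j^2$ is controlled by $N\eta^{-4}\bigl[\tfrac{1}{N}\sum_k|x_k-\gamma_k|\bigr]^2$, which \emph{is} bounded by $\eta^{-4}Q$. Moreover, the paper does not aim for $S_t=o(1)$ at all: the entropy stays of size $O(N\cdot\text{small})$, and the comparison of local observables is obtained instead from an additional ``difference'' Dirichlet-form dissipation term $\sum_{|i-j|\le N\eta}(x_i-x_j)^{-2}(\partial_i h-\partial_j h)^2$ coming from the Hessian of the log-interaction. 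This term pairs exactly with the gap observables $G(N(x_i-x_{i+1}),\ldots)$ and produces a crucial extra factor $N^{-1/2}$ that a bare entropy or total-variation bound cannot give. Your interpolation between $\omega$ and $\mu$ runs into the same $\ell^2$-versus-$\ell^1$ obstruction, and in any case the paper bypasses it by observing that the same Dirichlet-form comparison applies verbatim to the stationary density $f_0\equiv 1$.
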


{\it Convention.} We will use the letters $C$ and $c$ to denote
general positive constants whose precise values are irrelevant and they
may change from line  to line.

\bigskip


\bigskip

The {\it relaxation time to global equilibrium }
for the DBM is order one in our scaling.  The simplest way to see this is via the 
Bakry-Emery theorem \cite{BE}  which states that, roughly speaking, the relaxation time 
is the inverse of the lower bound to the Hessian of the Hamiltonian $\cH$. In our case
$\cH''\ge I$, and this implies that the relaxation time is order one. 
On the other hand, it was conjectured by Dyson \cite{Dy} that the  relaxation
time to local  equilibrium is of order $N^{-1}$.  Theorem  \ref{thm:main} 
asserts  that the relaxation time to local equilibrium is less than 
$N^{-\zeta}$. 
Although this is far from proving 
Dyson's conjecture,  it is  the first effective estimate that shows
that the local equilibrium is approached much faster than the global one.
Moreover, this result 
 suffices  to prove the bulk universality of Wigner matrices when combining with 
the reverse heat flow ideas introduced in \cite{EPRSY}. 
 We remark that the concept of local equilibrium is  used 
vaguely  here and in Dyson's paper.
In principle, there are many local equilibria depending 
on boundary conditions and the uniqueness 
is a tough question especially now that the interaction is long ranged and singular.

The proof of Theorem \ref{thm:main} is based on the introduction of
 {\it the pseudo equilibrium measure}
which we now explain. It is common  that the global and the local equilibrium are reached 
at different time scales for interacting particle systems, of which DBM is a special case. 
On the other hand,   the hydrodynamical approach 
\cite{ERSY} for the DBM yields very complicated estimates. The main reason for the complications 
is due to that the equilibrium measure of DBM  has a logarithmic two body interaction 
that is both long range and singular at short distances.  Hence the proof of the uniqueness 
 of "local equilibrium measures" is very complicated and 
 we were able to carry it out only for the Hermitian case
due  to that several  identities involving orthogonal
polynomials are valid only for the $\beta=2$ case.
 However, there are two key observations
 from this study: 

\begin{enumerate}
\item The local statistics does not depend on the long range part of the logarithmic interaction,
 in other words, 
we can cutoff the interactions between far away particles 
 without changing   the local statistics.

\item The relaxation time for the gradient flow associated with
the local equilibrium with a fixed boundary condition
is much smaller  than the 
global relaxation time of the DBM, which is of order one.  
\end{enumerate}

To finesse   the difficulty associated with the uniqueness of  local equilibria, 
 we define the {\it  pseudo equilibrium measure}, $\omega$, 
 by cutting off the long  range interactions of the equilibrium measure $\mu$
 and show that $\omega$,  $\mu$ and $f_t \mu$ all have the same local statistics. 
{\it The key idea that the last assertion holds is to 
estimate the relative entropy of the solution to  the DBM, $f_t \mu$, 
relative to the pseudo equilibrium measure $\omega$.}
 Since the pseudo equilibrium measure 
is not a global equilibrium measure, the entropy will not decay monotonically as in the case of 
the relative entropy w.r.t. the equilibrium measure.
More precisely, the time derivative of the relative entropy, under the flow of DBM, 
  w.r.t. the pseudo equilibrium measure  consists of two terms (see Theorem \ref{thm1}): 
 (i) a dissipation term of Dirichlet form of $f_t \mu$ w.r.t.  the pseudo equilibrium measure;
 (ii) an error term  due to the fact that  
the pseudo equilibrium measure  is not the true  equilibrium measure.

Since the  logarithmic interactions between far away particles 
 can be approximated by  a mean-field potential obtained from using the local density, 
the  error term in (ii) can be controlled if we know  the local 
density of particles w.r.t the distribution $f_t \mu$.  The precise
 conditions are  the Assumptions  (1)--(3).   In the special case of $\beta=1, 2 $,
when the DBM is generated by symmetric 
matrix ensembles,
these assumptions will be  verified 
in Lemma \ref{lm:bbound} by using  the local semicircle law; the case of $\beta=4$ is similar and 
some details are given in \cite{ESYY}. 
For other values of $\beta$ it is an open question to verify the corresponding assumptions.
   Given that the error term in (ii) can be bounded, 
we obtain an estimate on the Dirichlet form of $f_t \mu$ w.r.t.  the pseudo equilibrium measure.
The key question is whether this estimate alone is sufficient to pin down the local statistics. 
For this purpose, we note that the Dirichlet form w.r.t. $\omega$ generates a new gradient flow, 
 {\it the local relaxation flow}. 
The {\it global relaxation time} of the local relaxation flow, determined by the convexity of the 
pseudo equilibrium measure, is much shorter than that of  the standard DBM. This leads to 
strong estimates on the local relaxation flow and in particular, it identifies the local statistics. 
The details of the entropy estimates and the local relaxation flow will appear in Section \ref{sec3}.  
We now state the main application of Theorem \ref{thm:main}, 
the universality of symmetric Wigner ensembles.

\subsection{Universality of symmetric Wigner ensemble}\label{sec:univwigner}

{F}ix $N\in\N$ and we consider the  symmetric matrix ensemble
of $N\times N$ matrices $H=(h_{\ell k})$ with
normalization  given by
\be
  h_{\ell k} = N^{-1/2}    x_{\ell k},
\label{scaling}
\ee
where $x_{\ell k}$ for $\ell<k$ are independent,
identically distributed random variables
with the distribution $\nu$ that has
zero expectation and  variance $1$.  The diagonal
elements  $x_{\ell \ell}$ are also i.i.d.  with distribution $\wt \nu$
that has zero expectation and  variance two.
We will assume that
 $\nu$  satisfy the logarithmic Sobolev
inequality (LSI),  i.e. that there is a constant $\theta$ such that for
any nonnegative function
$u$ with $\int  u \; \rd\nu= 1$ we have
\be
  \int u\log u \; \rd\nu \le \theta \int |\nabla\sqrt{u}|^2\rd\nu\; .
\label{logsob}
\ee
We  remark that \eqref{logsob} implies \cite{Le}
that $\nu$ has  a Gaussian decay, i.e. 
\be
  D:= \int e^{\delta_0 x^2}\rd\nu(x) <\infty
\label{gauss}
\ee
for some $\delta_0>0$.  We require that $\wt \nu$ also
satisfies \eqref{logsob}. In this paper, all conditions
and statements involving $\nu$ apply to $\wt\nu$ as well, but
for the simplicity of the presentation,
we will neglect mentioning  $\wt\nu$ all the times.

\medskip

Suppose that the matrix element $h_{\ell k}$ of the symmetric ensembles   evolves according to
the  Ornstein-Uhlenbeck  process  on $\bR$, i.e.
the distribution $\nu_t = u_t   \gamma  $ at the time $t$  satisfies 
\be\label{ou}
\partial_{t} u_t =  A u_t, \quad
  A  = \frac{1}{2}\frac{\pt^2}{\pt x^2}
- \frac{ x}{2} \frac{\pt}{\pt x}
\ee
where $\gamma$ is the standard Gaussian distribution with variance one. 
For the diagonal  element, the Ornstein-Uhlenbeck process should be replaced by the one reversible 
w.r.t. the Gaussian measure with  variance two due to the convention 
that the variances of the diagonal elements are equal to two.  
The Ornstein-Uhlenbeck  process \eqref{ou} induces a
stochastic process on the eigenvalues; it is well-known that 
the process on the eigenvalue is given by the DBM \eqref{sde} with $\beta=1$. 
Notice that we used the Ornstein-Uhlenbeck  process so that the resulting 
DBM is reversible w.r.t. $\mu$.

Our goal is to apply Theorem \ref{thm:main}   with $\beta=1$ and for this purpose, 
we need to verify Assumptions 1-3.  Assumption 3  follows from 
the local semicircle law,  Theorem \ref{thm:semi}, stated later in Section~\ref{sec:best}. 
Assumptions 1 and 2  can be verified if
the measure $\nu$ satisfies the logarithmic Sobolev
inequality \eqref{logsob}.
The precise statement is the following lemma.

\begin{lemma}\label{lm:bbound}
Suppose the assumption \eqref{logsob}
on the distribution $\nu$ of the matrix elements holds.  Then there  are positive
numbers $\fa$, $\fb$ and $\fc$, depending on $\theta$ from \eqref{logsob},
such that 
\eqref{Qbound} and  \eqref{sign2} hold. 
\end{lemma}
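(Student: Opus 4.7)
The plan is to reduce both bounds \eqref{Qbound} and \eqref{sign2} to the local semicircle law of Theorem~\ref{thm:semi}, combined with a Herbst-type concentration argument coming from the logarithmic Sobolev inequality \eqref{logsob}. The first step is to verify that the time-$t$ distribution $\nu_t = u_t\gamma$ of a single entry satisfies \eqref{logsob} with a constant bounded uniformly in $t\ge 0$. This is standard: the OU semigroup can be represented as $X_t = e^{-t/2} X_0 + \sqrt{1-e^{-t}}\, Z$ with $Z$ an independent standard Gaussian, so $\nu_t$ is a convolution of a rescaled $\nu$ with a Gaussian, and LSI is preserved under such operations with a uniform constant depending only on $\theta$. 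In particular the subexponential decay \eqref{gauss} holds uniformly in $t$, and the matrix $H_t$ remains a Wigner matrix satisfying all hypotheses of Theorem~\ref{thm:semi} uniformly in $t$.

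Given this, the bound \eqref{sign2} follows from rigidity of the eigenvalues. The local semicircle law controls $\cN_I/N$ on scales $|I|\ge N^{-1+\sigma}$ with probability at least $1 - \exp(-N^c)$. Inverting the relation $\gamma_j = n_{sc}^{-1}(j/N)$ then yields $|x_j-\gamma_j|\le N^{-\fb}$ for all bulk indices $j$ with the required superpolynomial probability. The edge indices need a separate treatment based on the fact that $\|H_t\|$ is a $1$-Lipschitz function of the entries (up to the $N^{-1/2}$ scaling), so Herbst concentration applied to the operator norm locates the extreme eigenvalues near $\pm 2$ with probability at least $1 - \exp(-N^\fc)$, which combined with monotonicity of $x_j$ in $j$ covers the edge indices.

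For \eqref{Qbound} one needs the $L^2$-averaged statement $\E_t \bigl[\tfrac{1}{N}\sum_j|x_j-\gamma_j|\bigr]^2 \le N^{-2-2\fa}$ rather than a high-probability bound. I would combine the rigidity from the previous step with the exponential tail of its complementary event: on the good set $\tfrac{1}{N}\sum_j|x_j-\gamma_j|$ is at most $N^{-1/2-\fa'}$ for some $\fa'>0$, using the $N^{-1+\sigma}$ scale of rigidity in the bulk together with the fact that only $O(N^{1/2+\sigma})$ edge indices carry a slightly larger contribution; on the complementary set of probability $\exp(-N^\fc)$, the crude deterministic bound $|x_j|\le \|H_t\|\le C$ (which itself holds with subexponential probability via Herbst applied to $\|H_t\|$) gives a contribution that is negligible after multiplying by the factor $N$ in \eqref{Q}.

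The main obstacle I expect is obtaining rigidity strong enough at the edge of the spectrum to survive the average in \eqref{Q}, since near $|E|=2$ the density $\varrho_{sc}$ vanishes and the classical locations $\gamma_j$ become less rigid. This is what forces the use of Herbst concentration on the operator norm as a supplement to the bulk local semicircle law. Uniformity in $t$ is not a serious issue once one has observed that LSI, and hence all the ensuing concentration estimates and the inputs of Theorem~\ref{thm:semi}, are preserved along the OU flow with a constant that depends only on $\theta$.
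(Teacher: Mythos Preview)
Your reduction to a time-independent statement via preservation of LSI under the Ornstein--Uhlenbeck convolution is correct and is exactly what the paper does. The treatment of \eqref{sign2} is also essentially right in spirit, although the paper controls the extreme eigenvalues via an input from \cite{Vu} rather than Herbst on $\|H\|$ (Herbst alone gives concentration around $\E\|H\|$, but you still need to locate $\E\|H\|$ near $2$).

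The real gap is in your argument for \eqref{Qbound}. You write that on the good set one has $\tfrac{1}{N}\sum_j|x_j-\gamma_j|\le N^{-1/2-\fa'}$ ``using the $N^{-1+\sigma}$ scale of rigidity in the bulk''. This conflates two different things: the local semicircle law of Theorem~\ref{thm:semi} controls the Stieltjes transform (equivalently, the eigenvalue \emph{density}) on windows of size $\eta\sim N^{-1+\sigma}$, but it does \emph{not} give $|x_j-\gamma_j|\le N^{-1+\sigma}$ for individual eigenvalues. To pass from a density estimate to a location estimate one must integrate, and the accumulated error from Theorem~\ref{thm:semi} is far larger than $N^{-1+\sigma}$; with the probability bound $\exp(-c\delta\sqrt{N\eta})$ one is forced to take $\delta\gtrsim N^{-\sigma/2+\e}$, and after integration this yields at best $|x_j-\gamma_j|\lesssim N^{-1/3+\e}$ in the bulk. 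That is enough for \eqref{sign2}, but the average is then $\sim N^{-1/3}$, so $N\,\E\big[\tfrac{1}{N}\sum_j|x_j-\gamma_j|\big]^2\sim N^{1/3}$ and \eqref{Qbound} fails. The strong rigidity $|x_j-\gamma_j|\le N^{-1+\sigma}$ you seem to be invoking is precisely what the paper says was proved only \emph{later} in \cite{EYY3} (see the remark after Proposition~\ref{prop:xg}).

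What the paper actually does to beat $N^{-1/2}$ in the average is a two-step splitting $x_j-\gamma_j=(x_j-\al_j)+(\al_j-\gamma_j)$ with $\al_j=\E x_j$. The fluctuation $x_j-\al_j$ is handled by Bobkov--G\"otze concentration from \eqref{logsob}, but the naive bound $|x_j-\al_j|\le N^{-1/2+\e}$ is again insufficient; an averaging trick over $K$ consecutive indices together with a gap tail estimate (Lemma~\ref{lm:gaptail}) is used to push this down to $N^{-5/9+\e}$ (Proposition~\ref{prop:fluc}). The deviation $\al_j-\gamma_j$ is handled not by pointwise rigidity but by the \emph{integrated} estimate $\int|n(E)-n_{sc}(E)|\,\rd E\le CN^{-6/7}$ of \eqref{NNint}, which via the identity $\tfrac{1}{N}\sum_j|\al_j-\gamma_j|=\int|\wt n(E)-n_{sc}(E)|\,\rd E$ gives the required smallness of the average (Propositions~\ref{prop:xg} and~\ref{prop:algam}). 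Both ingredients---the averaging for fluctuations and the integrated counting-function bound for deviations---are missing from your proposal, and without them \eqref{Qbound} does not follow from the tools available in this paper.
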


From this Lemma, for symmetric  Wigner matrices whose matrix element distributions 
satisfy the LSI,  the assumptions of Theorem \ref{thm:main} are satisfied.
Hence  the correlation functions w.r.t. $f_t\mu$  and the GOE equilibrium 
measure $\mu^{(\beta=1)}_N$ 
are identical  in the large $N$ limit for some $t = N^{-\zeta}$  in the sense that
 \eqref{abstrthm} holds.
 Together with the reverse heat flow argument,
 we have   the following universality 
theorem for local statistics of Wigner ensembles
 whose matrix element distribution is smooth and satisfies  the logarithmic Sobolev inequality.
Denote by $p_N^{(k)}$ 
the correlation functions of the eigenvalues of the symmetric Wigner ensemble. 
Let   $p_{N, GOE} ^{(k)}$ be the correlation functions of the eigenvalues of GOE, i.e., 
the correlation functions of the equilibrium measure $\mu^{(\beta=1)}_N$. It is well-known 
that  $p_{N, GOE} ^{(k)}$ can be  computed explicitly (see, e.g. Section 7 of \cite{M}).

\begin{theorem}\label{main2}
 Suppose the distribution $\nu$ for the matrix elements satisfies the logarithmic
Sobolev inequality
\eqref{logsob}. Assume that  $\nu$ has a positive density $\nu(x)= e^{-U(x)}$ such that 
for any $j$ 
there are constants $C_1, C_2$, depending on $j$, such that
\be
  |U^{( j)}(x)| \le C_1 (1+x^2)^{C_2   }. 
\label{cond1}
\ee
Then for any $|E| < 2 $ and for any $k\ge 1$, we have
\begin{equation}\label{episode}
\begin{split}
\lim_{b \to 0} \lim_{N \to \infty}  \frac{1}{2 b} \int_{E-
b}^{E+b}
&\int_{\R^k} O(\alpha_1,\ldots,\alpha_k)
\\ &\times  \frac{1}{[\varrho_{sc}(E')]^k} \Big ( p_N^{(k)}  - p_{N, GOE} ^{(k)} 
\Big )
 \Big (E'+\frac{\alpha_1}{N\varrho_{sc}(E')},
\ldots, E'+\frac{\alpha_k}{N \varrho_{sc}(E')}\Big)\ \rd\alpha_1
\ldots \rd\alpha_k \rd E' = 0,
\end{split}
\end{equation}
where  $O: \R^k \to \R$ is  an arbitrary continuous, compactly supported 
function.

\end{theorem}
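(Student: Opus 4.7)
The plan is to combine Theorem~\ref{thm:main} with the reverse heat flow argument (Step~3 of the EPRSY scheme outlined in the introduction). Let $\wh H$ be an auxiliary Wigner matrix with a modified entry density $\wh\nu_t$ to be specified, and let its entries evolve under the Ornstein--Uhlenbeck process \eqref{ou} for time $t = N^{-\zeta}$; this induces the $\beta = 1$ Dyson Brownian motion \eqref{sde} on the eigenvalues, whose density is $f_t\mu$ with $f_t$ solving \eqref{dy}. First I would prove \eqref{abstrthm} for the correlation functions $p_{t,N}^{(k)}$ of $f_t\mu$ via Theorem~\ref{thm:main}. Then I would choose $\wh\nu_t$ by a truncated backward OU expansion so that the original Wigner correlation functions $p_N^{(k)}$ and the OU-evolved correlation functions $p_{t,N}^{(k)}$ agree up to an error $O(N^{-D})$ for any $D>0$; this transfers \eqref{abstrthm} from $p_{t,N}^{(k)}$ to $p_N^{(k)}$.

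For the first step I need to verify Assumptions~1--3 of Theorem~\ref{thm:main} for the initial density $f_0$ of the eigenvalues of $\wh H$. Assumption~3 (tail bound on local eigenvalue counts) is the local semicircle law of Theorem~\ref{thm:semi}, which holds for any Wigner ensemble with subexponential decay. Assumptions~1 and~2 are exactly the content of Lemma~\ref{lm:bbound}, conditional on the logarithmic Sobolev inequality \eqref{logsob} for the entry density. The entropy bound $S_\mu(f_0) \le CN^m$ follows from LSI combined with a routine Gaussian comparison against the GOE weight defining $\mu$. Provided $\wh\nu_t$ still satisfies an LSI (uniformly in $N$) and the subexponential decay \eqref{gauss}, Theorem~\ref{thm:main} produces a $\zeta>0$ and delivers \eqref{abstrthm} with $p_{\mu,N}^{(k)} = p_{N,GOE}^{(k)}$.

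For the reverse heat flow, write $\nu = e^{-U}$ and $u_0 := \nu/\varrho_G$ where $\varrho_G$ is the standard Gaussian. Define
\be
\wh u_t := \sum_{j=0}^{K}\frac{(-t)^j}{j!} A^j u_0,\qquad \wh\nu_t := \wh u_t\, \varrho_G,
\ee
with $A$ as in \eqref{ou}, i.e.\ a truncated formal inverse of the OU semigroup at order $K$. The polynomial derivative bounds \eqref{cond1} on $U$ imply that $A^{K+1}u_0$ has at most polynomial growth, hence $\|e^{tA}\wh u_t - u_0\|_{L^1(\varrho_G)} = O(t^{K+1}) = O(N^{-\zeta(K+1)})$, which can be made smaller than any inverse power of $N$ by choosing $K$ large. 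For small $t$ the function $\wh u_t$ remains positive and normalized, so $\wh\nu_t$ is a genuine probability density; the LSI for $\wh\nu_t$ follows from that of $\nu$ by perturbation. The same construction applied to $\wt\nu$ completes the definition of $\wh H$. Taking a product over the $N(N+1)/2$ independent entries, the matrix density after OU evolution is within $O(N^{-D})$ of $\nu^{\otimes N(N+1)/2}$ in total variation, so $|p_{t,N}^{(k)} - p_N^{(k)}|$ tested against any bounded continuous observable is $O(N^{-D})$.

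Combining the two steps yields \eqref{abstrthm} with $p_{t,N}^{(k)}$ replaced by $p_N^{(k)}$; letting $b\to 0$ after $N\to\infty$ in this averaged identity is a routine continuity argument using that $\varrho_{sc}$ is smooth on $(-2,2)$ and that $O$ is compactly supported continuous, and produces \eqref{episode}. The main obstacle I expect is the reverse heat flow step: one must simultaneously verify positivity and normalization of $\wh u_t$ for our chosen $t$, and check that $\wh\nu_t$ still satisfies \eqref{logsob} with a constant not deteriorating too fast in $N$, so that Lemma~\ref{lm:bbound} applies to the shifted initial data. The polynomial derivative assumption \eqref{cond1} is precisely calibrated to control the Taylor tail $A^{K+1}u_0$ uniformly and to propagate smoothness through this construction.
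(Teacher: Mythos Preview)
Your proposal is essentially the paper's own proof: apply Theorem~\ref{thm:main} to the OU-evolved ensemble with initial data chosen by reverse heat flow, verify Assumptions~1--3 via Lemma~\ref{lm:bbound} and the local semicircle law, and transfer back to $p_N^{(k)}$ by a total-variation estimate on the product measure. The paper packages the reverse heat flow step as Proposition~\ref{meascomp1}.

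One technical point you should sharpen: your truncated expansion $\wh u_t = \sum_{j=0}^K \frac{(-t)^j}{j!}A^j u_0$ will \emph{not} be positive for all $x$ once $t>0$, because $A^j u_0 / u_0$ grows polynomially in $x$ under \eqref{cond1} and so the correction eventually dominates. The paper fixes this by multiplying the correction by a smooth cutoff $\theta(x) = \theta_0(t^\alpha x)$ before adding it to $u_0$; this keeps $\tfrac{2}{3}u_0 \le \wh u_t \le \tfrac{3}{2}u_0$ pointwise, from which positivity, the $L^1$ error bound, and the LSI with constant changed by at most a factor of~2 all follow cleanly. A further small adjustment (rescaling and recentering) is needed to restore mean zero and variance one exactly. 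You correctly flagged positivity and the LSI stability as the obstacles; the cutoff is the device that resolves both simultaneously.
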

\bigskip

\bigskip
Theorem \ref{main2} is a simple corollary of Theorem \ref{thm:main}
and the method of the reverse heat flow
\cite{EPRSY}.  It will be proved briefly in Section \ref{sec:corfn}.  
Though we stated the universality in terms of correlation functions,
it also holds  for the eigenvalue gap distribution
and we omit the obvious statement (the analogous statement
for the Hermitian case was formulated in Theorem 1.2 of \cite{EPRSY}).

In the following corollary,  by using Theorem 15 of \cite{TV},  we remove all assumptions  from
Theorem  \ref{main2} except for a decay condition and a technical condition
that $\nu$ is supported in at least three points.  This latter technical condition 
was removed in our later paper \cite{EYY2}, where
we  generalized our approach to a broader class of random matrix ensembles.

\begin{corollary}\label{main3}
Suppose the distribution $\nu$ of the matrix elements has mean zero, variance 
one
and  a tail with a subexponential decay,
i.e. it satisfies that
\be\label{subexp}
   \int_\R {\bf 1}(|x|\ge y)\rd \nu(x) \le C\exp\big[ -y^{\frak{q}}\big], 
\qquad\forall y\ge 0,
\ee
for some constants $C, \frak{q} >0$. Assume that $\nu$ is supported in at least three 
points. 
Then the conclusion \eqref{episode} of Theorem \ref{main2} holds.
\end{corollary}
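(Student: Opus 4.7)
The strategy is a three-step combination: moment matching, Theorem \ref{main2}, and the Tao--Vu four moment theorem cited as Theorem 15 of \cite{TV}. The assumption that $\nu$ is supported in at least three points is precisely what is needed to perform the moment-matching step nontrivially.

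First I would construct an auxiliary distribution $\widetilde{\nu}$ which has a smooth, positive density $e^{-U}$ satisfying both the logarithmic Sobolev inequality \eqref{logsob} and the polynomial-growth derivative bound \eqref{cond1}, and whose first four moments agree with those of $\nu$. Since $\nu$ has mean zero and variance one and is supported in at least three points, the moment vector $(m_3(\nu), m_4(\nu))$ lies in the interior of the region of admissible third and fourth moments for centered unit-variance probability measures (this is the classical Stieltjes moment-cone argument; a two-point support would force a rigid algebraic relation and obstruct the matching). One can then take, for instance, a small Gaussian-like perturbation of a fixed reference smooth density with tunable parameters and solve for those parameters by an elementary inverse-function argument, yielding the desired $\widetilde{\nu}$.

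Next, let $\widetilde{H}$ be the symmetric Wigner matrix with off-diagonal entries i.i.d. with law $\widetilde{\nu}$ (and diagonal entries adjusted in the analogous way). Since $\widetilde{\nu}$ satisfies the hypotheses of Theorem \ref{main2}, the rescaled $k$-point correlation functions $\widetilde{p}_N^{(k)}$ of $\widetilde{H}$ satisfy \eqref{episode} relative to $p_{N,GOE}^{(k)}$. Now I would invoke Theorem 15 of \cite{TV}: the four-moment comparison theorem asserts that if two symmetric Wigner ensembles have matching first four moments and both satisfy appropriate tail bounds (guaranteed by the subexponential decay \eqref{subexp} for $H$ and by \eqref{gauss}, hence by the LSI, for $\widetilde{H}$), then their eigenvalue correlation functions agree after averaging against smooth test functions of the form appearing in \eqref{episode}. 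Note that the four moment theorem requires the local semicircle law as an input, which is available here via Theorem \ref{thm:semi}. Combining these two facts gives \eqref{episode} for the original ensemble $H$.

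The main obstacle is the moment-matching step: one must produce $\widetilde{\nu}$ with \emph{all} of the properties \eqref{logsob}, \eqref{cond1}, smooth positive density, four matched moments, and mean zero with unit variance simultaneously. The three-point support assumption is exactly the nondegeneracy that makes this possible; without it, the moments $(m_3, m_4)$ of $\nu$ would be extremal and no smooth $\widetilde{\nu}$ could share them. Once a concrete parametric family of smooth log-concave densities (e.g.\ a convex combination of shifted Gaussians with adjustable weights) is chosen, the matching reduces to an elementary implicit function argument and the compatibility with \eqref{logsob} is automatic from log-concavity and boundedness of $U''$. The remaining steps are invocations of already-established results.
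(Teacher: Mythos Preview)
Your proposal is correct and follows essentially the same route as the paper: construct a smooth distribution $\hat\nu$ satisfying \eqref{logsob} and \eqref{cond1} whose first four moments match those of $\nu$ (the paper encodes the three-point condition as the strict inequality $m_4 > m_3^2 + 1$ and defers the detailed construction to Lemma~C.1 of \cite{EYY}), apply Theorem~\ref{main2} to $\hat\nu$, and then invoke the Tao--Vu four-moment theorem. One small caveat on your parenthetical example: a convex combination of shifted Gaussians is typically \emph{not} log-concave, so LSI is not automatic for that family; the paper instead argues by smooth perturbation within the class of LSI measures.
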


\begin{proof}
Let $m_j$ denote the moments of $\nu$
$$
   m_j  : = \int_\R x^j \rd\nu,
$$
where $m_1=0$ and $m_2=1$. It is easy to check  that $m_4 \ge  m_3^2+ 1$
with the equality holds only for Bernoulli type distribution supported in two 
points.  Due to the condition that $\nu$ is supported on at least three
points, we thus have 
 $m_4 >  m_3^2+ 1$.
For  fixed numbers $m_3, m_4$ satisfying $m_4 > m_3^2+ 1$, there exists 
a  probability measure $\wh\nu$ on $\R$ with density
$e^{-U(x)} $  such that (i) 
the first four moments of $\wh\nu$ match to those of $\nu$, i.e.
\[
\int_\R  x \; \rd \wh\nu  =0, \quad
\int_\R  x^2 \; \rd \wh\nu  =1, \quad
\int_\R  x^j \; \rd \wh\nu  = m_j , \quad j= 3, 4;
\]
(ii) the derivative bounds \eqref{cond1} hold, and (iii) the logarithmic Sobolev inequality 
\eqref{logsob} holds. 
It is easy to argue that such a measure $\wh \nu$ exists.  Consider the space of all  measures satisfying 
\eqref{cond1} with a finite LSI constant.
Since the condition \eqref{cond1} and the  finite LSI constant condition  are preserved under
 small smooth perturbations which are infinite dimensional, there are enough freedom to choose
 perturbations so as to match the first four moments as long as $m_4 >  m_3^2+ 1$.
An elementary  detailed proof of this  fact is given in 
Lemma C.1.  of \cite{EYY}. 
Therefore,  $\wh\nu$ satisfies the assumption of Theorem \ref{main2}  and thus 
\eqref{episode}
holds for the measure $\wh\nu$. Recall that Theorem 15 in \cite{TV}
asserts that the local eigenvalue statistics
for matrices whose matrix element distributions match
up to the  first  four moments
are the same in the limit $N \to \infty$ (strictly speaking, this theorem was 
proved 
only for hermitian matrices, but the parallel version for symmetric ensembles 
holds as well,
see the remark at the end of Section 1.6 in  \cite{TV}). This proves the corollary.
\end{proof}

\section{Pseudo equilibrium 
measure and Entropy Dissipation Estimates} \label{sec3}

The key idea to prove Theorem \ref{thm:main} is an estimate on the time to local equilibrium for 
the DBM. However, to estimate this time to local equilibrium, we need to introduce 
a different flow, the  local relaxation flow, defined as  the gradient flow of the
 {\it pseudo equilibrium measure}. The  pseudo equilibrium measure  is a measure 
which  has the local statistics of 
 the $\beta$ ensemble but has a strong convexity property.  Fix a positive number
 $\eta$ with $N^{-1/6} \ll \eta \ll  1$,
and for the rest of this paper let $\e > 0$ be a
small positive number which we will not specify.  Let $ \gamma_j^\pm : =
\gamma_j \pm  \eta N^{-\e}$
and   define the mean field potential of  eigenvalues
far away from the $j$-th one as
\be\label{Wj}
W_j(x): =  -\frac \beta N  \sum_{k: |k-j| > N \eta} \log (|x - \gamma_k|+\eta)
\quad \text { if }\quad   x\in I_j :=   (\gamma_j^-, \gamma_j^+) 
\ee
where the summation is over all $k\in \{1, 2, \ldots, N\}$ such that
$|k-j|>N\eta$.
For $ x\ge \gamma_j^+$, we extend $W_j$ by
\be
W_j(x) =  W_j (\gamma_j^+) +  W'_j (\gamma_j^+) \Big (x-\gamma_j^+ \Big )+
\frac {W^{\prime \prime}(\gamma_j^+)}  2 \Big (x-\gamma_j^+ \Big )^2
\ee
and similarly for $x \le \gamma_j^-$. In other words,
$W_j$ is just the simplest convex extension
of the function defined by \eqref{Wj} on $I_j$.
This modification will avoid  the singularities at $x= \gamma_k$.  Notice that 
this
is purely a technical device since
we will show in \eqref{xgamma} of Proposition \ref{prop:xg} that the probability of the regime 
$I_j^c$
is negligible in the sense that
\be\label{4.3}
\sum_{j=1}^N\int {\bf 1}(x_j \not \in I_j) f_t\rd \mu\le Ce^{-cN^{\e}}.
\ee

The pseudo equilibrium  measure  $\om_N=\om$ is defined by
\[
\omega= \frac{1}{\wt Z}
\exp \left [- N   \sum_{j=1}^N \left \{  \beta  \frac {x_j^2} 4  +
 W_j (x_j) \right \}  +  \beta \sum_{i<j }
 \log |x_i- x_j| - \frac \beta 2
\sum_i \sum_{j:  |j-i| > N \eta}   \log( |x_i- x_j|+\eta)  \right ]  =: \frac{
e^{-\wt \cH}}{\wt Z}
\]
We can write $\omega = \psi \mu$ with 
\[
\psi= \frac{Z_\beta}{\wt Z}\exp \left [  -
\frac \beta 2  \sum_{i=1}^N \sum_{j:  |j-i| > N \eta}   \log( |x_i- x_j|+\eta)
- N \sum_{i=1}^N  W_i (x_i)  \right ]  .
\]
Recall that  the  relative entropy with respect to a measure 
$\lambda$ is defined by 
\[
S_\lambda(f) = \int f\log f \rd \lambda, \qquad
S_\lambda (f|\psi)=  \int f \log (f/\psi) \rd \lambda
\]
and the Dirichlet form
\be\label{1.11}
D_\lambda (h ) = \frac 1  {  2 N }  \int (\nabla h)^2  \, \rd\lambda
= \frac 1  {  2 N } \sum_{j=1}^N \int (\partial_j
h)^2  \, \rd\lambda \, .
\ee

The local relaxation flow is defined to be  the reversible dynamics w.r.t.
$\omega$ characterized by the generator  $\wt L$ defined by
\be\label{Lt}
\int f \wt L  g \rd \omega = - \frac 1 {2N} \sum_{j=1}^N
\int \partial_j f \partial_j g \rd \omega
\ee
Explicitly,  $\wt L$ is given by  
 \be\label{tl}
\wt L =  L - \sum_{j=1}^N b_j \partial_j, \quad
b_j :=   \frac \beta {N} \sum_{k: |k-j|> N \eta}
 \frac {\mbox{sgn}(x_j-x_k)}{|x_j-x_k|+\eta}
+ W'_j(x_j),
\ee
where
\be
W'_j(x) =  -\frac \beta {N} \sum_{k: |k-j|> N \eta}
\frac {\mbox{sgn} (x-\gamma_k)}
{|x-\gamma_k|+\eta}
\ee
for $ x \in I_j$. Note that for any $k$ with  $|k-j|>N\eta$, we have
$\gamma_k \not\in 2I_j$, where $2I_j$ is the doubling of the interval $I_j$.
Moreover, for $k= j\pm N\eta$ we have $|\gamma_k-\gamma_j|\le
C\eta^{2/3}$ for some constant $C$, and so  $|x-\gamma_k|\le C\eta^{2/3}$
for $x\in I_j$.  Thus we obtain that
$$
 \inf_{x\in I_j} W''_j(x)
\ge c\inf_{|x|\le 2+\eta} \int_{|x-\gamma| \ge C\eta^{2/3}} 
\frac{\varrho_{sc}(\gamma)
\rd \gamma}{|x-\gamma|^2} \ge c\eta^{-1/3}
$$
 with some positive  constant $c$, using $\beta\ge 1$.
Since $W_j$ was defined by a convex extension outside
$I_j$, the same bound holds for any $x$:
\be\label{5.9}
\inf_j \inf_{ x \in \bR}     W_j^{\prime \prime}(x)
 \ge c \eta^{-1/3} ,
\ee
i.e.,  the  mean field potential is uniformly convex with the convexity bound given in \eqref{5.9}.

The  potential $W$ is chosen to satisfy
the two convexity properties:  (\ref{5.9})
and \eqref{convex} and there are  many other possible choices for $W$. For 
example,
without changing the form of $W$ given in \eqref{Wj},
a more natural choice for $\gamma_j$ would be
\[
\gamma_j = \int  x_j  f_0   \rd \mu .
\]
This may somewhat improve the constant in the estimate \eqref{Qbound}, but the analysis
is more complicated and
we will not pursue this choice in this paper.

\subsection{Local Ergodicity  of Dyson Brownian Motion}

The following theorem is  our main result on the local ergodicity of DBM.

\begin{theorem} \label{thmM}
Suppose that $S_\mu(f_0|\psi) \le C N^m$ for   some $m$
fixed. Let 
$ \tau := \eta^{1/3} N^\delta $ for some $\delta>0$.
Define
\be\label{bbound}
 \Lambda:= \sup_{t\le \tau} \sum_{j=1}^N \int b_j^2 f_t \rd \mu.
\ee
Fix $n\ge 1$,
let  $G:\bR^n\to\bR$ be a bounded smooth function with compact
support and define
\be
 \cG_{i,n}(\bx) :=
G\Big( N(x_i-x_{i+1}), N(x_{i+1}-x_{i+2}), \ldots, N(x_{i+n-1}-x_{i+n})\Big).
\label{cG}
\ee
Then  for any $J\subset \{ 1, 2, \ldots , N-n\}$  we have
\be\label{GG}
\Big| \int \frac 1 N \sum_{i\in J} \cG_{i,n}(\bx) f_\tau  \rd \mu -
\int \frac 1 N \sum_{i\in J} \cG_{i,n}(\bx) \rd \mu \Big|
\le C N^{ \delta/2} \Lambda^{1/2}  + Ce^{-cN^\delta}.
\ee
\end{theorem}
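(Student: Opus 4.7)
The strategy is to compare the DBM density $f_t\,\rd\mu$ with the pseudo equilibrium $\omega=\psi\,\rd\mu$, exploiting the strong convexity of $\wt\cH$ to force relaxation toward $\omega$ on the short time scale $\eta^{1/3}$, rather than the much longer global DBM relaxation time of order $1$. I set $g_t := f_t/\psi$ so that $f_t\rd\mu=g_t\rd\omega$ and $S_\mu(f_t|\psi)=\int g_t\log g_t\,\rd\omega =: S_\omega(g_t)$. Starting from $\pt_t f_t=L f_t$ and using the self-adjointness of $L$ on $L^2(\mu)$, the decomposition $L=\wt L+\sum_j b_j\pt_j$ from \eqref{tl}, and the relation $\pt_j\log\psi=-2Nb_j$ obtained by comparing the reversible drifts of $L$ and $\wt L$, a short manipulation yields the entropy dissipation identity
\[
\pt_t S_\omega(g_t) \;=\; -4\,D_\omega(\sqrt{g_t}) \;+\; \sum_{j=1}^N\int b_j (\pt_j g_t)\,\rd\omega.
\]
Writing $\pt_j g_t=2\sqrt{g_t}\,\pt_j\sqrt{g_t}$ and applying Cauchy--Schwarz in the $(j,\bx)$ variables, followed by a weighted AM--GM inequality, absorbs half of the dissipation into the cross term, giving $\pt_t S_\omega(g_t)\le -2D_\omega(\sqrt{g_t})+CN\Lambda$.

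The next key ingredient is a logarithmic Sobolev inequality $S_\omega(g)\le C\eta^{1/3}D_\omega(\sqrt{g})$, obtained via the Bakry--\'Emery criterion after establishing $\mathrm{Hess}(\wt\cH)\ge cN\eta^{-1/3}I$. The directional picture: in any direction the quadratic confinement $\sum_j N(\beta x_j^2/4+W_j(x_j))$ contributes at least $cN\eta^{-1/3}$ by \eqref{5.9}; the short-range repulsion $-\beta\sum_{i<j}\log|x_i-x_j|$ is convex; and for each pair with $|j-i|>N\eta$ the combined contribution of $-\beta\log|x_i-x_j|$ and $\frac{\beta}{2}\log(|x_i-x_j|+\eta)$ is positive semidefinite, with zero eigenvalue only in the constant direction, where the $W_j$ bound then takes over. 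Substituting the LSI into the dissipation estimate yields $\pt_t S_\omega(g_t)\le -c\eta^{-1/3}S_\omega(g_t)+CN\Lambda$, and Gr\"onwall over $[0,\tau]$ with $\tau=\eta^{1/3}N^\delta$ produces
\[
S_\omega(g_\tau)\;\le\; CN^m e^{-cN^\delta} \;+\; C\eta^{1/3}N\Lambda.
\]

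To derive \eqref{GG}, set $F:=\frac{1}{N}\sum_{i\in J}\cG_{i,n}$ and use the identity
\[
\int F(f_\tau-1)\,\rd\mu \;=\; \int F(g_\tau-1)\,\rd\omega \;+\; \Big[\int F\,\rd\omega-\int F\,\rd\mu\Big].
\]
I would estimate the first term by the entropic variational formula
\[
\int F(g_\tau-1)\,\rd\omega \;\le\; \alpha^{-1}\log\int e^{\alpha(F-\E_\omega F)}\,\rd\omega + \alpha^{-1}S_\omega(g_\tau),
\]
bounding the log-Laplace transform by Herbst's sub-Gaussian concentration under $\omega$: since the LSI constant is of order $\eta^{1/3}/N$ and the Lipschitz bound $\|F\|_{\mathrm{Lip}}^2=\sum_j(\pt_j F)^2\le CN$ holds (each $|\pt_j F|\le(n+1)\|\nb G\|_\infty$ and at most $O(N)$ indices contribute), one obtains $\log\int e^{\alpha(F-\E_\omega F)}\rd\omega\le C\alpha^2\eta^{1/3}$. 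Optimising $\alpha$ against the entropy bound above and inserting $\tau=\eta^{1/3}N^\delta$ produces the main term $CN^{\delta/2}\Lambda^{1/2}+Ce^{-cN^\delta}$. The second term $|\int F\rd\omega-\int F\rd\mu|$ is exponentially small because $F$ depends only on local gaps of at most $n+1$ consecutive particles, while $\omega$ and $\mu$ agree up to the long-range mean-field replacement whose effect is controlled by \eqref{4.3}.

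The principal technical obstacle is establishing the global Hessian lower bound $\mathrm{Hess}(\wt\cH)\ge cN\eta^{-1/3}$: the regularised far-field interaction introduces a concavity that must be dominated by the convex repulsion and the strong confinement $W_j$ uniformly over the configuration $\bx$, including the convex extension of $W_j$ outside $I_j$. A secondary subtlety is that the naive Pinsker inequality $|\int F(g_\tau-1)\rd\omega|\le\|F\|_\infty\sqrt{2S_\omega(g_\tau)}$ yields only $C\sqrt{N\Lambda}$, weaker than the claimed $CN^{\delta/2}\Lambda^{1/2}$; the sharp bound genuinely requires the sub-Gaussian concentration of local averages under $\omega$, which is strictly stronger than Pinsker.
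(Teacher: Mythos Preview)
Your entropy-dissipation setup and the derivation of $S_\omega(g_\tau)\le C\eta^{1/3}N\Lambda + CN^m e^{-cN^\delta}$ are essentially correct and match the paper's Theorem~3.4. The gap lies in how you pass from the entropy bound to \eqref{GG}.

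Your Herbst argument yields $\log\E_\omega e^{\alpha(F-\E_\omega F)}\le C\alpha^2\eta^{1/3}$, and combining with the entropy inequality and optimizing in $\alpha$ gives at best
\[
\Big|\int F(g_\tau-1)\,\rd\omega\Big|\;\le\; C\sqrt{\eta^{1/3}\,S_\omega(g_\tau)}\;\le\; C\eta^{1/3}\sqrt{N\Lambda},
\]
\emph{not} $CN^{\delta/2}\Lambda^{1/2}$ as you claim. Since $\eta=N^{-\e_3}$ with $\e_3$ small, this bound carries an extra factor of order $N^{1/2}$ compared to the target, which is fatal in the application (there $\Lambda^{1/2}\sim N^{-\fa}$ with $\fa<1/18$). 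Indeed, your approach is of the same strength as Pinsker; both use only the LSI/entropy, and the paper remarks explicitly (after Lemma~3.3) that entropy alone gives $C\sqrt{S_\omega(q)\tau}$, missing the crucial factor $N^{-1/2}$.

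What the paper does instead is run the \emph{local relaxation flow} $\pt_s q_s=\wt L q_s$ starting from $q_0=g_\tau$ and exploit not the LSI but the additional dissipation
\[
\frac{1}{2N^2}\int_0^\infty\!\rd s\int\sum_{|i-j|\le N\eta}\frac{(\pt_i\sqrt{q_s}-\pt_j\sqrt{q_s})^2}{(x_i-x_j)^2}\,\rd\omega\;\le\; D_\omega(\sqrt{q_0}),
\]
which comes from the off-diagonal part of $\mathrm{Hess}(\wt\cH)$. Since $F$ depends only on rescaled \emph{gaps}, differentiating $\int F q_s\,\rd\omega$ in $s$ produces terms $(\pt_{i+k-1}-\pt_{i+k})q_s$, and the compact support of $G$ supplies a factor $(x_{i+k-1}-x_{i+k})^2\le CN^{-2}$ that exactly matches the weight above. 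This pairing, via Cauchy--Schwarz in time, yields $\sqrt{D_\omega(\sqrt{q_0})\tau/N}$, with the $N^{-1}$ coming from that compact-support factor. Using $D_\omega(\sqrt{g_\tau})\le CN\Lambda$ (your entropy computation integrated once more from $\tau/2$ to $\tau$) then gives the stated bound. Your Herbst route sees the LSI constant $\eta^{1/3}/N$ and the Lipschitz norm $\|\nabla F\|^2\sim N$, and these cancel: it never accesses the gap structure that provides the extra $N^{-1}$.

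A secondary issue: your claim that $\int F\,\rd\omega-\int F\,\rd\mu$ is exponentially small because of \eqref{4.3} is not justified. The density $\psi$ is nonconstant even on the good set $\{x_j\in I_j\ \forall j\}$, so the two measures differ nontrivially there. The paper handles this by applying the same Lemma~3.3 argument to the equilibrium initial data $f_0\equiv 1$, obtaining the same $CN^{\delta/2}\Lambda^{1/2}$ bound for $|\int F\,\rd\omega-\int F\,\rd\mu|$, not an exponential one.
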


\bigskip We emphasize that Theorem \ref{thmM} applies to {\it all $\beta\ge1$ 
ensembles}
and the only assumption
concerning  the distribution $f_t$
is in  (\ref{bbound}). Notice that the first error term becomes large for $\delta$ large,
i.e., if $\tau$ is large.
The first ingredient to prove Theorem \ref{thmM} is the analysis of
the local relaxation flow. The following theorem shows that
 the local relaxation flow satisfies an entropy
dissipation estimate and  its equilibrium measure 
 satisfies the logarithmic
Sobolev inequality.

\begin{theorem} \label{thm2} {\bf (Dirichlet Form Dissipation Estimate)}
Suppose \eqref{5.9} holds.
Consider the equation
\be
\partial_t q_t=\wt L q_t
\label{dytilde}
\ee
with reversible measure $\omega$.
Denote by $R:= \eta^{1/6}$.
Then we have
\be\label{0.1}
\partial_t D_{\omega}( \sqrt {q_t}) \le - C R^{-2} D_{\omega}( \sqrt {q_t}) 
-
\frac{1}{2N^2}  \int    \sum_{ |i-j| \le N \eta}
\frac 1 {(x_i - x_j)^2}     ( \pt_t \sqrt{ q_t} - \pt_j\sqrt {q_t} )^2 \rd 
\omega ,
\ee
\be\label{0.2}
\frac{1}{2N^2} \int_0^\infty  \rd s  \int    \sum_{ |i-j| \le N \eta}
\frac 1 {(x_i - x_j)^2}     (\pt_i\sqrt {q_s} - \pt_j\sqrt {q_s} )^2 \rd \omega
\le D_{\omega}( \sqrt {q_0})
\ee
and the logarithmic Sobolev inequality
\be\label{lsi}
 S_{\omega}(q)\le C R^{2}  D_{\omega}( \sqrt {q})
\ee
with a universal constant $C$.
Thus the relaxation time to equilibrium is of order $R^2= \eta^{1/3}$ and we have
\be\label{Sdecay}
  S_{\omega}(q_t)\le e^{-Ct R^{-2}} S_\omega(q_0).
\ee
\end{theorem}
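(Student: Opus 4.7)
The plan is to execute a Bakry--Emery-style argument on the measure $\omega = e^{-\wt\cH}/\wt Z$, whose natural symmetric diffusion is exactly $\wt L = \frac{1}{2N}(\Delta - \nabla \wt\cH\cdot\nabla)$. Both \eqref{0.1}--\eqref{0.2} and the log-Sobolev inequality \eqref{lsi} will come from a single uniform lower bound on the Hessian of $\wt\cH$, with the convexity constant provided by the mean-field potentials $W_j$ and the extra short-range dissipation coming from the attractive logarithmic interactions between nearby indices.

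The first step is to compute $\mathrm{Hess}\,\wt\cH$. The quadratic term $N\beta\sum_j x_j^2/4$ is diagonal with entries $N\beta/2$, while $N\sum_j W_j(x_j)$ contributes $NW_j''(x_j)\ge cN\eta^{-1/3} = cNR^{-2}$ on the diagonal by \eqref{5.9}, uniformly in $x$ thanks to the convex extension of $W_j$ outside $I_j$. The two logarithmic sums in $\wt\cH$ combine to
\begin{equation*}
-\beta\sum_{\substack{i<j\\ |i-j|\le N\eta}} \log|x_i-x_j| \;+\; \beta\sum_{\substack{i<j\\|i-j|>N\eta}} \log\frac{|x_i-x_j|+\eta}{|x_i-x_j|},
\end{equation*}
and applying the Hessian to $v\in\bR^N$ gives the short-range quadratic form $\tfrac{\beta}{2}\sum_{|i-j|\le N\eta}(v_i-v_j)^2/(x_i-x_j)^2$ plus a far-pair contribution of the form $\sum_{|i-j|>N\eta}[(x_i-x_j)^{-2}-(|x_i-x_j|+\eta)^{-2}](v_i-v_j)^2/2$, which is nonnegative and of lower order since $|x_i-x_j|\gg\eta$ for $|i-j|>N\eta$. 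Combining these pieces yields
\begin{equation*}
\langle v,\mathrm{Hess}\,\wt\cH\cdot v\rangle \;\ge\; cNR^{-2}|v|^2 \;+\; \tfrac{\beta}{2}\sum_{|i-j|\le N\eta}\frac{(v_i-v_j)^2}{(x_i-x_j)^2}.
\end{equation*}

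With the Hessian bound in hand, I would appeal to the standard identity
\begin{equation*}
\partial_t D_\omega(\sqrt{q_t}) \;=\; -\frac{1}{N^2}\int\Big[|\mathrm{Hess}\sqrt{q_t}|^2 + \langle\nabla\sqrt{q_t},\mathrm{Hess}\,\wt\cH\cdot\nabla\sqrt{q_t}\rangle\Big]d\omega,
\end{equation*}
obtained by differentiating $D_\omega(\sqrt{q_t})=(2N)^{-1}\int|\nabla\sqrt{q_t}|^2 d\omega$ in $t$, substituting $\partial_t q_t=\wt L q_t$, and integrating by parts twice. Discarding $|\mathrm{Hess}\sqrt{q_t}|^2\ge 0$ and inserting the Hessian lower bound yields \eqref{0.1} (the factor $\beta\ge 1$ being absorbed into the universal constant). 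Integrating this differential inequality on $[0,\infty)$, using $D_\omega(\sqrt{q_t})\ge 0$ and $D_\omega(\sqrt{q_t})\to 0$ as $t\to\infty$, produces \eqref{0.2}.

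Finally, the same Hessian estimate $\mathrm{Hess}\,\wt\cH \succeq cNR^{-2} I$ is exactly the Bakry--Emery curvature condition for the $(2N)^{-1}$-scaled diffusion $\wt L$, and the Bakry--Emery theorem delivers the LSI \eqref{lsi} with constant $CR^2$. Combining \eqref{lsi} with the de Bruijn identity $\partial_t S_\omega(q_t)=-4D_\omega(\sqrt{q_t})$ gives $\partial_t S_\omega(q_t)\le -(4/CR^2)S_\omega(q_t)$, and Gr\"onwall yields \eqref{Sdecay}. The only nontrivial technical point is controlling the far-pair residual $[(x_i-x_j)^{-2}-(|x_i-x_j|+\eta)^{-2}]$ together with the convex extension of $W_j$ outside $I_j$, so as to guarantee that the pointwise Hessian lower bound holds on all of $\Sigma_N$ with the clean constant $cNR^{-2}$; this bookkeeping is quantitative but introduces no new idea and is what fixes the scale $R=\eta^{1/6}$.
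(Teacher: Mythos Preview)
Your overall strategy --- compute a uniform lower bound on $\mathrm{Hess}\,\wt\cH$, feed it into a Bakry--\'Emery dissipation inequality, then read off the LSI and exponential entropy decay --- is exactly the paper's. The paper records the far-pair positivity as $\tfrac{d^2}{dx^2}\big[\log(|x|+\eta)-\log|x|\big]\ge 0$ and the near-pair term as the extra quadratic form in \eqref{convex}, matching your Hessian bound.

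One genuine slip: the ``standard identity'' you invoke for $\partial_t D_\omega(\sqrt{q_t})$ is the $\Gamma_2$/Bochner formula, which holds for a function solving $\partial_t f=\wt L f$. But $h:=\sqrt{q_t}$ does \emph{not} solve that linear equation; it satisfies $\partial_t h=\wt L h+\tfrac{1}{2N}h^{-1}|\nabla h|^2$, and the nonlinear correction term survives the integration by parts. Carrying out the computation correctly gives
\[
\partial_t D_\omega(\sqrt{q_t})=-\frac{1}{2N^2}\int\Big[\,\big\|\mathrm{Hess}\,h-h^{-1}\nabla h\otimes\nabla h\big\|^2+\langle\nabla h,(\mathrm{Hess}\,\wt\cH)\nabla h\rangle\Big]\,d\omega,
\]
i.e.\ the Hessian-squared term is $q_t\|\mathrm{Hess}\log\sqrt{q_t}\|^2$, not $\|\mathrm{Hess}\sqrt{q_t}\|^2$ (and the prefactor is $1/(2N^2)$). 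Fortunately this term is still nonnegative, so discarding it yields precisely the paper's inequality \eqref{1.5}, and the rest of your argument is unchanged. A smaller point: your remark that the far-pair residual is ``of lower order since $|x_i-x_j|\gg\eta$'' is neither needed nor justified pointwise on $\Sigma_N$; nonnegativity alone (which you also state) is what is used.
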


The notation $R= \eta^{1/6}$ was introduced so that this result and 
Theorem 4.2  in \cite{ESYY}  are identical. The scale parameter $R$ has 
 a meaning in \cite{ESYY}, but it is purely a choice of convention here. 
The proof given below  follows the
argument in \cite{BE} and it was outlined in this
context in Section~5.1 of \cite{ERSY}.  The new observation is the additional
second term  on the r.h.s of \eqref{0.1},  corresponding to  ``local
Dirichlet form dissipation''.
The estimate \eqref{0.2} on this additional term will
play a key role in this paper.

\medskip

\begin{proof} In 
\cite{ERSY} 
it was shown  that,
with the notation $h=\sqrt{q}$, we have
$$
  \pt_t h = \wt Lh + \frac{1}{2N}h^{-1} (\nabla h)^2
$$
and
\be\label{1.5}
  \pt_t \frac{1}{2N}\int (\nabla h)^2 e^{-\wt\cH} \rd \bx
  \le  - \frac{1}{2N^2}\int
\nabla h (\nabla^2 \wt\cH)\nabla h  e^{-\wt\cH} \rd\bx.
\ee
In our case, \eqref{5.9} and the fact that
$$
   \frac{\rd^2}{\rd x^2} \Big( \log (|x|+\eta) - \log |x| \Big) \ge 0, \qquad 
x>0,
$$
imply that the Hessian of $\wt \cH$ is bounded
from below as
\be\label{convex}
\frac{1}{2N^2}
\nabla h (\nabla^2 \wt\cH)\nabla h
\ge   C \eta^{-1/3}    \frac 1 N\sum_j  (\partial_j h)^2 +
\frac{1}{2N^2}  \sum_{ |i-j| \le N \eta}  \frac 1 {(x_i - x_j)^2} (\pt_i h -
\pt_j h)^2
\ee
with some positive constant $C$.
This proves \eqref{0.1} and \eqref{0.2} since $R^2 = \eta^{1/3}$. 
Inserting the inequality
$$
\partial_t D_{\omega}( \sqrt {q_t}) \le - C R^{-2} D_{\omega}( \sqrt {q_t})
$$
from \eqref{0.1} into the equation
\be\label{1.2}
\partial_t   S_{\omega}(q_t) = -   4D_{\omega}( \sqrt {q_t}),
\ee
and integrating the resulting equation, we prove \eqref{lsi}.
Inserting  \eqref{lsi} into \eqref{1.2} we have
$$
   \pt_t  S_{\omega}(q_t) \le - C R^{-2} S_{\omega}(q_t)
$$
and we obtain \eqref{Sdecay}.
\end{proof}

\begin{remark}\label{R1} The proof of \eqref{1.5} requires an integration
by parts and the boundary term at $x_i=x_j$ (explained
in Section 5.1. of \cite{ERSY}) should vanish. In the
Appendix we will justify this technical step.
\end{remark}

\begin{lemma}
Suppose that the density $q_0$ satisfies $S_\omega(q_0)\le CN^m$ with some $m>0$ 
fixed.
For a fixed $n\ge1$ let  $G:\bR^n\to\bR$ be a bounded smooth function with 
compact support
and recall the definition of $\cG_{i,n}$ from \eqref{cG}.
Let $J\subset \{ 1, 2, \ldots , N-n\}$ and set $\tau = \eta^{1/3} N^{\delta}$.
Then we have
\be\label{diff}
\left | \int \frac 1 N \sum_{i\in J} \cG_{i,n}(\bx)\,\rd \omega -
\int \frac 1 N \sum_{i\in J} \cG_{i,n}(\bx) \, q_0 \rd \omega
\right |
\le C\sqrt { \frac { D_\omega(\sqrt {q_0}) \tau  } {N} } +  Ce^{-cN^{\delta}}
\ee
with some constant $C$ depending only on $G$.
\end{lemma}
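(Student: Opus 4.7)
The plan is to introduce the local relaxation flow $\partial_s q_s = \wt L q_s$ starting from $q_0$, run it up to time $\tau$, and estimate
\[
\int H\, q_0\, \rd\omega - \int H\, \rd\omega \;=\; \underbrace{\int H (q_0 - q_\tau)\, \rd\omega}_{=:(II)} \;+\; \underbrace{\int H (q_\tau - 1)\, \rd\omega}_{=:(I)},
\]
where $H(\bx) := \frac{1}{N}\sum_{i\in J}\cG_{i,n}(\bx)$.

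For $(I)$, I exploit that $\tau = R^2 N^\delta$ is many relaxation times of the local relaxation flow. Combining the entropy decay \eqref{Sdecay} with the hypothesis $S_\omega(q_0)\le CN^m$ yields $S_\omega(q_\tau) \le CN^m e^{-cN^\delta}$. Since $G$ is bounded, $\|H\|_\infty \le \|G\|_\infty$, and Pinsker's inequality applied to $q_\tau\rd\omega$ versus $\rd\omega$ gives $|(I)| \le \|G\|_\infty\sqrt{2 S_\omega(q_\tau)} \le Ce^{-cN^\delta}$, which absorbs into the second error term of the claim.

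For $(II)$, differentiating along the flow and using the reversibility of $\wt L$ with respect to $\omega$,
\[
(II) \;=\; \int_0^\tau \frac{1}{2N}\sum_j \int \partial_j H \cdot \partial_j q_s\, \rd\omega\, ds.
\]
The key structural observation is that $\cG_{i,n}$ depends only on the nearest-neighbor gaps $N(x_{i+k-1}-x_{i+k})$. Writing $\vec u_i := (N(x_i-x_{i+1}),\ldots,N(x_{i+n-1}-x_{i+n}))$, the chain rule and a reindexing give
\[
\partial_j H \;=\; \sum_{k=1}^n \bigl[(\partial_{u_k} G)(\vec u_{j-k+1}) - (\partial_{u_k} G)(\vec u_{j-k})\bigr].
\]
Applying summation by parts in $j$ (together with $\partial_j q_s = 2\sqrt{q_s}\,\partial_j\sqrt{q_s}$) converts $\sum_j \partial_j H \cdot \partial_j q_s$ into a sum, over $k\le n$ and over $m$, of terms of the form $b_m^{(k)}\sqrt{q_s}(\partial_m\sqrt{q_s}-\partial_{m+1}\sqrt{q_s})$ with $|b_m^{(k)}|\le \|\nabla G\|_\infty$. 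This is precisely the consecutive-index difference structure controlled by the local dissipation \eqref{0.2}. A weighted Cauchy-Schwarz with weight $(x_m-x_{m+1})^2$, followed by a Cauchy-Schwarz in $s\in[0,\tau]$, yields
\[
|(II)| \;\le\; \frac{C}{N}\sqrt{\int_0^\tau \!\!\sum_m \!\!\int (x_m-x_{m+1})^2 q_s\,\rd\omega\, ds} \cdot \sqrt{\int_0^\tau \!\!\sum_m \!\!\int \frac{(\partial_m\sqrt{q_s}-\partial_{m+1}\sqrt{q_s})^2}{(x_m-x_{m+1})^2} \rd\omega\, ds}.
\]
The second factor is bounded by $\sqrt{2N^2 D_\omega(\sqrt{q_0})}$ directly from \eqref{0.2}. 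For the first factor, the short-range log-repulsion built into $\omega$ forces gaps to concentrate at scale $1/N$, so $\sum_m \int(x_m-x_{m+1})^2\,\rd\omega\lesssim 1/N$; transferring this estimate from $\omega$ to $q_s\omega$ uses the entropy bound $S_\omega(q_s)\le S_\omega(q_0)\le CN^m$. Multiplying out gives $|(II)| \le C\sqrt{\tau D_\omega(\sqrt{q_0})/N}$.

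The main obstacle is the weighted Cauchy-Schwarz step: it crucially relies on the local Dirichlet form dissipation \eqref{0.2}, the new ingredient of Theorem \ref{thm2} beyond the standard Bakry-\'Emery identity. A naive Cauchy-Schwarz that only uses $D_\omega(\sqrt{q_s})$ (and not the local differences $\partial_m\sqrt{q_s}-\partial_{m+1}\sqrt{q_s}$) loses a factor of at least $\sqrt{N}$ in the final bound, because $|\partial_j H|$ is only $O(1)$ and $O(N)$ indices contribute. A secondary technical point is the uniform-in-$s$ rigidity estimate $\sum_m\int(x_m-x_{m+1})^2 q_s\,\rd\omega = O(1/N)$, which follows from the convexity properties \eqref{5.9}, the log-repulsion in $\omega$, and a standard entropy transfer using $S_\omega(q_s)\le CN^m$.
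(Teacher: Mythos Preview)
Your overall strategy coincides with the paper's: the same two-term decomposition $(I)+(II)$, the same treatment of $(I)$ via the entropy decay \eqref{Sdecay} and Pinsker, and the same identification of the consecutive-difference structure in $(II)$ together with the weighted Cauchy--Schwarz against the local dissipation estimate \eqref{0.2}. So the architecture is right.

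The gap is in your bound on the first factor of the Cauchy--Schwarz. You discard the support information in the coefficient $b_m^{(k)}$ by writing $|b_m^{(k)}|\le \|\nabla G\|_\infty$, and are then forced to prove the rigidity statement
\[
\sum_m \int (x_m-x_{m+1})^2\, q_s\, \rd\omega \;=\; O(1/N)
\]
uniformly in $s$. Your justification --- ``transfer from $\omega$ to $q_s\omega$ via $S_\omega(q_s)\le CN^m$'' --- does not work: the entropy is polynomially large, and the entropy/variational inequality only gives
$\int F\,q_s\,\rd\omega \le \lambda^{-1}\big(\log\!\int e^{\lambda F}\rd\omega + S_\omega(q_s)\big)$,
so pushing the $CN^m$ term below $1/N$ would require taking $\lambda$ of order $N^{m+1}$ and controlling the corresponding exponential moment of $\sum_m (x_m-x_{m+1})^2$ under $\omega$, which you have not done (and which is far from obvious for the pseudo-equilibrium measure, especially near the edge).

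The fix is simply not to throw away $b_m^{(k)}$. After your summation by parts the coefficient is (up to reindexing) $b_m^{(k)} = (\partial_{u_k}G)(\vec u_{m-k+1})$, and the $k$-th entry of $\vec u_{m-k+1}$ is exactly $N(x_m-x_{m+1})$. Since $G$ has compact support, $b_m^{(k)}\ne 0$ forces $|x_m-x_{m+1}|\le C/N$, and therefore
\[
\big(b_m^{(k)}\big)^2 (x_m-x_{m+1})^2 \;\le\; \frac{C}{N^2}
\]
pointwise. Integrating this against the probability measure $q_s\rd\omega$ and summing over $m$ and $s\in[0,\tau]$ makes the first factor $\le C\sqrt{\tau/N}$ with no rigidity input whatsoever; combined with \eqref{0.2} for the second factor you get exactly $C\sqrt{D_\omega(\sqrt{q_0})\tau/N}$. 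This is what the paper does.
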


\begin{proof}   Without loss of generality, we consider only
the case $J = \{1, \ldots, N-n\}$.   Let $q_t$ satisfy
\[
\partial_t q_t = \wt L q_t
\]
with an initial condition $q_0$.
We first compare $q_\tau$ with $q_\infty=1$.
Using the entropy inequality,
$$
  \int |q-1|\rd\omega \le 2 \sqrt{S_\omega(q)},
$$
and the exponential decay
of the entropy \eqref{Sdecay}, we have
\[
\Big|\int \frac 1 N \sum_{i\in J}  \cG_{i,n}(\bx)\,  q_\tau \rd \omega -
\int \frac 1 N \sum_{i\in J}  \cG_{i,n}(\bx)\, \rd \omega\Big|
\le C \big ( N^m  e^{-\tau \eta^{-1/3}}\big)^{1/2} \le C e^{-cN^\e}.
\]

To compare $q_0$ with $q_\tau$,
by differentiation, we have
\[
\int \frac 1 N \sum_{i\in J} \cG_{i,n}(\bx)q_\tau \rd \omega -
\int \frac 1 N \sum_{i\in J} \cG_{i,n}(\bx)q_0 \rd \omega 
\qquad\qquad\qquad\qquad\qquad\qquad\qquad
\]
\[
\qquad\qquad\qquad = \int_0^\tau  \rd s \int  \frac 1 N \sum_{i\in J} \sum_{k=1}^n
  \pt_k G\Big( N(x_i-x_{i+1}), \ldots, N(x_{i+n-1}-x_{i+n})\Big)
[\pt_{i+k-1} q_s - \pt_{i+k}q_s]  \rd \omega.
\]
{F}rom the Schwarz inequality and $\pt q = 2 \sqrt{q}\pt\sqrt{q}$
the last term is bounded by
\begin{align}\label{4.1}
2\sum_{k=1}^n & \left [   \int_0^\tau  \rd s \int
\sum_{i\in J} \Big[\pt_k G \Big(N(x_i - x_{i +1}),\ldots, N(x_{i+n-1}-x_{i+n}) 
\big)\Big] ^2
(x_{i+k-1}-x_{i+k})^2  \, q_s \rd \omega
\right ]^{1/2} \nonumber \\
& \times \left [ \int_0^\tau  \rd s \int  \frac 1 {N^2 } \sum_{i\in J}
\frac{1}{(x_{i+k-1}-x_{i+k})^2}  [ \pt_{i+k-1}\sqrt {q_s} -
\pt_{i+k}\sqrt {q_s}]^2  \rd \omega \right ]^{1/2} \nonumber \\
\le &  \; C_n \sqrt { \frac{D_\omega(\sqrt {q_0}) \tau}{N}},
\end{align}
where we have used \eqref{0.2} and that
$$
 \Big[\pt_k G \Big(N(x_i - x_{i +1}), \ldots, N(x_{i+k-1}-x_{i+k}), \ldots
N(x_{i+n-1}-x_{i+n}) \Big) \Big]^2
(x_{i+k-1}-x_{i+k})^2 \le CN^{-2},
$$
since  $G$ is smooth and  compactly
supported.
This proves the Lemma.
\end{proof}

\bigskip
Notice if we use only the entropy dissipation and Dirichlet form,
the main term on the right hand side of \eqref{diff} will become $C\sqrt { S_\om(q)\tau}$.
Hence by exploiting the Dirichlet form dissipation
coming from the second term on the r.h.s. of \eqref{0.1},
we gain the crucial  factor $N^{-1/2}$
in the estimate.

\bigskip

The second  ingredient to prove Theorem \ref{thmM} is the following entropy and
Dirichlet form estimates.

\begin{theorem}\label{thm1} {\bf (Entropy and Dirichlet Form Estimates)}
 Suppose the assumptions
of Theorem \ref{thmM} hold. Recall that $\tau = \eta^{1/3} N^\delta$ and
define $g_t := f_t/\psi$ so that
$S_{\mu} (f_t|\psi) =  S_{\omega} (g_t)$.
Then the entropy and the Dirichlet form satisfy the estimates:
\be\label{1.3}
 S_{\omega} (g_{\tau/2}) \le
  C N R^2  \Lambda,
\ee
\be\label{1.4}
D_\omega (\sqrt{g_\tau})
 \le CN \Lambda.
\ee
\end{theorem}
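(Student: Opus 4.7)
The plan is to derive a Gronwall-type inequality for the relative entropy $S_\omega(g_t)=S_\mu(f_t|\psi)$, close it using the logarithmic Sobolev inequality \eqref{lsi} for $\omega$ from Theorem \ref{thm2}, and then extract the Dirichlet-form bound \eqref{1.4} by integrating the resulting entropy dissipation identity in time.

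\medskip

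\textbf{Step 1 (Entropy production identity).} Since $g_t\,d\omega=f_t\,d\mu$ and $L$ is self-adjoint in $L^2(\mu)$,
\[
\partial_t S_\omega(g_t)=\int(Lf_t)\log g_t\,d\mu=\int g_t\,L\log g_t\,d\omega.
\]
Decomposing $L=\wt L+\sum_j b_j\partial_j$ and applying the chain rule $\wt L\log g=\wt Lg/g-|\nabla g|^2/(2Ng^2)$ together with $\int\wt Lg_t\,d\omega=0$ yields
\[
\int g_t\,\wt L\log g_t\,d\omega=-\frac{1}{2N}\int\frac{|\nabla g_t|^2}{g_t}\,d\omega=-4\,D_\omega(\sqrt{g_t}).
\]
Using $\int b_j^2 g_t\,d\omega=\int b_j^2 f_t\,d\mu$, Cauchy--Schwarz and Young's inequality together with the hypothesis \eqref{bbound} give
\[
\Big|\sum_j\int b_j\partial_j g_t\,d\omega\Big|=\Big|2\sum_j\int b_j\sqrt{g_t}\,\partial_j\sqrt{g_t}\,d\omega\Big|\le 2\sqrt{2N\Lambda\,D_\omega(\sqrt{g_t})}\le 2\,D_\omega(\sqrt{g_t})+CN\Lambda.
\]
Combining the two contributions,
\[
\partial_t S_\omega(g_t)\le -2\,D_\omega(\sqrt{g_t})+CN\Lambda.
\]

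\textbf{Step 2 (Entropy bound \eqref{1.3}).} Inserting the LSI $S_\omega(g_t)\le CR^2D_\omega(\sqrt{g_t})$ from \eqref{lsi} into the previous display gives $\partial_t S_\omega(g_t)\le -C'R^{-2}S_\omega(g_t)+CN\Lambda$. Gronwall yields
\[
S_\omega(g_t)\le e^{-C't/R^2}S_\omega(g_0)+C''R^2 N\Lambda.
\]
At $t=\tau/2=R^2 N^\delta/2$ the exponential factor $e^{-C'N^\delta/2}$ kills the polynomial prefactor $S_\omega(g_0)=S_\mu(f_0|\psi)\le CN^m$, leaving $S_\omega(g_{\tau/2})\le CNR^2\Lambda$, i.e.\ \eqref{1.3}.

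\medskip

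\textbf{Step 3 (Dirichlet-form bound \eqref{1.4}).} Integrating the inequality of Step 1 over $[\tau/2,\tau]$ and using \eqref{1.3} with $S_\omega(g_\tau)\ge0$ gives
\[
2\int_{\tau/2}^\tau D_\omega(\sqrt{g_t})\,dt\le S_\omega(g_{\tau/2})+CN\Lambda\tau\le C'N\Lambda\tau,
\]
since $\tau=R^2 N^\delta\gg R^2$. Hence the average of $D_\omega(\sqrt{g_t})$ over $[\tau/2,\tau]$ is $O(N\Lambda)$, and in particular there exists $t^\ast\in[\tau/2,\tau]$ with $D_\omega(\sqrt{g_{t^\ast}})\le CN\Lambda$. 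To transport this bound to the specific endpoint $t=\tau$, I would mimic the Bakry--Emery computation behind \eqref{0.1} but for the dual evolution $\partial_t g_t=L^\ast_\omega g_t$: writing $L^\ast_\omega=\wt L-\sum_j b_j\partial_j+V$ with a zeroth-order correction $V$ determined by conservation of $\int g_t\,d\omega$, the Hessian bound \eqref{convex} supplies a gain $-cR^{-2}D_\omega(\sqrt{g_t})$ while the cross terms generated by the drift and by $V$ are absorbed by Cauchy--Schwarz against \eqref{bbound}, producing the differential inequality
\[
\partial_t D_\omega(\sqrt{g_t})\le -cR^{-2}D_\omega(\sqrt{g_t})+CN\Lambda R^{-2}.
\]
A final Gronwall step on $[t^\ast,\tau]$ then yields \eqref{1.4}.

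\medskip

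\textbf{Main obstacle.} The delicate piece is the Dirichlet-form dissipation inequality in Step 3: the Bakry--Emery-style computation for $D_\omega(\sqrt{g_t})$ must preserve the strong convexity gain $-cR^{-2}D_\omega(\sqrt{g_t})$ of \eqref{5.9}--\eqref{convex} while absorbing all the new cross terms created by the drift $-\sum_j b_j\partial_j$, using only the weak control \eqref{bbound} on $\sum_j\int b_j^2 f_t\,d\mu$. Balancing these error terms against the dissipation is precisely where the scale $R^2=\eta^{1/3}$ forced by the uniform convexity bound \eqref{5.9} enters in an essential way.
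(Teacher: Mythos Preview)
Your Steps 1 and 2 match the paper's argument essentially verbatim (the coefficient $-2$ versus the paper's $-1$ in front of $D_\omega$ in the dissipation inequality is just a different Young split and is immaterial).

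Step 3 is where you diverge, overcomplicate, and introduce a genuine gap. The paper does not attempt a second Bakry--\'Emery computation. After integrating the inequality of Step~1 over $[\tau/2,\tau]$, it simply invokes ``the monotonicity of the Dirichlet form in time'' to pass from the time-averaged bound $\frac{2}{\tau}\int_{\tau/2}^\tau D_\omega(\sqrt{g_t})\,dt\le CN\Lambda$ to the pointwise bound $D_\omega(\sqrt{g_\tau})\le CN\Lambda$, and stops there. Your proposed differential inequality
\[
\partial_t D_\omega(\sqrt{g_t})\le -cR^{-2}D_\omega(\sqrt{g_t})+CN\Lambda R^{-2}
\]
would require a Bakry--\'Emery calculation for the perturbed evolution of $g_t$, and commuting $\nabla$ through the generator produces cross terms involving the \emph{derivatives} $\partial_i b_j$, not merely $b_j^2$. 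Hypothesis \eqref{bbound} controls only $\sum_j\int b_j^2 f_t\,d\mu$; it says nothing about $\partial_i b_j$, and in fact $\partial_j b_j$ contains $W_j''(x_j)$, which by \eqref{5.9} is of size $\eta^{-1/3}=R^{-2}$ with a sign you cannot exploit here. These terms therefore cannot be ``absorbed by Cauchy--Schwarz against \eqref{bbound}'' as you propose, and the undefined zeroth-order correction $V$ does not address the issue. Your Step~3 does not go through as written, and is in any case unnecessary given the paper's one-line conclusion.

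(As an aside: the monotonicity the paper invokes is itself a somewhat delicate point, since $g_t$ does not evolve by $\wt L$. A clean way around, taken in the follow-up \cite{ESYY}, is simply to carry the time-averaged Dirichlet form through to the application, since the proof of Theorem~\ref{thmM} works equally well with the bound at any $\tau'\in[\tau/2,\tau]$.)
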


\begin{proof}  First we need  the following  relative entropy identity from 
\cite{Y}.

\begin{lemma}
Let $f_t $  be a probability
density satisfying $\partial_tf_t=Lf_t$.  Then for any probability density
$\psi_t$  we have
\be
\partial_t  S_\mu(f_t|\psi_t) = - \frac 2 {  N } \sum_{j} \int (\partial_j
\sqrt {g_t})^2  \, \psi_t \, \rd\mu
+\int g_t(L-\partial_t)\psi_t \, \rd\mu \ ,
\label{entrdif}
\ee
where $g_t = f_t/\psi_t$.
\end{lemma}

In our setting, $\psi$ is independent of $t$ and $L$  satisfies \eqref{tl}.
 Hence we 
have
$$
\pt_t S_\omega(g_t)=\partial_t  S_\mu(f_t|\psi)
= - \frac 2 {  N } \sum_{j} \int (\partial_j
\sqrt {g_t})^2  \, \rd\omega
+\int   \wt L g_t  \, \rd\omega+  \sum_{j} \int  b_j \partial_j
g_t   \, \rd\omega.
$$
Since the middle term on the right hand side vanishes, we
have from the Schwarz inequality
\be\label{1.1}
\pt_t S_\omega(g_t) \le  -D_{\omega} (\sqrt {g_t})
+   CN  \sum_{j} \int  b_j^2  g_t   \, \rd\omega.
\ee
Together with the LSI \eqref{lsi}  and \eqref{bbound},  we have
\be\label{1.2new}
\partial_t  S_\omega(g_t) \le  - CR^{-2}  S_\omega(g_t) +
  C   N  \Lambda
\ee
for $t\le \tau$.
Since $S_\omega(g_0)=S_\mu(f_0|\psi) \le CN^m$ and $\tau/2 \gg R^{2}$,
 the last inequality  proves \eqref{1.3}.

Integrating  \eqref{1.1}
from $t=\tau/2$ to $t=\tau$ and using
the monotonicity of the Dirichlet form in time, we have proved \eqref{1.4}
with the choice of $\tau$.

\end{proof}

\bigskip

{\bf Proof of Theorem \ref{thmM}.}  Fix $\tau= R^2 N^\delta = \eta^{1/3} N^\delta$
and let $q_0:= g_\tau = f_\tau/\psi$ with
$f_0$ satisfying  the assumption of Theorem \ref{thm1}, i.e., 
$S_\mu(f_0|\psi) \le C N^m$ for   some $m$ and \eqref{bbound} holds. 
Using  \eqref{1.4}, we 
have
\[
\sqrt { \frac { D_{\omega}(\sqrt {q_0}) \tau } N } \le
 CN^{ \delta/2}  \Lambda^{1/2},
\]
and from \eqref{diff} we also have
\be\label{6.1}
\Big | \int \frac 1 N \sum_{i\in J} \cG_{i,n}(\bx) f_\tau  \rd \mu -
\int \frac 1 N \sum_{i\in J} \cG_{i,n}(\bx) \rd \omega \Big | \le
C N^{ \delta/2}  \Lambda^{1/2} + Ce^{-cN^\delta}.
\ee
Clearly, equation  \eqref{6.1}  also holds for the special choice
 $f_0= 1$ (for which $f_\tau=1$), i.e. local statistics of $\mu$ and
$\om$ can be compared. Hence we can replace the measure
$\omega$ in \eqref{6.1} by $\mu$
and this proves Theorem \ref{thmM}.
\qed

\bigskip

%
%

\section{Proof of Theorem \ref{main2}
 and Theorem \ref{thm:main}}\label{sec:corfn}

 We first prove Theorem \ref{main2} assuming that  Theorem \ref{thm:main} holds. 
Our main tool is the reverse heat flow argument from \cite{EPRSY}. 
Recall that the distribution of the matrix element is given by a measure $\nu $
and the generator of the Ornstein-Uhlenbeck process is  $ A$ \eqref{ou}. 
The probability  distribution of all  matrix elements is 
 $ \nu^{\otimes n}$, $n = N^2 $.
The
joint probability distribution of the matrix elements  at  time $t$ as every matrix element 
evolves under the Ornstein-Uhlenbeck process is 
given by 
\[
F_t \rd \gamma^{\otimes n}:=
(e^{t A} u)^{\otimes n} \; \rd\gamma^{\otimes n},
\]
where we recall that $\gamma$ is the standard Gaussian measure.

\begin{proposition}\label{meascomp1}  Fix a positive integer $K$.
Suppose that $\nu = u \rd\gamma $ satisfies the subexponential decay condition
\eqref{subexp}
and the regularity condition \eqref{cond1} for  all  $ j \le K$. 
 Then there is a small constant $\alpha_K$, depending only on $K$, such
that for  any positive  $t  \le  \alpha_K $
there exists a probability density $g_t$ w.r.t. $\gamma$ with mean zero and variance one such that
\be\label{gtilde}
\int  \left  | e^{t A}g_t  - u \right |
\rd \gamma  \le C\; t^{K}
\ee
for  some $C>0$ depending on $K$.  Furthermore,  $g_t$ can be chosen such that
if  the logarithmic Sobolev inequality \eqref{logsob} holds for the measure $\nu=u\gamma$, then
it holds for  $g_t\gamma$  as well, with   the logarithmic Sobolev constant
changing by a factor of at most $2$.

Furthermore, let  $\cA=A^{\otimes n}$, $ F=  u^{\otimes n}$ with  $n = N^2$ and
set $G_t:=  g_t^{\otimes n}$. Then we also have
\be\label{FFtilde}
\int  \left  | e^{t\cA}G_t  - F \right |
\rd \gamma^{\otimes n}   \le C\; N^2 t^{K}
\ee
for  some $C>0$ depending on $K$.
\end{proposition}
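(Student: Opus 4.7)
The approach is the reverse heat flow of \cite{EPRSY}: although $e^{tA}$ has no inverse, one can approximate the formal inverse $e^{-tA}u$ by its $(K-1)$-th order Taylor polynomial. Define
\[
h_t := \sum_{k=0}^{K-1}\frac{(-t)^k}{k!} A^k u.
\]
Setting $v_t := e^{tA}u$ and applying Taylor's theorem to the smooth curve $s \mapsto e^{(t-s)A}u$ joining $v_t$ at $s=0$ to $u$ at $s=t$ yields
\[
u - e^{tA} h_t = (-1)^K \int_0^t \frac{(t-r)^{K-1}}{(K-1)!}\, A^K e^{(t-r)A} u\; \rd r,
\]
and $L^1(\gamma)$-contractivity of the Markov semigroup then gives
\[
\|u - e^{tA}h_t\|_{L^1(\gamma)} \le \frac{t^K}{K!}\,\|A^K u\|_{L^1(\gamma)}.
\]
The first technical step is $\|A^K u\|_{L^1(\gamma)} < \infty$: since $A = \tfrac12 \pt_x^2 - \tfrac{x}{2}\pt_x$, the quantity $A^K u$ equals $u$ times a polynomial in $x$ and the derivatives of $U=-\log u$ up to order $2K$; the hypothesis \eqref{cond1} controls these derivatives polynomially, and \eqref{subexp} makes such polynomial weights integrable against $u\,\rd\gamma$.

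Next, $h_t$ must be turned into a bona fide probability density $g_t$ with mean zero and variance one. Most of the bookkeeping is automatic: since $A$ is $L^2(\gamma)$-self-adjoint and kills constants, $\int A^k u\,\rd\gamma = 0$ for $k\ge1$, so $\int h_t\, \rd\gamma = 1$; since $Ax = -x/2$, iterating gives $\int x\, A^k u\,\rd\gamma = (-1/2)^k \int xu\,\rd\gamma = 0$, so the mean vanishes; since $Ax^2 = 1 - x^2$, a short computation gives $\int x^2 A^k u\, \rd\gamma = 0$ for $k\ge1$, so the variance stays at $1$. The only real issue is positivity: $h_t = u + O(t)$ pointwise with polynomial-in-$x$ error, so $h_t$ can be negative only in tails of probability $\le e^{-c/t}$. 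I would cut off this bad set and redistribute the small lost mass by a smooth bump supported where $u$ is of order one, producing $g_t\ge 0$ with $\|g_t - h_t\|_{L^1(\gamma)} \le e^{-c/t}$ (plus further exponentially small smooth corrections restoring the first two moments exactly). This is absorbed into the target estimate \eqref{gtilde}.

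The LSI preservation is the Holley--Stroock perturbation principle applied to $g_t = e^{-V_t} u / Z_t$ with $V_t := -\log(g_t/u)$: the pointwise bound $g_t/u = 1 + O(t)$ on the bulk of $\bR$ (the polynomial-in-$x$ correction, once divided by $u$, is uniformly small on a set of $\gamma$-measure $1- O(e^{-c/t})$ thanks to the Gaussian tails produced by \eqref{logsob}) gives $\mathrm{osc}(V_t) \le \log 2$ for $t$ small, so LSI holds for $g_t\gamma$ with constant at most $2\theta$. Finally the tensorized bound \eqref{FFtilde} is elementary: commutativity of the one-site generators gives $e^{t\cA} G_t = (e^{tA} g_t)^{\otimes n}$, and the telescoping identity
\[
v^{\otimes n} - w^{\otimes n} = \sum_{i=1}^n v^{\otimes(i-1)} \otimes (v-w) \otimes w^{\otimes(n-i)},
\]
applied to $v = e^{tA}g_t$, $w = u$ (both probability densities of unit $L^1(\gamma)$ mass), yields $\|e^{t\cA}G_t - F\|_{L^1(\gamma^{\otimes n})} \le n\,\|e^{tA}g_t - u\|_{L^1(\gamma)} \le C N^2 t^K$. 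I expect the most delicate point to be arranging the positivity and moment surgery on $h_t$ without damaging either the $L^1$ closeness or the LSI; because all the defects are exponentially small in $1/t$ this is standard, but fiddly, bookkeeping.
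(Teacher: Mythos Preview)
Your approach is the same reverse heat flow as the paper's, and your Taylor remainder computation together with the observation that $\int A^k u\,\rd\gamma=\int x A^k u\,\rd\gamma=\int (x^2-1) A^k u\,\rd\gamma=0$ (so $h_t$ already has the right mass, mean and variance) is actually cleaner than the paper's treatment, which normalizes and rescales after the fact.

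There is, however, a genuine gap in your LSI step. Holley--Stroock requires the perturbation potential $V_t=-\log(g_t/u)$ to have bounded oscillation on \emph{all} of $\bR$, not merely on a set of large measure. Your surgery, as described, sets $g_t$ to zero on the tail set where $h_t$ fails to be positive (and adds back mass by a bump elsewhere); on that tail set $g_t/u=0$ and $V_t=+\infty$, so $\mathrm{osc}(V_t)=\infty$ and the perturbation principle gives nothing. Even if you instead truncate $h_t$ to $\max(h_t,0)$, the ratio $h_t/u$ is close to zero near the boundary of the good set, so $V_t$ is still unbounded. The fix, which is exactly what the paper does, is to apply a smooth spatial cutoff $\theta(x)=\theta_0(t^\alpha x)$ to the \emph{correction} rather than to $h_t$ itself: set $h_t=u+\theta\cdot\sum_{k=1}^{K-1}\frac{(-t)^k}{k!}A^k u$. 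Then $h_t=u$ identically for $|x|\ge 2t^{-\alpha}$, while on $|x|\le 2t^{-\alpha}$ the correction divided by $u$ is of size $t\cdot(1+x^2)^{C}\le t^{1-2C\alpha}\ll 1$ for $\alpha$ small, so $\tfrac{2}{3}u\le h_t\le \tfrac{3}{2}u$ everywhere and Holley--Stroock applies directly with factor at most $\tfrac{3}{2}\cdot\tfrac{3}{2}<2$. The price is that the cutoff destroys the exact moment identities you observed, but since $1-\theta$ is supported where $|x|\ge t^{-\alpha}$ and all the integrands have at most polynomial growth against a Gaussian weight, every moment defect is $O(t^M)$ for any $M$, and a final affine rescaling (as in the paper) restores mean zero and variance one without affecting anything.
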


\begin{proof}
This proposition can be proved following the reverse heat flow idea from \cite{EPRSY}.
 Define $
\theta(  x) =   \theta_0 (  t^{\alpha} x)$ with some  small positive $\alpha>0$
depending on $K$,
where $\theta_0$ is a smooth cutoff function satisfying
$\theta_0(x) = 1$ for $|x|\le 1$ and $\theta_0(x) = 0$ for $|x| \ge 2$.
Set
\[
h_s =    u  +  \theta \xi_s   , \quad \mbox{with} \quad
\; \xi_s:=
\left [ -sA+\frac{1}{2}s^2A^2  + \ldots +  (-1)^{K-1} \frac
{s^{K-1}} { (K-1)!}  A^{K-1}  \right ] u .
\]
By assumption \eqref{cond1},  $h_s$ is positive
and
\be\label{ul}
\frac{2}{3} u  \le h_s \le \frac{3}{2} u.
\ee
for any $s\le t$ if $t$ is small enough. 

Define $  v_s = e^{s A}h_s$ and by definition, $v_0= u$.  Then
$$
\pt_s v_s =   (-1)^{K-1} \frac   {s^{K-1}} { (K-1)!}   e^{s A} A^{K} u
+ e^{s A} A   (\theta -1)  \xi_s + e^{sA} (\theta-1)\pt_s\xi_s.
$$
Since the Ornstein-Uhlenbeck is a contraction in $L^1(\rd\gamma)$,
together with \eqref{cond1}, we have
\be
\int  |v_t - u|  \rd \gamma  \le C_K   \int_0^t     \int  \Big[ t^{K-1}| A^{K} u |  +
|  A   (\theta -1)  \xi_s | +  |(\theta-1)\pt_s\xi_s|  \Big] \rd \gamma \; \rd s
\le  C_K t^{K}
\label{uut}
\ee
for sufficiently small $t$.

Notice that $h_t$ may not be normalized as a probability density w.r.t. $\gamma$
but  it is easy to check that there is a constant $c_t = 1 + O(t^M)$, for any $M > 0$ positive,
such that $c_t h_t$ is a probability density. Clearly,
\[
\alpha_t:= \int x c_t h_t \rd \gamma = O(t^M),   \qquad
\sigma^2_t:=   \int (x-\alpha_t)^2 c_t h_t \rd \gamma = 1+ O(t^M),
\]
and the same formulas hold if $h_t$ is replaced by $v_t$ since
the OU flow preserves expectation and variance.
Let $g_t$ be defined by
\[
g_t(x)  e^{- x^2/2} =  c_t  \sigma^{-1}_t   h_t( (x+\alpha_t) \sigma_t^{-1} )
e^{-(x+ \alpha_t)^2/2 \sigma_t^{2}}  .
\]
Then $g_t$ is a probability density w.r.t. $\gamma$ with zero mean and variance $1$.
It is easy to check that the total variation norm of $h_t-g_t$ is
smaller than any power of $t$. Using again the contraction property of $e^{tA}$ and
\eqref{uut}, we get
\be
\int  |e^{t A} g_t - u|  \rd \gamma \le \int  |e^{t A} g_t - e^{t A} h_t |  \rd \gamma + \int  |v_t - u |  \rd \gamma
\le  C t^{K}
\label{uut1}
\ee
for sufficiently small $t$.

Now we  check the LSI constant for $g_t$.  Recall that $g_t$ was
obtained from $h_t$ by translation and dilation. By\ definition of the LSI
constant, the translation does not change  it.
The dilation changes the constant, but since our dilation constant is nearly one,
the change of LSI constant is also nearly one.
So we only have to compare the LSI constants between $\rd\nu = u\rd\gamma$ and $
c_t   h_t \rd \gamma  $. From \eqref{ul} and that $c_t$ is nearly one,
the LSI constant changes by a factor less than $2$.
This proves the claim on the LSI constant.

Finally, the \eqref{FFtilde} directly follows from
$$
\int  \left  | e^{t\cA}G_t  - F \right |
\rd \gamma^{\otimes n} \le N^2 \int  \left  | e^{t A }g_t  - u \right |\rd\gamma
$$
and this completes the proof of Proposition \ref{meascomp1}.
\qed

\medskip

We now apply Theorem \ref{thm:main}  to the  initial distribution given by the 
eigenvalues of the symmetric Wigner ensemble with distribution $g_{\tau} \gamma$
where $\tau = N^{-\zeta}$. By Proposition \ref{meascomp1}, the LSI constant of 
$g_{\tau} \gamma$ is bounded by the initial LSI constant of $\nu$  by a factor of at most two. 
Thus we can apply Lemma \ref{lm:bbound} to verify  Assumptions 1 and 2  
 of Theorem \ref{thm:main}. 
Assumption 3 follows from the local semicircle law, Theorem \ref{thm:semi}.
 Thus the correlation functions 
of the eigenvalues of the ensemble  with distribution $ (e^{\tau A} g_{\tau}) \gamma$ are 
the same as those of GOE in the  sense of \eqref{abstrthm}. Finally, using \eqref{FFtilde},
 we can   approximate the $k$-point correlation function 
w.r.t. $(e^{\tau A} g_{\tau}) \gamma$ by the one w.r.t. $\nu$
 after choosing $K$ sufficiently large so that $N^2 \tau^K = N^{2-K\zeta}=o(N^{-k})$.
The additional smallness factor $N^{-k}$ for the estimate
on the total variation in \eqref{FFtilde} is necessary to conclude
the convergence of the $k$-point correlation function, since it is rescaled
by a factor $N$ in each variable. We also used the trivial fact that
the total variation distance of two eigenvalue distributions is
bounded by the total variation distance of the  distributions
of the full matrix ensembles.  Finally  we remark that
the $b\to 0$ limit in   \eqref{episode} is needed to replace
$\varrho_{sc}(E)$ in \eqref{abstrthm} with $\varrho_{sc}(E')$
in \eqref{episode} using the continuity of $O$. 
  This concludes the proof
 of Theorem \ref{main2}. 
\end{proof}

\bigskip

{\bf Proof of Theorem \ref{thm:main}.}
{\it Step 1.} The first step   is to show that   the right hand side of \eqref{GG}
vanishes in the large $N$ limit for $\eta = N^{-\e_3}$ with $\e_3$ small enough  provided that 
the estimates  (\ref{Qbound}),  (\ref{sign2}) hold.
 By \eqref{sign2},  $x_j \in I_j$ (recall the definition
of $I_j$ from \eqref{Wj}) with a very high probability.  
In this paper we will say that an event holds with 
a very high probability if the complement event 
has a probability that is subexponentially small in $N$,
i.e., it is bounded by $C\exp(-N^\e)$ with some fixed $\e>0$.
{F}rom the definition of $b_j$ \eqref{tl},
we have
\be\label{bj}
b_j  =   \frac \beta N \sum_{k \; : \; |k-j|> N \eta}
\left [ \frac {\mbox{sgn}(x_j-x_k)} {|x_j-x_k|+\eta}
- \frac {\mbox{sgn}(x_j -\gamma_k) }{ |x_j -\gamma_k|+\eta} \right ].
\ee
Notice that function $g(x) :=\frac{\mbox{sgn(x)}}{|x|+\eta}$ satisfies
\be
   |g(x)-g(y)|\le \eta^{-2}|x-y|
\label{gb}
\ee
as long as $x$ and $y$ have the same sign.
In our case, $x_j-x_k$ and $x_j - \gamma_k$
have the same sign  as long as 
\be\label{sign}
|x_k-\gamma_k|< |x_k-x_j|, \qquad \text{for all $ k$ satisfying  $|k-j|> N \eta$.}
\ee
The last inequality holds with a very high probability due to \eqref{sign2} 
provided $\e_3$ is smaller than $\fb$.
We remark that this is the only place where we used Assumption 2.
Thus,,  with a very high probability, we have 
\be\label{bj2}
|b_j|  \le   \eta^{-2}  \frac \beta N \sum_{k \; : \; |k-j|> N \eta}
|x_k -\gamma_k| .
\ee
The contribution to $\Lambda$ of the exceptional event
is negligible, since its probability is
subexponentially small in $N$ and $|b_j|\le C\eta \le C N^{\e_3}$.
Thus, recalling the definition of $Q$ from \eqref{Q} and the definition
of $\Lambda$ from \eqref{bbound}, 
 we can bound the error term on the right hand side of \eqref{GG}  by 
\be\label{bbound2}
N^{\delta/2} \Lambda^{1/2}   \le   CN^{\delta/2} Q^{1/2}\eta^{-2}
   \le C N^{- \fa  +   2\e_3 + \delta/2 }
\le N^{- \fa/2}\to 0,
\ee
provided that \eqref{Qbound} holds and 
 $\e_3$ and $\delta$ are small enough, depending on $\fa$.

\medskip

{\it Step 2. From \eqref{GG} to correlation functions}.  The equation \eqref{GG} shows
 that for a special class of observables, depending only 
on rescaled {\it differences} of the points $x_j$, the expectations w.r.t. $f_t \mu$ and 
w.r.t the 
equilibrium measure $\mu$ are identical in the large $N$ limit. 
But the class of observables in \eqref{abstrthm} of
Theorem~\ref{thm:main} is somewhat bigger and we need to
 extend our result to them. Without the $\rd E'$ integration in 
\eqref{abstrthm}, the observable would strongly depend on
a fixed energy $E'$ and could not be approximated by observables depending
only on differences of $x_j$. Taking a small averaging in $E'$ remedies
this problem.

We will consider $E$, $b$ and $n$ fixed, i.e., the constants
in this proof may depend on these three parameters. We start with the identity
\begin{align}
\int_{E-b}^{E+b}  \frac { \rd E'} { 2 b}  \int_{\R^n} \rd\alpha_1
\ldots \rd\alpha_n  & \; O(\alpha_1,\ldots,\alpha_n)  p_{\tau, N}^{(n)}
\Big (E'+\frac{\alpha_1}{N\varrho(E)},
\ldots, E'+\frac{\alpha_n}{N \varrho(E)}\Big) \label{iddd} \\
 = & \int_{E-b}^{E+b}  \frac { \rd E'} { 2 b}   \int \sum_{i_1\ne i_2\ne
 \ldots \ne i_n}
  \wt O\big( N(x_{i_1}-E'),  N(x_{i_1}-x_{i_2}),
 \ldots  N(x_{i_{n-1}}-x_{i_n})\big)
 f_\tau\rd\mu, \nonumber
\end{align}
where $\wt O (u_1, u_2, \ldots u_n): = 
O\big( \varrho(E)u_1, \varrho(E)(u_1-u_2), \ldots\big)$. By permutational symmetry of
$ p_{\tau, N}^{(n)}$ we can assume that $O$ is symmetric and we can restrict
the last summation to $i_1 < i_2 < \ldots < i_n$ upon an overall factor $n!$.
Let $S_n$ denote the set of  
increasing positive integers, ${\bf m} = (m_2, m_3, \ldots, m_n) \in \N_+^{n-1}$, 
$m_2< m_3 <\ldots < m_n$.
For a given $\bm\in S_n$, we
change the indices to $i_1=i$, $i_2= i+m_2$, $i_3=i+m_3, \ldots,$
 and rewrite
the sum on the r.h.s. of \eqref{iddd} as
\begin{align}
  \sum_{{\bf m}\in S_n} \sum_{i=1}^N & \wt O\big(  N(x_i - E'),
N(x_i - x_{i+m_2}),
  N(x_{i+m_2} - x_{i+m_3}) , \ldots \big) =  \sum_{{\bf m}\in S_n}\sum_{i=1}^N  Y_{i,\bm}
(E',\bx),
\non 
\end{align}
where we introduced
$$
   Y_{i,\bm}(E',\bx): = \wt O\big(  N(x_i - E'),
N(x_i - x_{i+m_2}) ,\ldots,  N(x_i- x_{i+m_n})   \big).
$$
We will set $Y_{i,\bm}=0$ if $i+m_n>N$.
We have to show that
\be
   \lim_{N\to\infty}
  \Bigg| \int_{E-b}^{E+b}  \frac { \rd E'} { 2 b} 
   \int \sum_{{\bf m}\in S_n}\sum_{i=1}^N  Y_{i,\bm}
(E',\bx) f_\tau\rd \mu - 
 \int_{E-b}^{E+b}  \frac { \rd E'} { 2 b}   \int \sum_{{\bf m}\in S_n}\sum_{i=1}^N  Y_{i,\bm}
(E',\bx) \rd \mu\Bigg|=0.
\label{goal}
\ee
Let $M$ be an $N$-dependent parameter chosen at the end of the proof. Let 
$$
  S_n(M): = \{ \bm \in S_n \; , \; m_n \le M\}, \quad S_n^c(M):= S_n\setminus S_n(M),
$$
and note that $|S_n(M)|\le M^{n-1}$. 
To prove \eqref{goal}, it is sufficient to show that
\be
   \lim_{N\to\infty}
  \Bigg| \int_{E-b}^{E+b} \frac { \rd E'} { 2 b}   
 \int \sum_{{\bf m}\in S_n(M)}\sum_{i=1}^N  Y_{i,\bm}
(E',\bx) f_\tau\rd \mu - 
 \int_{E-b}^{E+b}  \frac { \rd E'} { 2 b}  \int \sum_{{\bf m}\in S_n(M)}\sum_{i=1}^N  Y_{i,\bm}
(E',\bx) \rd \mu\Bigg|=0
\label{goal1}
\ee
and that
\be
   \lim_{N\to\infty}  \sum_{{\bf m}\in S_n^c(M)}
  \Bigg| \int_{E-b}^{E+b}  \frac { \rd E'} { 2 b}    \int \sum_{i=1}^N  Y_{i,\bm}
(E',\bx) f_\tau\rd \mu \Bigg|=0
\label{goal2}
\ee
hold for any $ \tau  \ge  \eta^{1/3} N^{\delta}$ (note that $\tau =\infty$ 
corresponds to the equilibrium, $f_\infty =1$), where $  \eta^{1/3} N^{\delta}$ is chosen in 
Theorem \ref{thm:main} and $\eta$ is chosen in the Step 1. 
\bigskip

\noindent
{\it Case 1: Small $\bm$ case; proof of \eqref{goal1}.}

\bigskip
 After performing the $\rd E'$ integration, we will eventually
apply Theorem \ref{thmM} to the function
$$
G\big( u_1, u_2, \ldots \big)
 : = \frac{1}{2b} \int_{\R} \wt O\big( y,
 u_1, u_2 ,\ldots,   \big) \rd y ,
$$
i.e., to the quantity  
\be
 \int_\R   \frac { \rd E'} { 2 b}   \; Y_{i,\bm}(E',\bx)= 
 \frac{1}{N} G\Big( N(x_i-x_{i+m_2}), \ldots \Big) 
\label{OO}
\ee
for each fixed $i$ and ${\bf m}$.

For any $E$ and  $0<\xi<b$ define sets of integers
$J=J_{E,b,\xi}$ and $J^\pm= J^\pm_{E,b,\xi}$  by
$$
  J : = \big\{ i\; : \; \gamma_i \in [E-b, E+b]\big\},\quad
  J^\pm : = \big\{ i\; : \; \gamma_i \in [E-(b\pm\xi), E+b\pm\xi]\big\},
$$
where $\gamma_i$ was defined in \eqref{gamma}.  Clearly $J^-\subset J \subset J^+$.
With these notations, we have
\be
   \int_{E-b}^{E+b} \frac { \rd E'} { 2 b}   \sum_{i=1}^N
 Y_{i,\bm}(E',\bx )=  \int_{E-b}^{E+b}\frac { \rd E'} { 2 b}    \sum_{i\in J^+} Y_{i,\bm}(E',\bx )
 + \Omega^+_{J,\bm}(\bx).
\label{uppe}
\ee
 The error term $\Omega^+_{J,\bm}$, defined by \eqref{uppe}
indirectly, comes from  those $i\not\in J^+$ indices,
for which $x_i \in [E-b, E+b] + O(N^{-1})$ since 
$Y_{i,\bm}(E',\bx)=0$ unless $|x_i-E'|\le C/N$, the constant
depending on the support of $O$. Thus
$$
   |\Omega^+_{J,\bm}(\bx)| \le  CN^{-1}\# \{ \; i \; : \; |x_i-\gamma_i|\ge \xi/2 \}
$$
for any sufficiently large $N$ assuming $\xi\gg 1/N$
and using that $O$ is a bounded function. The additional $N^{-1}$ factor
comes from the $\rd E'$ integration. 
 Taking the expectation with respect to the
measure $f_\tau\rd\mu$, and by a Schwarz inequality, we get
\be
    \int |\Omega^+_{J,\bm}(\bx)| f_\tau\rd\mu \le C\xi^{-1}N^{-1} 
 \left \{ \int \Big [ \sum_{i=1}^N |x_i-\gamma_i|   \Big ]^2 f_\tau\rd\mu  \right \}^{1/2}  
   = C\xi^{-1} N^{-1/2-\fa} 
\label{exp}
\ee
using Assumption 1 \eqref{Qbound}.
We can also estimate
\begin{align}
    \int_{E-b}^{E+b} \frac { \rd E'} { 2 b}   \sum_{i\in J^+} Y_{i,\bm}(E',\bx)
  \le & \int_{E-b}^{E+b}\frac { \rd E'} { 2 b}   \sum_{i\in J^-} Y_{i,\bm}(E',\bx) + 
CN^{-1} |J^+\setminus J^-|  \non \\
   = & \int_\R \frac { \rd E'} { 2 b}    \sum_{i\in J^-}Y_{i,\bm}(E',\bx) 
 + CN^{-1}|J^+\setminus J^-|+
  \Xi^+_{J,\bm}(\bx) \label{fol} \\
 \le & \int_\R \frac { \rd E'} { 2 b}    \sum_{i\in J}Y_{i,\bm}(E',\bx)+
CN^{-1}|J^+\setminus J^-|+CN^{-1}|J\setminus J^-|+
  \Xi^+_{J,\bm}(\bx), \non
\end{align}
where the error term $\Xi^+_{J,\bm}$, defined by \eqref{fol}, 
comes from indices $i\in J^-$ such that $x_i \not \in [E-b, E+b]+O(1/N)$.
It satisfies the same bound \eqref{exp} as $\Omega^+_{J,\bm}$.
By the continuity  of $\varrho$, the density of $\gamma_i$'s is
bounded by $CN$, thus $|J^+\setminus J^-|\le CN\xi$ and
$|J\setminus J^-|\le CN\xi$.
 Therefore, 
summing up the formula \eqref{OO} for $i\in J$,
 we obtain from \eqref{uppe} and \eqref{fol}
$$
    \int_{E-b}^{E+b}\frac { \rd E'} { 2 b}   \int \sum_{i=1}^N
Y_{i,\bm}(E',\bx)  f_\tau\rd\mu\le \int \frac{1}{N}\sum_{i\in J}
 G \Big( N(x_i-x_{i+m_2}), \ldots \Big)
    f_\tau\rd\mu + C\xi + C\xi^{-1} N^{-1/2-\fa} 
$$
for each $\bm\in S_n$. 
A similar lower bound can be obtained analogously, and after choosing
$\xi= N^{-1/4}$, we obtain
\be
  \Bigg|  \int_{E-b}^{E+b} \frac { \rd E'} { 2 b}   \int \sum_{i=1}^N
 Y_{i,\bm}(E',\bx) f_\tau\rd\mu-  \int \frac{1}{N}
\sum_{i\in J} G \Big( N(x_i-x_{i+m_2}), \ldots \Big)
    f_\tau\rd\mu \Bigg|\le CN^{-1/4}
\label{seccc}
\ee
for each $\bm\in S_n$.

 Adding up \eqref{seccc}
for all $\bm\in S_n(M)$, we get
\be
\begin{split} 
  \Bigg|   \int_{E-b}^{E+b} \frac { \rd E'} { 2 b}  & \int\sum_{\bm\in S_n(M)}\sum_{i=1}^N
 Y_{i,\bm}(E',\bx) f_\tau\rd\mu   \\
 & -  \int \!\!\sum_{\bm\in S_n(M)} \frac{1}{N}
\sum_{i\in J} G \Big( N(x_i-x_{i+m_2}), \ldots \Big)
    f_\tau\rd\mu \Bigg|\le CM^{n-1}N^{-1/4},
\label{seccc1}
\end{split} 
\ee
and the same estimate holds for the equilibrium, i.e.,
if we set $\tau=\infty$ in \eqref{seccc1}.
 Subtracting these two formulas and applying \eqref{GG} from
 Theorem \ref{thmM}
to each summand on the second term in \eqref{seccc}  and using
\eqref{bbound2}, we conclude 
that
\be
    \Bigg|  \int_{E-b}^{E+b} \rd E' \int \sum_{\bm\in S_n(M)}\sum_{i=1}^N
 Y_{i,\bm}(E',\bx) (f_\tau\rd\mu -\rd \mu) \Bigg|\le  CM^{n-1}(N^{-1/4}+ N^{-\fa/2}).
\label{ssd}
\ee
Choosing 
\be
    M : =N^{\min \{ 1/4, \fa/2\}/n},
\label{Mdef}
\ee
we obtain that \eqref{ssd} vanishes as $N\to\infty$, and this proves \eqref{goal1}.

\bigskip
\noindent
{\it Step 2. Large $\bm$ case; proof of \eqref{goal2}.}
\bigskip

For a fixed $y\in \R$, $\ell >0$, let
$$
   \chi(y,\ell) : =\sum_{i=1}^N {\bf 1}\Big\{ x_i\in \big[ y- \frac{\ell}{N},
 y +\frac{\ell}{N}\big] \Big\}
$$
denote the number of points in the interval $[y-\ell/N, y+\ell/N]$.
 Note that for a fixed $\bm=(m_2, \ldots , m_n)$, we have
\be
   \sum_{i=1}^N |Y_{i,\bm} (E',\bx)| \le C\cdot\chi(E',\ell)
 \cdot {\bf 1}\Big(\chi(E',\ell)\ge m_n\Big) \le C\sum_{m=m_n}^\infty m \cdot
 {\bf 1}\Big(\chi(E',\ell)\ge m\Big),
\label{Ofi}
\ee
where $\ell$ denotes the maximum of $|u_1|+\ldots + |u_n|$
in the support of  $\wt O(u_1, \ldots , u_n)$.

 Since the summation over
all increasing sequences
$\bm = (m_2, \ldots, m_n)\in \N_+^{n-1}$ with a fixed $m_n$
contains at most $m_n^{n-2}$ terms,
we have
\be
  \sum_{\bm \in S_n^c(M)} \Bigg| \int_{E-b}^{E+b}  \rd E' \; 
\int \sum_{i=1}^N|Y_{i,\bm} (E',\bx)|  f_\tau \rd\mu \Bigg|
 \le C \int_{E-b}^{E+b}  \rd E' \;\sum_{m=M}^\infty m^{n-1}
  \int  {\bf 1}\Big(\chi(E',\ell)\ge m\Big) f_\tau\rd\mu.
\label{toc}
\ee
 Now we use  Assumption 3   for the interval
$I = [E' - N^{-1+\sigma}, E' + N^{-1+\sigma}]$ with 
$\sigma:=\frac{1}{2n}\min \{ 1/4, \fa/2\}$. 
 Clearly $\cN_I\ge \chi(E',\ell)$ for
sufficiently large $N$, thus we get from  \eqref{ass4} that
$$
   \sum_{m=M}^\infty m^{n-1}
  \int  {\bf 1}\Big(\chi(E',\ell)\ge m\Big) f_\tau\rd\mu 
 \le C_a \sum_{m=M}^\infty m^{n-1} \Big(\frac{m}{N^\sigma}\Big)^{-a} 
$$
holds for any $a\in \N$. By the choice of $\sigma$,
we get that $\sqrt{m}\ge N^\sigma$ for any $m\ge M$
(see \eqref{Mdef}), and thus choosing $a=2n+2$, we get
$$
   \sum_{m=M}^\infty m^{n-1}
  \int  {\bf 1}\Big(\chi(E',\ell)\ge m\Big) f_\tau\rd\mu 
 \le \frac{C_a}{M} \to 0
$$
as $N\to\infty$.
Inserting this into \eqref{toc},
this completes the proof of \eqref{goal2} and the
proof of Theorem \ref{thm:main}. \qed

\section{Local semicircle law and proof of  Lemma \ref{lm:bbound}}
\label{sec:best}

We first recall the  local semicircle 
law concerning 
the eigenvalues $x_1<x_2< \ldots < x_N$ of $H$.  Let
$$
  m(z) : = \frac{1}{N} {\rm Tr}\; \frac{1}{H-z} = \frac{1}{N}\sum_{j=1}^N
  \frac{1}{x_j-z}
$$
be the Stieltjes transform of the empirical eigenvalue distribution
at spectral parameter $z=E+i\eta$, $\eta>0$, and let
$$
  m_{sc}(z): = \int \frac{\varrho_{sc}(x)}{x-z}\rd x
$$
be the Stieltjes transform of the semicirle distribution.
In Theorem 4.1 of \cite{ERSY} we proved the following version
of the local semicircle law  (we remark that,
contrary to what is stated in Theorem 4.1 of \cite{ERSY},
  condition (2.5) of \cite{ERSY} is not needed and has not been
used in the proof): 

\begin{theorem} [Local semicircle law]\label{thm:semi} 
 Assume that the  distribution  $\nu$ of the matrix elements of the symmetric
Wigner matrix ensemble satisfies \eqref{logsob} with some  constant $\theta$ 
and assume that $y$ is such that $(\log N)^4/N \leq |y| \leq 1$. 
Then for any $K>0$ there exist positive constants $\delta_0$, $C$ and $c>0$, depending
only on $K$ and  $\theta$, such that
for any $|x|\le K$ we have
\be
\P \left( \left| m (x+iy) - m_{\text{sc}} (x+iy) \right| \geq \delta
\right) \leq 
C\, e^{-c \delta \sqrt{N|y| \, |2-|x||}}, 
\label{Pm}
\ee
for all $0\le\delta \le \delta_0$ and for all $N$ large enough.
\end{theorem}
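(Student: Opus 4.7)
The plan is to reduce the statement to a bootstrap argument on the self-consistent equation satisfied approximately by $m(z)$, with the key input being a sub-Gaussian concentration estimate coming from the logarithmic Sobolev inequality \eqref{logsob}. Write $z=x+iy$. By the Schur complement formula, the diagonal resolvent entries $G_{ii}(z)=[(H-z)^{-1}]_{ii}$ satisfy
\begin{equation*}
 G_{ii} = \frac{-1}{z - h_{ii} + \mathbf{a}_i^\ast (H^{(i)}-z)^{-1} \mathbf{a}_i},
\end{equation*}
where $H^{(i)}$ is the $(N-1)\times (N-1)$ minor obtained by removing the $i$-th row and column, and $\mathbf{a}_i$ is the $i$-th column of $H$ with its $i$-th entry removed. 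The conditional expectation of the quadratic form with respect to $\mathbf{a}_i$ equals $\frac{1}{N}\mathrm{Tr}(H^{(i)}-z)^{-1}=:m^{(i)}(z)$, and a simple rank-one perturbation bound gives $|m(z)-m^{(i)}(z)|\le (N|y|)^{-1}$. Summing $G_{ii}$ and combining the two facts, one obtains the approximate self-consistent relation
\begin{equation*}
 m(z) = -\frac{1}{z+m(z)} + \mathcal{E}(z),
\end{equation*}
where $\mathcal{E}$ is an error term collecting the diagonal contributions $h_{ii}$, the rank-one discrepancy, and the fluctuation $X_i:=\mathbf{a}_i^\ast (H^{(i)}-z)^{-1}\mathbf{a}_i - m^{(i)}(z)$.

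The central step is to show that with very high probability $|\mathcal{E}(z)|$ is small. For the quadratic-form fluctuations $X_i$, one conditions on $H^{(i)}$ and applies the Herbst/concentration argument: since $\nu$ satisfies \eqref{logsob}, the vector $\mathbf{a}_i$ satisfies a product LSI, and the map $\mathbf{a}_i\mapsto \mathbf{a}_i^\ast A \mathbf{a}_i$ (with $A=(H^{(i)}-z)^{-1}$) is Lipschitz with constant controlled by $\|A\|_{\mathrm{HS}}/\sqrt{N}$. Crucially, the Ward identity gives
\begin{equation*}
 \frac{1}{N^2}\|A\|_{\mathrm{HS}}^2 = \frac{1}{N^2}\sum_{j,k}|G^{(i)}_{jk}|^2 = \frac{\mathrm{Im}\, m^{(i)}(z)}{N|y|}.
\end{equation*}
Hence the Herbst concentration yields $\P(|X_i|\ge \delta)\le C\exp(-c\delta \sqrt{N|y|/\mathrm{Im}\,m^{(i)}})$. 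The diagonal $h_{ii}$ term is similarly controlled directly by \eqref{logsob} (in fact by \eqref{gauss}). A union bound over $i=1,\dots,N$ is absorbed by choosing constants.

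Once $|\mathcal{E}(z)|\le \delta$ with the required probability, the final step is a stability analysis of the fixed-point equation $w=-1/(z+w)$ satisfied by $m_{sc}$. Setting $\Delta:=m-m_{sc}$, one obtains
\begin{equation*}
 \Delta \Bigl(1- \frac{1}{(z+m_{sc})(z+m)}\Bigr) = \mathcal{E} - \frac{\Delta^2}{(z+m_{sc})(z+m)}.
\end{equation*}
Near the bulk spectrum the stability factor $|1-(z+m_{sc})^{-2}|$ is bounded below by $c\sqrt{|2-|x||}$, which produces the edge factor $\sqrt{2-|x|}$ in the exponent of \eqref{Pm}. One then runs a continuity argument in the spectral parameter: start from $y=O(1)$ where the bound $|\Delta|\le \delta$ is trivial from the apriori $|m|,|m_{sc}|\le 1/|y|$, and lower $y$ in small steps using that $m(z)$ and $m_{sc}(z)$ are Lipschitz in $z$ on the relevant scales. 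At each step the self-consistent equation together with the concentration estimate on $\mathcal{E}$ upgrades the rough bound $|\Delta|\le 1$ to $|\Delta|\le \delta$, provided $|y|\ge (\log N)^4/N$, which is exactly the condition that ensures $N|y|$ dominates the logarithmic losses from the union bound.

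The main obstacle is the stability analysis at the spectral edge, where the factor $1-(z+m_{sc})^{-2}$ degenerates; obtaining the sharp $\sqrt{2-|x|}$ dependence requires carefully quantifying this degeneracy and making sure the quadratic term $\Delta^2/\cdots$ does not dominate, which is where the bootstrap from $|\Delta|\le 1$ to $|\Delta|\le\delta$ must be handled delicately. A secondary technical point is that the concentration bound involves $\mathrm{Im}\, m^{(i)}$ which itself depends on what we are trying to prove; this is resolved by inserting the apriori bound from the previous step of the bootstrap into the Herbst estimate.
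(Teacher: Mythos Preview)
The paper does not actually give a proof of this theorem: it is quoted verbatim from Theorem~4.1 of \cite{ERSY} (with the side remark that condition (2.5) there is not needed), and no argument is reproduced here. So there is no ``paper's own proof'' to compare against in this document; the relevant comparison is with the original \cite{ERSY} (and its predecessors \cite{ESY1,ESY3}).

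That said, your sketch is essentially the strategy used in those references: Schur complement for $G_{ii}$, approximate fixed-point equation for $m$, concentration for the quadratic-form fluctuation $X_i$, Ward identity to control the variance scale, stability analysis of $w=-1/(z+w)$ producing the $\sqrt{|2-|x||}$ factor, and a continuity/bootstrap in $\eta$. One point deserves tightening: the map $\mathbf a_i\mapsto \mathbf a_i^\ast A\mathbf a_i$ is \emph{not} Lipschitz (its gradient is $2A\mathbf a_i$, which is unbounded), so Herbst does not apply directly as you phrase it. In the cited papers this is handled either by a Hanson--Wright/Wright-type tail bound for quadratic forms (this is exactly Proposition~\ref{prop:HW} in the appendix here, coming from \cite{HW,Wr}), or by first localizing $\|\mathbf a_i\|$ with high probability and then applying Lipschitz concentration on that event. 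Either fix is routine, but as written your ``Lipschitz with constant $\|A\|_{\mathrm{HS}}/\sqrt N$'' line is not literally correct and should be replaced by one of these two mechanisms. The rest of the outline---in particular the bootstrap feeding $\mathrm{Im}\,m^{(i)}$ from the previous scale back into the concentration estimate, and the edge stability with the degenerating factor $1-(z+m_{sc})^{-2}$---matches the original argument.
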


As a corollary of the local semicircle law, the number of eigenvalues up to a fixed energy $E$ 
can be estimated. 
The precise  statement is the following proposition and it was proven in 
 Proposition 4.2, equation
(4.14)  of  \cite{ERSY}.

\begin{proposition}
  Assume that the  distribution  $\nu$ of the matrix elements of the symmetric
Wigner matrix ensemble satisfies \eqref{logsob} with some finite constant $\theta$. 
Let
\be
 n(E): = \frac{1}{N}\E\#\{ x_j\le E\}
\label{ndefi}
\ee
be the expectation of the empirical distribution function of the eigenvalues
and recall the definition of $n_{sc}(E)$ from \eqref{def:Nsc}.
Then there exists a constant $C >0$, depending only on
the constant $\theta$ in \eqref{logsob}, such that
\be
  \int_{-\infty}^\infty |n(E) - n_{sc} (E)| \rd E  \leq \frac{C}{N^{6/7}} \, .
\label{NNint}
\ee
\end{proposition}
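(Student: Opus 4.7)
The plan is to derive the $L^1$ estimate from the pointwise Stieltjes transform bound in Theorem \ref{thm:semi} via the Poisson representation. Set $F(E) := n(E) - n_{sc}(E)$ and let $P_\eta(y) := \eta/[\pi(y^2 + \eta^2)]$ denote the Poisson kernel. The identity $\frac{1}{\pi}\mathrm{Im}\,\E m(E+i\eta) = \int P_\eta(E-x)\,\rd n(x)$, together with its analogue for $m_{sc}$, yields
\[
\frac{1}{\pi}\mathrm{Im}\bigl(\E m(E+i\eta) - m_{sc}(E+i\eta)\bigr) = (F' * P_\eta)(E),
\]
so control on $\E m - m_{sc}$ at height $\eta$ translates (after one integration in $E$) into an $L^1$ bound on $F * P_\eta$. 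The smoothing error $\|F - F*P_\eta\|_{L^1}$ is at most $C\eta$, since $F'$ is a signed measure of total variation $\le 2$ whose support is contained in a fixed compact interval with probability $1 - \exp(-N^c)$ by \eqref{gauss} and standard bounds on $\|H\|$.

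To estimate $|\E m - m_{sc}|$, I take expectations in \eqref{Pm}, using the deterministic bound $|m(E+i\eta)| \le 1/\eta$ on the exceptional event. This yields
\[
\bigl|\E m(E+i\eta) - m_{sc}(E+i\eta)\bigr| \le \delta + \frac{C}{\eta}\exp\!\bigl(-c\delta\sqrt{N\eta\,(2-|E|)}\bigr)
\]
for all $\delta\in(0,\delta_0]$, and choosing $\delta$ slightly larger than $(N\eta(2-|E|))^{-1/2}$ gives $|\E m - m_{sc}| \lesssim (N\eta(2-|E|))^{-1/2}$ up to logarithmic factors, uniformly in the bulk $|E|\le 2-\kappa$.

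Combining these ingredients produces a schematic bound
\[
\int_\R |F(E)|\,\rd E \le C\eta \;+\; \frac{C(\log N)^{C}}{\sqrt{N\eta}}\int_{|E|\le 2-\kappa}\frac{\rd E}{\sqrt{2-|E|}} \;+\; (\text{edge contribution}),
\]
where the edge contribution, coming from $|E|\ge 2-\kappa$, is handled by the a priori confinement of all eigenvalues to an $O(N^{-2/3})$-neighbourhood of $[-2,2]$ together with the square-root vanishing of $\varrho_{sc}$ at the endpoints. Optimizing the free parameter $\eta$ against the bulk and edge errors yields the exponent $6/7$.

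The main obstacle is obtaining an exponent as large as $6/7$ rather than the naive $1/3$ that arises from balancing $\eta$ against $(N\eta)^{-1/2}$ in the bulk alone. This requires exploiting the $\sqrt{2-|E|}$ weight in a quantitative way: the same factor that makes the local semicircle estimate degenerate near the edge is compensated by the vanishing of $\varrho_{sc}$ and by tight deterministic control on extremal eigenvalues. The sharp arithmetic $6/7 = 1 - 1/7$ emerges from this combined balance, and tracking it carefully through the three competing error terms is the technical heart of the argument.
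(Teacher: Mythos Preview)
The paper does not prove this proposition; it simply cites Proposition~4.2, equation~(4.14) of \cite{ERSY}. So there is no in-paper argument to compare against. The question is whether your outline actually produces the exponent $6/7$.

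It does not, at least not as written. Your own account makes this plain: the balance of the smoothing error $\eta$ against the bulk Stieltjes error $C(N\eta)^{-1/2}\int_{|E|\le 2-\kappa}(2-|E|)^{-1/2}\,\rd E$ gives only $N^{-1/3}$, and you then assert that ``tracking it carefully through the three competing error terms'' upgrades this to $6/7$. But no mechanism for the upgrade is supplied. The weight $(2-|E|)^{-1/2}$ is integrable, so it contributes only a constant to the bulk term; it cannot by itself improve the exponent. To beat $1/3$ from this starting point one needs either a variable-scale smoothing (letting $\eta$ depend on the distance to the edge), a different integral representation that avoids the loss in passing from $\mathrm{Im}\,(\E m-m_{sc})$ to $F*P_\eta$ via an antiderivative, or an entirely different route. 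None of these is sketched.

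There is also a concrete overreach: you invoke ``a priori confinement of all eigenvalues to an $O(N^{-2/3})$-neighbourhood of $[-2,2]$'' to control the edge. Within the toolkit available here (and in \cite{ERSY}), the available bound is Lemma~\ref{lm:trunca}, which gives only $|\alpha_1|,|\alpha_N|\le 2+CN^{-1/4+\delta}$; optimal $N^{-2/3}$ edge rigidity is a much later result and cannot be assumed. Finally, the claim $\|F-F*P_\eta\|_{L^1}\le C\eta$ is not literally correct because $\int P_\eta(y)|y|\,\rd y$ diverges; one picks up a logarithmic factor from the Cauchy tails, harmless in the end but worth flagging. In short, your outline is a plausible framework, but the step that would produce $6/7$ rather than $1/3$ is asserted, not argued, and one of the inputs you rely on for the edge is stronger than what is available.
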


The local semicircle law implies that the local density of eigenvalues  is bounded, 
but the estimate in  Theorem \ref{thm:semi}
deteriorates near the spectral edges. The following upper bound on the
number of eigenvalues in an interval provides a uniform control near the edges. 
This lemma was essentially proved in Theorem 4.6 of \cite{ESY3}
using ideas from an earlier version, Theorem 5.1 of \cite{ESY1}. 
For the convenience of the reader, a detailed proof is given in the Appendix \ref{C}.

\begin{lemma}\label{lm:upper} [Upper bound on the number of eigenvalues]
Consider a Wigner matrix with single entry distribution $\nu$ that satisfies
the logarithmic Sobolev inequality \eqref{logsob} with some constant $\theta$.
 Let  ${\cal N}_I$ denote
the number of eigenvalues in an interval $I\subset \R$.
Suppose that $|I|\ge (\log N)^2/N$, then there exist positive
constants $C$, $c$ and $K_0$, depending only on 
$\theta$, such that
\be\label{upbound}
 \P({\cal N}_I\ge KN|I|)\le Ce^{-c\sqrt{KN|I|}}
\ee
for any $K\ge K_0$.
\end{lemma}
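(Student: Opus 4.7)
The plan is to reduce the counting bound to a tail estimate on the Stieltjes transform of the empirical spectral measure and then to exploit the Schur complement formula together with a concentration inequality for quadratic forms in sub-Gaussian variables. Set $\eta:=|I|/2$ and let $E$ be the midpoint of $I$. Since $\eta/[(x_j-E)^2+\eta^2]\ge 1/(2\eta)$ whenever $x_j\in I$, one has
\[
 \text{Im}\,m(E+i\eta) \;\ge\; \frac{\cN_I}{N|I|}.
\]
It therefore suffices to prove, uniformly in $E\in\R$ and in $\eta\ge(\log N)^2/(2N)$,
\[
 \P\bigl(\text{Im}\,m(E+i\eta)\ge K\bigr) \;\le\; C e^{-c\sqrt{KN\eta}} \qquad (K\ge K_0),
\]
with $K_0$ an absolute constant.

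The main algebraic input is the Schur complement representation
\[
 G_{ii}(z)^{-1} \;=\; h_{ii} - z - Z_i, \qquad Z_i \;:=\; \sum_{k,l\ne i} h_{ik}h_{il}\,G^{(i)}_{kl}(z),
\]
where $G^{(i)}=(H^{(i)}-z)^{-1}$ is the resolvent of the minor obtained by deleting the $i$-th row and column. The matrix $\text{Im}\,G^{(i)}(z)=\eta[(H^{(i)}-E)^2+\eta^2]^{-1}$ is positive semidefinite, so $\text{Im}\,Z_i\ge 0$, and consequently
\[
 \text{Im}\,G_{ii}(z) \;\le\; |G_{ii}(z)| \;\le\; \frac{1}{\eta+\text{Im}\,Z_i} \;\le\; \frac{1}{\text{Im}\,Z_i}.
\]
If $\text{Im}\,Z_i\ge K/4$ held for every $i$, averaging would give $\text{Im}\,m(z)\le 4/K$, contradicting $\text{Im}\,m(z)\ge K$ once $K\ge K_0:=3$. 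Thus it remains to control the probability that $\text{Im}\,Z_i<K/4$ for some $i$ on $\{\text{Im}\,m\ge K\}$.

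Conditionally on $H^{(i)}$, $Z_i=N^{-1}X^T G^{(i)}X$, where $X=(x_{ik})_{k\ne i}=\sqrt{N}(h_{ik})$ is a vector of independent sub-Gaussian entries (sub-Gaussianity being a consequence of the LSI \eqref{logsob}), with conditional mean $\text{Im}\,m^{(i)}(z)+O(1/N)$ after taking imaginary parts. A standard resolvent identity together with Cauchy interlacing yields $|\text{Im}\,m(z)-\text{Im}\,m^{(i)}(z)|\le C/(N\eta)$, so on $\{\text{Im}\,m\ge K\}$ with $KN\eta\gg1$ one has $\text{Im}\,m^{(i)}\ge K/2$. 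The Hanson--Wright inequality applied to $A_i=N^{-1}\text{Im}\,G^{(i)}$, whose norms satisfy
\[
 \|A_i\|_{\mathrm{op}}\le \frac{1}{N\eta}, \qquad \|A_i\|_F^2 \;\le\; \frac{\text{Im}\,m^{(i)}}{N\eta},
\]
then gives, for $t=\text{Im}\,m^{(i)}/4$,
\[
 \P\bigl(|\text{Im}\,Z_i-\text{Im}\,m^{(i)}|\ge t\;\bigl|\;H^{(i)}\bigr) \;\le\; C\exp\bigl(-cN\eta\,\text{Im}\,m^{(i)}\bigr),
\]
which on $\{\text{Im}\,m\ge K\}$ is at most $C\exp(-cKN\eta)$.

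Union-bounding over $i=1,\dots,N$,
\[
 \P(\text{Im}\,m(z)\ge K) \;\le\; NC e^{-cKN\eta} \;\le\; C e^{-c'KN\eta} \;\le\; Ce^{-c'\sqrt{KN\eta}},
\]
where the factor $N$ is absorbed using $KN\eta\ge K_0(\log N)^2/2\gg\log N$ valid for $K\ge K_0$. Combining with the reduction in the first paragraph and $N\eta=N|I|/2$ yields \eqref{upbound}. The principal obstacle is the quadratic-form concentration step: one has to track carefully the operator and Hilbert--Schmidt norms of $A_i$ under the $N^{-1/2}$ scaling of the matrix entries, and to confirm that sub-Gaussian Hanson--Wright concentration is available at these scales (this is exactly where the LSI assumption enters, via sub-Gaussianity of $x_{ik}$); the elementary interlacing correction $|\text{Im}\,m-\text{Im}\,m^{(i)}|\le C/(N\eta)$ is a small error kept subordinate to $K$ by the hypothesis $|I|\ge(\log N)^2/N$.
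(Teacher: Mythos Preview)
Your proof is correct and follows essentially the same approach as the paper: reduce the count to $\mathrm{Im}\,m$, apply the Schur complement to $G_{ii}$, use interlacing to transfer to the minor, and invoke Hanson--Wright concentration for the quadratic form $Z_i$, followed by a union bound over $i$. The only notable difference is that the paper restricts the quadratic form to the eigenvectors of $H^{(i)}$ with eigenvalues in $I$ (a rank-$m$ projection, $m\ge \cN_I-1$) and applies a Frobenius-norm-only version of Hanson--Wright, yielding $e^{-c\sqrt{m}}$; you instead keep the full $A_i=N^{-1}\mathrm{Im}\,G^{(i)}$ and use the standard Hanson--Wright with both $\|A_i\|_{\mathrm{op}}\le (N\eta)^{-1}$ and $\|A_i\|_F^2\le \mathrm{Im}\,m^{(i)}/(N\eta)$, which in fact produces the stronger exponent $e^{-cKN\eta}$ before you weaken it to match \eqref{upbound}.
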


We remark that for  Theorem \ref{thm:semi} and Lemma \ref{lm:upper}
it is sufficient to assume only the Gaussian decay condition \eqref{gauss}
instead of the logarithmic Sobolev inequality \eqref{logsob}.

\bigskip

We can now start to prove  Lemma \ref{lm:bbound}. 
Since the logarithmic Sobolev inequality  holds for $\nu$ \eqref{logsob}, it
also holds for  $\nu_t$  as well; for a proof
see  Lemma \ref{lm:lsiconv} in Appendix \ref{sec:lsi}
and recall that
 $\nu_t$  is the convolution of $\nu$ with the Ornstein-Uhlenbeck kernel which itself
satisfies the logarithmic Sobolev inequality.
 Moreover, the LSI constant of $\nu_t$
is bounded uniformly for all $t>0$, since it
is the maximum of the LSI constant $\theta$ of $\nu$ and
the LSI constant of the Ornstein-Uhlenbeck kernel, which is bounded uniformly in time.
Therefore Lemma \ref{lm:bbound}
follows immediately from its time independent version:

\begin{lemma}\label{lm:timeindep} Suppose that the distribution $\nu$ of the 
matrix elements of the symmetric
Wigner ensemble satisfies \eqref{logsob} with some
finite constant $\theta$. Then there exist positive
constants $\fa$, $\fb$  and $\fc$
such that
\be
   \E \Bigg[\frac{1}{N}\sum_{j=1}^N |x_j-\gamma_j|\Bigg]^2 \le N^{-1-2\fa}
\label{Qest}
\ee
and
\be 
    \P\Big\{ \max_{j=1,\ldots, N} |x_j-\gamma_j|\ge N^{-\fb} \Big\}
   \le \exp \big[ - N^\fc \big]
\label{xgamma1}
\ee
hold for the eigenvalues $x_j$ of $H$ and
for any $N\ge N_0= N_0(\theta, \fa,\fb,\fc)$.
\end{lemma}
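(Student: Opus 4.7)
The plan is to deduce both \eqref{Qest} and \eqref{xgamma1} from the local semicircle law (Theorem \ref{thm:semi}) together with the edge bound (Lemma \ref{lm:upper}), via the standard passage ``Stieltjes transform $\Rightarrow$ counting function $\Rightarrow$ eigenvalue location''. The bridge between sums of the form $\sum_j |x_j-\gamma_j|$ and integrals of counting functions is the elementary layer-cake identity
\[
  \frac{1}{N}\sum_{j=1}^N |x_j - \gamma_j| \;=\; \int_{\R} |n_N(E) - n_{sc}(E)|\,\rd E,
\]
where $n_N(E) := N^{-1}\#\{j : x_j \le E\}$ denotes the empirical counting function. Through this identity, \eqref{Qest} is reduced to an $L^2(\P)$ bound on $\int |n_N - n_{sc}|\,\rd E$.

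To establish the uniform bound \eqref{xgamma1}, I would fix a small $\sigma>0$ and set $y := N^{-1+\sigma}$. Theorem \ref{thm:semi} applied at scale $y$ for any bulk energy $E$ with $|2-|E||\ge c>0$ yields
\[
  \P\bigl(|m(E+iy) - m_{sc}(E+iy)| \ge \delta\bigr) \;\le\; C\exp\!\bigl[-c\delta\, N^{\sigma/2}\bigr],
\]
which is subexponentially small in $N$ for any fixed $\delta>0$. A union bound over a polynomially fine grid of $E$-values, combined with the polynomial Lipschitz continuity of $m(\cdot+iy)$, upgrades this to a uniform bulk control $\sup_E |m(E+iy) - m_{sc}(E+iy)|\le\delta$ on an event of probability at least $1 - \exp(-N^{\fc})$. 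Converting this Stieltjes transform estimate into a counting function estimate via the Helffer--Sj\"ostrand representation (or a direct Poisson-smoothing argument), I obtain $|n_N(E) - n_{sc}(E)|\le CN^{-\sigma'}$ uniformly on each bulk interval, on the same good event. Since $n_{sc}$ has derivative $\varrho_{sc}(\gamma_j)$ bounded below in any bulk subinterval, this translates to $|x_j-\gamma_j|\le N^{-\fb}$ whenever $\gamma_j$ lies in the bulk.

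For indices $j$ whose classical location is close to $\pm 2$, I would use Lemma \ref{lm:upper} applied to intervals $[2-N^{-\kappa},\,2]$ (and its mirror) for a suitable $\kappa>0$: this shows that with probability at least $1-e^{-N^{\fc}}$ no edge interval of length $\ell\ge (\log N)^2/N$ contains more than $KN\ell$ eigenvalues, forcing the first and last few eigenvalues to cluster within a window of size $N^{-\kappa}$ of the spectral edge; the fact that no eigenvalue escapes $[-2-N^{-\kappa},\,2+N^{-\kappa}]$ with subexponential probability follows from standard operator-norm bounds for Wigner matrices under the Gaussian decay \eqref{gauss}. Combining the bulk bound of the previous paragraph with this edge rigidity proves \eqref{xgamma1} for some $\fb$ slightly smaller than $\min(\sigma,\kappa)$. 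For \eqref{Qest} one integrates the pointwise bound $|n_N - n_{sc}|\le N^{-\sigma'}$ over the bulk and uses the edge control to estimate the $O(N^{-\kappa})$ contribution to $\int|n_N - n_{sc}|$ near $\pm 2$; squaring and taking expectation (with the complementary event's contribution absorbed into the subexponential tail, using the crude deterministic bound $\sum_j|x_j-\gamma_j|\le CN(1+\|H\|)$) yields the required $N^{-1-2\fa}$.

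The main obstacle is the degeneration of the local semicircle law near the spectral edge: the exponent $\sqrt{N|y|\,|2-|E||}$ in \eqref{Pm} vanishes as $|E|\to 2$, so the choice of $y$ that works in the bulk fails there, and a separate, coarser edge argument via Lemma \ref{lm:upper} must be stitched in. This forced compromise between the bulk scale $\sigma$ and the edge scale $\kappa$ is what limits the attainable values of $\fa$, $\fb$, and $\fc$ in the statement.
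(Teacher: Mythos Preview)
Your approach is different from the paper's, and while it is in principle workable (it is essentially the route taken in the later rigidity paper \cite{EYY3}), it bypasses the specific machinery the authors actually build here.

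The paper does \emph{not} convert the Stieltjes-transform control of Theorem~\ref{thm:semi} directly into a random counting-function bound via Helffer--Sj\"ostrand. Instead it splits $x_j-\gamma_j=(x_j-\alpha_j)+(\alpha_j-\gamma_j)$ with $\alpha_j:=\E x_j$, and treats the two pieces by entirely separate mechanisms. The fluctuation $|x_j-\alpha_j|$ is controlled by the Lipschitz property $|\nabla x_j|\le CN^{-1/2}$ and the Bobkov--G\"otze concentration inequality coming from the LSI hypothesis \eqref{logsob}; this already gives $|x_j-\alpha_j|\le N^{-1/2+\e}$ with subexponential probability (Proposition~\ref{prop:bob}), and an averaging-plus-gap argument (Proposition~\ref{prop:fluc}, using Lemma~\ref{lm:gaptail}) sharpens this for bulk indices. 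The deviation $|\alpha_j-\gamma_j|$ is then handled deterministically via the layer-cake identity you wrote, but applied to the \emph{expected} counting function $\wt n(E)=N^{-1}\#\{\alpha_j\le E\}$ rather than $n_N$, and compared against $n_{sc}$ through the already-established estimate \eqref{NNint} on $\int|n-n_{sc}|$. The passage from $\frac1N\sum|\alpha_j-\gamma_j|$ to $\max_j|\alpha_j-\gamma_j|$ (Proposition~\ref{prop:xg}) is done by a short pigeonhole argument exploiting the density of the $\gamma_j$'s, not by inverting $n_{sc}$ pointwise.

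What each buys: the paper's route uses the LSI assumption in an essential second way (beyond feeding into the local semicircle law), so it is more tied to the hypotheses but avoids having to develop the Helffer--Sj\"ostrand machinery, which is not available in this paper. Your route is cleaner conceptually and needs only the Stieltjes-transform input, but the conversion step you invoke is nontrivial and would have to be supplied externally. Also, your edge argument is slightly off: Lemma~\ref{lm:upper} gives only an \emph{upper} bound on local eigenvalue counts, which does not by itself force edge eigenvalues to cluster near $\pm2$; you need to combine it with the operator-norm bound (as in Lemma~\ref{lm:trunca}, via \cite{Vu}) for one side and with the bulk counting estimate for the other.
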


The proof of Lemma \ref{lm:timeindep} is divided into two
steps.  In the first step,
Section \ref{sec:fluc},  we estimate the fluctuation of the eigenvalues
$x_j$ around their mean values
using the logarithmic Sobolev inequality. In the second step, Section
\ref{sec:dev},
we estimate the deviation of the mean location of $x_j$
from the classical location $\gamma_j$ using \eqref{NNint}.

\subsection{Fluctuation of the eigenvalues around their mean}\label{sec:fluc}

Denote by
\be
  \al_j  =  \E x_j
\label{def:al}
\ee
the expected location of $x_j$.
We start with an estimate
on the expected location of the extreme eigenvalues:

\begin{lemma}\label{lm:trunca}
Suppose that the probability measure $\nu$ of the matrix entries satisfies
\be
   \int_\R  e^{ c_0 |x| } \rd\nu(x) <\infty
\label{poldec}
\ee
for some $c_0> 0$ (this condition is satisfied, in particular, under 
\eqref{logsob}, see \eqref{gauss}).
 Then for any $\delta>0$ we have
\be
   -2-C_0N^{-1/4+\delta}\le \al_1 < \al_N \le 2+C_0N^{-1/4+\delta}
\label{a1}
\ee
with some constant $C_0$ depending on $\delta$ and $c_0$.
\end{lemma}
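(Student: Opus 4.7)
The plan is to reduce \eqref{a1} to an operator-norm bound $\E \|H\| \le 2 + C_0 N^{-1/4+\delta}$, which implies both inequalities simultaneously since $-\|H\| \le x_1 < x_N \le \|H\|$. I would establish this bound by a standard truncation plus moment-method argument using only the exponential moment assumption \eqref{poldec}.

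\textbf{Truncation.} Set $M := K\log N$ with $K$ a large constant and define $\hat x_{\ell k} := x_{\ell k}\mathbf{1}\{|x_{\ell k}|\le M\}$; let $\hat H$ denote the symmetric matrix with entries $\hat h_{\ell k} = \hat x_{\ell k}/\sqrt N$. A union bound and \eqref{poldec} give $\P(H \ne \hat H) \le N^2 e^{-c_0 M} \le N^{2-c_0 K}$. Since $\E \|H\|^2 \le \E \mathrm{Tr}(H^2) = O(N)$, Cauchy--Schwarz yields
\[
\E \bigl[\|H\|\, \mathbf{1}\{H \ne \hat H\}\bigr] \le (\E\|H\|^2)^{1/2}\,\P(H \ne \hat H)^{1/2} \le C N^{3/2 - c_0 K/2},
\]
which is negligible for $K$ large. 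The small shifts in mean and variance induced by the truncation are of order $e^{-c_0 M}$ and are absorbed in an analogous manner.

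\textbf{Moment bound.} Since $|\hat x_{\ell k}|\le M$, the classical F\"uredi--Koml\'os path-counting argument (see e.g.\ Bai--Silverstein, or Tao's random matrix notes) gives
\[
\E \mathrm{Tr}(\hat H^{2k}) \le C N \cdot 4^k
\]
uniformly in the range $k \le c\sqrt N/M$. Choose $k := \lfloor N^{1/4-\delta/2}\rfloor$; then $kM = O(N^{1/4-\delta/2}\log N) \ll \sqrt N$, so this range is respected. Since $\|\hat H\|^{2k} \le \mathrm{Tr}(\hat H^{2k})$, extracting $2k$-th roots gives
\[
\E \|\hat H\| \le 2\,(CN)^{1/(2k)} = 2\exp\!\Big(\frac{\log(CN)}{2k}\Big) \le 2\exp\!\bigl(C N^{-1/4+\delta/2}\log N\bigr) \le 2 + C_0 N^{-1/4+\delta}
\]
for $N$ large. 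Combined with the truncation step this furnishes $\E\|H\| \le 2 + C_0 N^{-1/4+\delta}$, and hence \eqref{a1}.

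The only nontrivial input is the F\"uredi--Koml\'os moment bound on $\mathrm{Tr}(\hat H^{2k})$. Because the target precision $N^{-1/4+\delta}$ is far weaker than the optimal Tracy--Widom scale $N^{-2/3}$, one needs only the crudest form of this bound — the Catalan leading term together with a trivial estimate controlling walks that revisit an edge — and no delicate combinatorial enumeration is required. This is therefore the main obstacle only in the bookkeeping sense that a reader must take the Wigner-type trace bound as a known tool; the remainder of the argument is entirely elementary.
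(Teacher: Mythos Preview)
Your argument is correct and follows the same overall strategy as the paper: truncate the entries to get a bounded-entry Wigner matrix, invoke a known operator-norm bound for that matrix, and control the truncation error. The paper truncates at level $M=N^{\delta}$, cites Vu's theorem \cite{Vu} as a black box to obtain $x_N \le 2 + CM^{1/2}N^{-1/4}\log N$ almost surely for the truncated ensemble, and then transfers the bound to the original ensemble via a total-variation estimate combined with the crude tail bound $\P(\max_j|x_j|\ge KN)\le Ce^{-cK\sqrt N}$. You instead truncate at $M=K\log N$, handle the truncation error by Cauchy--Schwarz, and sketch the moment computation directly. These are interchangeable implementations of the same idea; your version is slightly more self-contained, while the paper's is shorter by outsourcing the combinatorics.

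One small caution on attribution: the trace bound $\E\,\mathrm{Tr}(\hat H^{2k})\le CN\cdot 4^k$ uniformly for $k$ up to order $\sqrt N/M$ is not quite the original F\"uredi--Koml\'os estimate, which handles only $k=o(N^{1/6})$; for $k\sim N^{1/4}$ one needs the refinement in Vu \cite{Vu} (or the equivalent treatment in the references you cite). Since you already point to Bai--Silverstein and Tao's notes this is fine, but the phrase ``crudest form'' slightly undersells the input --- it is precisely the same combinatorial work the paper is invoking through \cite{Vu}.
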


\begin{proof}
For any $M$, define the probability  measure  
\[
\zeta_M(\rd x) := Z_M^{-1} 1(|x| \le M)\nu(\rd x)
\]
on $\R$, where $Z_M$ is the normalization factor. 
Setting $M = N^{\delta}$ and using \eqref{poldec}, the total variational
norm between $\nu$ and $\zeta_M$ is bounded
\be
\| \zeta_M- \nu\|_\infty \le C e^{-cN^\delta}.
\label{zetadef}
\ee
 Denote by $\zeta_M^{ N}  = \bigotimes_{i\le j} \zeta_M$ ($\nu^{N}$ resp.) the 
probability law of the
random matrices whose matrix  elements are distributed according to $\zeta_M$ 
($\nu$ resp.). As usual, we neglect the fact that distribution $\nu$
should be replaced with $\wt\nu$ for the diagonal elements $i=j$. 
Since the number of index pairs $i\le j$ is of order $N^2$,
the total variational norm between $\zeta_M^{ N} $ and $\nu^N$
is bounded by
\be
\| \zeta_M^{ N} - \nu^{N}\|_\infty \le C N^2 e^{-cN^\delta}.
\label{TVnorm}
\ee
{F}rom Theorem 1.4 of \cite{Vu} we obtain that for any $\delta> 0$
\be
   x_N \le 2 + C M^{1/2} N^{-1/4} \log N \le 2 + N^{-1/4 + 
\delta}
\label{xnupper}
\ee
holds almost surely w.r.t. $\zeta_M^{ N}$. It follows from
\eqref{TVnorm}  that $x_N$ is bounded w.r.t. the distribution
$\nu^N$ as well, up 
to a subexponentially small probability. 
To estimate the tail of $x_N$ w.r.t. $\nu^N$,
we use that $\max_j |x_j|^2\le \mbox{Tr}\, H^2$ and 
the trivial large deviation bound based upon \eqref{poldec},
\be
  \P_{\nu^N}\big( \max_j |x_j|
\ge KN\big) \le \P_{\nu^N}\Big(\sum_{ij} |h_{ij}|^2 \ge (KN)^2 \Big)
  \le N^2 \int {\bf 1}(|y|^2\ge K^2N) \rd\nu(y)\le 
Ce^{-cK\sqrt{N}},
\label{xntail}
\ee
that holds for any $K>0$ with constants $C,c$ depending on $c_0$.
We thus obtain that
the expectations of $x_N$ w.r.t. these two measures satisfy
\[
\left | \E_{ \zeta_M^{ N}} x_N - \E_{\nu^N} x_N  \right |\le  C N^2 e^{-
cN^\delta}.
\]
{F}rom \eqref{xnupper} we also have
\[
\E_{ \zeta_M^{ N}} x_N \le 2 + N^{-1/4 + 
\delta}.
\]
Thus we have proved that, for any $\delta> 0$,
\[
\al_N=\E_{\nu^N} x_N  \le 2 + CN^{-1/4 + \delta}
\]
with some constant $C$ depending on $c_0$ and $\delta$.
Similar lower bound holds for $\al_1$.
\end{proof}

\medskip

Next we estimate the fluctuations of $x_j$:

\begin{proposition}\label{prop:bob} For any $\e>0$ we have
\be\label{3.1}
\P \Big( \max_j |x_j - \al_j | \ge N^{-1/2+\e}\Big) \le C e^{-cN^{\e}}.
\ee
with a constant $C$ depending on $\e$ and $\theta$.
\end{proposition}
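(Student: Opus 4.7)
The plan is to obtain the concentration bound via a standard Herbst-type argument based on the logarithmic Sobolev inequality applied to each individual eigenvalue viewed as a Lipschitz function of the matrix entries. The three ingredients are: (i) LSI for the joint law of the independent entries $\{x_{\ell k}\}_{\ell\le k}$; (ii) a good Lipschitz bound on the map $\{x_{\ell k}\}\mapsto x_j(H)$; and (iii) Herbst's concentration inequality followed by a union bound over $j$.

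First I would invoke the tensorization property of LSI: since each marginal $\nu$ (and $\wt\nu$) satisfies \eqref{logsob} with constant $\theta$, the product measure on $\R^{N(N+1)/2}$ for the independent entries satisfies LSI with the same constant $\theta$. Next I would bound the Lipschitz constant of $x_j$ regarded as a function of the independent entries. By the Hoffman--Wielandt / Weyl inequality, the map $H\mapsto x_j(H)$ is $1$-Lipschitz in the Hilbert--Schmidt norm, and the scaling $h_{\ell k}=N^{-1/2}x_{\ell k}$ together with the doubling for off-diagonal entries gives
\[
   |x_j(H)-x_j(H')| \le \|H-H'\|_{HS} \le \sqrt{2/N}\,\Big(\sum_{\ell\le k}|x_{\ell k}-x'_{\ell k}|^2\Big)^{1/2}.
\]
Hence $x_j$ is Lipschitz with constant $L=\sqrt{2/N}$ in the Euclidean norm on the space of independent entries.

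Now I would apply Herbst's lemma: for a probability measure satisfying LSI with constant $\theta$ and an $L$-Lipschitz function $F$ one has
\[
   \P\bigl(|F-\E F|\ge t\bigr) \le 2\exp\!\Big(-\tfrac{t^2}{2\theta L^2}\Big).
\]
With $F=x_j$, $\E F=\alpha_j$, $L=\sqrt{2/N}$, this yields
\[
   \P\bigl(|x_j-\alpha_j|\ge t\bigr)\le 2\exp\!\Big(-\tfrac{Nt^2}{4\theta}\Big),
\]
and choosing $t=N^{-1/2+\e}$ gives the single-eigenvalue bound $2\exp(-cN^{2\e})$ with $c=1/(4\theta)$. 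A union bound over $j=1,\ldots,N$ absorbs the linear prefactor into the exponential (since $N\exp(-cN^{2\e})\le C\exp(-cN^\e)$ for large $N$), yielding exactly \eqref{3.1}.

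There is no real obstacle here; the only point to justify carefully is the Lipschitz constant with the correct $N^{-1/2}$ scaling (accounting for the symmetry and the diagonal vs.\ off-diagonal variance convention), and the tensorization of LSI for the product of $\nu$'s and $\wt\nu$'s. Note that this proposition gives concentration around the \emph{mean} $\alpha_j$; converting this into rigidity around the deterministic classical location $\gamma_j$ (i.e., $|\alpha_j-\gamma_j|$ small) is the content of the next step in Section~\ref{sec:dev}, and will use \eqref{NNint} rather than LSI.
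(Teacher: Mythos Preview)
Your proposal is correct and follows essentially the same route as the paper: tensorize LSI to the product measure on entries, bound the Lipschitz constant of $x_j$ by $CN^{-1/2}$, apply the standard LSI $\Rightarrow$ subgaussian concentration implication, and take a union bound. The only cosmetic differences are that the paper obtains the gradient bound $|\nabla x_j|^2\le C/N$ via first-order perturbation theory (the formula $\partial x_j/\partial h_{\ell k}=\bu_j(\ell)\bu_j(k)$) rather than Hoffman--Wielandt/Weyl, and cites the concentration step as the Bobkov--G\"otze inequality rather than Herbst's lemma; these are the same argument in different clothing.
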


\begin{proof} First order perturbation theory of the eigenvalue $x_j$ of $H$ shows that
$$
   |\nabla x_j|^2: = \sum_{\ell, k} \Big| \frac{\pt x_j}{\pt x_{\ell k}}\Big|^2
  = \frac{1}{N}\sum_{\ell,k} \Big| \frac{\pt x_j}{\pt h_{\ell k}}\Big|^2
  \le \frac{C}{N}\sum_{\ell,k} | \bu_j(k)\bu_j(\ell)|^2 = \frac{C}{N},
 $$
where $x_{\ell k}=N^{1/2} h_{\ell k}$ are the unscaled random variables,
see \eqref{scaling}, and $\bu_j=(\bu_j(1), \bu_j(2), \ldots , \bu_j(N))$
is the normalized eigenvector belonging to $x_j$.

 Using \eqref{logsob} and  the Bobkov-G\"otze concentration 
inequality (Theorem 2.1 of \cite{BG}), 
we have for any $T>0$
$$
\P\Big( \max_j |x_j - \E x_j | \ge \gamma\Big)
\le 2N \max_j \P  \Big( x_j - \E x_j  \ge \gamma\Big)
\le 2N e^{-\gamma T} \max_j \E \, e^{ \theta T^2 |\nabla x_j|^2}
\le CNe^{-c\gamma^2N}
$$
after optimizing for $T$ and using that $|\nabla x_j| \leq C N^{-1/2}$ from above.
This proves \eqref{3.1}.
\end{proof}

\medskip

The following proposition is a refinement of Proposition \ref{prop:bob}:

\begin{proposition}\label{prop:fluc} Fix a sufficiently small constant 
$\delta>0$  and
set $\kappa = N^{-1/18+\delta}$. Then
for any index $i$ with  $CN\kappa^{3/2}\le i\le N(1-C\kappa^{3/2})$  we have
\be
 \P \Big( |x_i-\al_i| \ge N^{-5/9+2\delta} \Big)\le C e^{-cN^{\delta}}
\label{xifluc}
\ee
and
\be
\P\Big( \frac{1}{N}\sum_{i=1}^N |x_i-\al_i| \ge N^{-5/9+2\delta}\Big)\le  C e^{-cN^{\delta}}.
\label{xal}
\ee
The constants $C$ and $c$  depend on $\delta$ and $\theta$ but are independent of $N$.
\end{proposition}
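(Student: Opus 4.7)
The plan is to sharpen Proposition~\ref{prop:bob} by applying the Bobkov--G\"otze concentration inequality not to a single eigenvalue $x_i$ (whose matrix gradient is of order $N^{-1/2}$), but to a local average of eigenvalues whose gradient is much smaller, and then to control the averaging error using the density upper bound of Lemma~\ref{lm:upper}. The improvement over Proposition~\ref{prop:bob} comes precisely from the fact that a sum of many eigenvalues has a significantly smaller gradient with respect to the matrix entries than any single one.

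For a bulk index $i$ with $CN\kappa^{3/2}\le i\le N(1-C\kappa^{3/2})$ and a scale $L$ to be chosen, I would introduce the local average
$$
\bar x_i := \frac{1}{2L+1}\sum_{|j-i|\le L} x_j = \frac{1}{2L+1}\,\mathrm{Tr}(HP_{i,L}),
$$
where $P_{i,L}$ denotes the spectral projection of $H$ onto eigenvectors with indices in $[i-L,i+L]$. First-order perturbation theory gives $\pt_{x_{\ell k}}x_j = N^{-1/2}\bu_j(\ell)\bu_j(k)$, and the identity $\sum_{\ell,k}|(P_{i,L})_{\ell k}|^2 = \mathrm{Tr}(P_{i,L}^2) = 2L+1$ then yields
$$
|\nabla\bar x_i|^2 \le \frac{1}{N(2L+1)},
$$
which is a factor $(2L+1)^{-1/2}$ smaller than the pointwise bound used in the proof of Proposition~\ref{prop:bob}. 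The Bobkov--G\"otze inequality under the LSI assumption~\eqref{logsob} then produces $\P(|\bar x_i - \E\bar x_i|\ge r)\le C\exp(-cr^2 NL)$, which is $\le Ce^{-cN^\delta}$ as soon as $r\ge N^{\delta/2}(NL)^{-1/2}$.

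Next I would control the averaging error $|x_i - \bar x_i|\le \max_{|j-i|\le L}|x_j - x_i|$ through Lemma~\ref{lm:upper}: the density upper bound implies that any interval containing at least $L$ eigenvalues has length at most $CL/(N\varrho_{sc}(\gamma_i))$, modulo a subexponentially small failure event. The bulk restriction $i\ge CN\kappa^{3/2}$ places $\gamma_i$ at distance $\ge c\kappa$ from the spectral edges, where $\varrho_{sc}(\gamma_i)\ge c\sqrt{\kappa}$, and hence $|x_i - \bar x_i|\le CL/(N\sqrt{\kappa})$ off a subexponentially small event. The analogous deterministic bound $|\alpha_i - \E\bar x_i| \le CL/(N\sqrt{\kappa})$, obtained by taking expectations, combined with the Bobkov--G\"otze estimate, yields
$$
|x_i - \alpha_i| \le \frac{CL}{N\sqrt{\kappa}} + \frac{N^{\delta/2}}{\sqrt{NL}}
$$
with the required probability. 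Balancing the two terms by an appropriate polynomial choice of $L$ gives the pointwise bound~\eqref{xifluc}. For the averaged estimate~\eqref{xal}, I would sum the pointwise bound over the bulk indices and control the at most $O(N\kappa^{3/2})$ edge indices by combining Lemma~\ref{lm:trunca} (giving $|x_i|\le 2+CN^{-1/4+\delta}$ with overwhelming probability) with the observation that $\kappa^{3/2}$ multiplied by any such polynomial bound is negligible compared with the target~$N^{-5/9+2\delta}$.

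The main obstacle is converting Lemma~\ref{lm:upper} (which is only a density \emph{upper} bound, valid above scale $(\log N)^2/N$) into a useful bound on the eigenvalue spread $|x_{i\pm L}-x_i|$: extracting a spread estimate forces one to divide by the local semicircle density, which degrades as $1/\sqrt{\kappa}$ near the edge. This degradation is precisely what generates both the bulk restriction $i\ge CN\kappa^{3/2}$ and the $-5/9$ exponent, which would improve to $-2/3$ if one had a genuine two-sided rigidity estimate on the eigenvalue gaps. A secondary technical point is verifying that $\E\bar x_i$ is close to $\alpha_i$; this follows by integrating the same spread estimate, using that the tail event contributes only subexponentially after Lemma~\ref{lm:trunca}.
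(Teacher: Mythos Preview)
Your overall strategy---replace $x_i$ by a local average $\bar x_i$, apply LSI/Bobkov--G\"otze concentration to $\bar x_i$ (whose gradient is smaller by a factor $L^{-1/2}$), and then control the averaging error $|x_i-\bar x_i|$---is exactly the route the paper takes. The gradient computation $|\nabla\bar x_i|^2\le C/(N(2L+1))$ and the resulting concentration bound are correct and match the paper's step~\eqref{8.3}.

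The genuine gap is in your control of $|x_i-\bar x_i|$. You invoke Lemma~\ref{lm:upper}, but a density \emph{upper} bound goes the wrong way: if $\cN_I\le KN|I|$, then an interval containing $L$ eigenvalues has length at \emph{least} $cL/N$, not at most. To bound $|x_{i\pm L}-x_i|$ from above you need a density \emph{lower} bound, equivalently an upper bound on gaps. The paper supplies this via Lemma~\ref{lm:gaptail} (derived from the local semicircle law, Theorem~\ref{thm:semi}): with very high probability $x_{j+1}-x_j\le M/N$ for each $j$ in the window, yielding $|x_i-\bar x_i|\le KM/N$ as in \eqref{xi}. Note that Lemma~\ref{lm:gaptail} only applies once $x_i$ is known to sit at distance $\ge c\kappa^2$ from the spectral edge; the paper uses Lemma~\ref{lm:upper} (together with Lemma~\ref{lm:trunca} and Proposition~\ref{prop:bob}) for precisely this preliminary localization, not for the spread bound itself.

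This also explains why your claimed spread bound $CL/(N\sqrt\kappa)$ is too optimistic. The gap-tail estimate \eqref{gap} degrades near the edge like $e^{-c\kappa^3\sqrt{M}}$ (after the $\kappa^2$-localization), so to reach failure probability $e^{-cN^\delta}$ one must take $M\ge N^{2\delta}\kappa^{-6}$, i.e.\ $M\sim N^{1/3}$ for $\kappa=N^{-1/18+\delta}$. Balancing $KM/N$ against $N^{-1/2+2\delta}K^{-1/2}$ then forces $K=N^{1/9}$ and produces the exponent $-5/9$; your balancing would give a strictly better exponent, which is a sign the input was too strong.
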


As a preparation for the proof of Proposition \ref{prop:fluc}, we need
the following estimate on the tail of the gap distribution.

\begin{lemma}\label{lm:gaptail}  Let  $|E|<2$. Denote by
$x_\al$ the largest eigenvalue below $E$ and assume
that $\al\leq N-1$. Then there exist positive constants $C$ and $c$,
depending only  on the Sobolev constant $\theta$ in \eqref{logsob},
such that for any $M$ that satisfy $c(\log N)^4/(2-|E|) \le M \le CN(2-|E|)$, we 
have
\be\label{gap}
   \P \Big(x_{\al+1} -E\ge \frac{M}{N}, \; \al\leq N-1\Big)
\leq C\; e^{-c  [2-|E|]^{3/2}  \sqrt{M}}. 
\ee
\end{lemma}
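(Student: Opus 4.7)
The plan is to derive a contradiction between Theorem \ref{thm:semi} and the constraint that the gap event imposes on the empirical Stieltjes transform. Set $E_0 := E + M/(2N)$; the hypothesis $M \leq CN(2-|E|)$ keeps $E_0$ in the bulk with $\varrho_{sc}(E_0) \asymp \sqrt{2-|E|}$. Choose the probe scale $\eta_* := (\log N)^4/N$, the minimum allowed by Theorem \ref{thm:semi}. On the event $\{x_{\alpha+1} - E \geq M/N\}$ every eigenvalue satisfies $|x_j - E_0| \geq M/(2N) \gg \eta_*$, so the contribution that the semicircle density would place inside $(E, E+M/N)$, of order $\sqrt{2-|E|}\cdot\arctan(M/(2N\eta_*)) \sim \sqrt{2-|E|}$, is absent from the empirical sum defining $\text{Im}\, m(E_0 + i\eta_*)$.

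To make this rigorous I would compare opposite bounds on $\text{Im}\, m(E_0 + i\eta_*)$. For the lower bound, Theorem \ref{thm:semi} at $(E_0, \eta_*)$ with deviation $\delta \sim \sqrt{2-|E|}$ gives $\text{Im}\, m(E_0 + i\eta_*) \geq c\sqrt{2-|E|}$ with failure probability $\leq C\exp(-c(2-|E|)(\log N)^2)$. For the opposing upper bound on the gap event, decompose the eigenvalues into dyadic annuli $A_k := \{j : R_k \leq |x_j - E_0| < R_{k+1}\}$ with $R_k := 2^k M/(2N)$ for $k \geq 0$. For each $k$ the elementary bound $\mathcal{N}_{[E_0 - R, E_0 + R]} \leq 2NR\,\text{Im}\, m(E_0 + iR)$ combined with Theorem \ref{thm:semi} at scale $R_{k+1}$ (valid since $R_{k+1} \geq M/(2N) \geq \eta_*$) yields $\mathcal{N}_{A_k} \leq C N R_{k+1} \sqrt{2-|E|}$. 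Summing $\mathcal{N}_{A_k} \cdot \eta_*/(N R_k^2)$ as a geometric series produces
\[
\text{Im}\, m(E_0 + i\eta_*) \leq C'\sqrt{2-|E|}\cdot (\log N)^4/M,
\]
which is strictly below the lower bound $c\sqrt{2-|E|}$ once the constant $c$ in the hypothesis $M \geq c(\log N)^4/(2-|E|)$ is chosen large enough relative to the implicit LSL constants.

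The gap event is therefore contained in the union of the LSL failure events, with total probability bounded by $C\exp(-c(2-|E|)(\log N)^2) + C\exp(-c(2-|E|)\sqrt{M})$ (the geometric dyadic sum being dominated by the $k=0$ term). Both exponents dominate $(2-|E|)^{3/2}\sqrt{M}$ after adjusting constants: the dyadic one via $\sqrt{2-|E|} \leq \sqrt{2}$, and the $\eta_*$ one via the hypothesis $M(2-|E|) \geq c(\log N)^4$, which gives $(\log N)^2 \leq \sqrt{M(2-|E|)/c}$. The main obstacle is the quantitative balance between the upper and lower bounds on $\text{Im}\, m(E_0 + i\eta_*)$, both of order $\sqrt{2-|E|}$: producing a strict inequality requires the dimensionless factor $(\log N)^4/M$ in the upper bound to be smaller than the ratio of implicit LSL constants, forcing the constant in the hypothesis on $M$ to be chosen large enough. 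This simultaneously explains the role of both the upper bound $M \leq CN(2-|E|)$ (keeping $E_0$ in the bulk so $\varrho_{sc}(E_0) \asymp \sqrt{2-|E|}$) and the lower bound $M \geq c(\log N)^4/(2-|E|)$ (ensuring the missing-gap contribution dominates the LSL fluctuation $\delta$) in the statement of the lemma.
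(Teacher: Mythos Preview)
Your strategy is the right one and matches the idea the paper sketches: on the gap event the empirical Stieltjes transform $\text{Im}\,m(E_0+i\eta)$ must fall well below its semicircle value, and Theorem~\ref{thm:semi} makes that a rare event. The dyadic upper bound is also carried out correctly. The problem is the choice $\eta_*=(\log N)^4/N$ for the \emph{lower} bound, and the arithmetic in your final paragraph is simply backwards.

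You correctly derive that the LSL failure at scale $\eta_*$ with $\delta\sim\sqrt{2-|E|}$ has probability $\le C\exp\!\big(-c(2-|E|)(\log N)^2\big)$. You then write that the hypothesis $M(2-|E|)\ge c(\log N)^4$ gives $(\log N)^2\le\sqrt{M(2-|E|)/c}$, and claim this shows the exponent $(2-|E|)(\log N)^2$ ``dominates'' $(2-|E|)^{3/2}\sqrt{M}$. But multiplying your inequality by $(2-|E|)$ yields
\[
(2-|E|)(\log N)^2 \;\le\; (2-|E|)^{3/2}\sqrt{M/c},
\]
which is the \emph{wrong direction}: your probability bound $\exp(-c(2-|E|)(\log N)^2)$ can be much \emph{larger} than the target $\exp(-c'(2-|E|)^{3/2}\sqrt{M})$. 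Concretely, take $2-|E|=1$ and $M=N^{1/2}$ (allowed by the hypotheses); then $(\log N)^2\ll N^{1/4}=\sqrt{M}$ and your bound is useless.

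The fix is to probe at scale $\eta\sim M/N$ (say $\eta=M/(AN)$ for a large absolute constant $A$) rather than at the minimal scale $\eta_*$. Then the LSL at $(E_0,\eta)$ with $\delta\sim\sqrt{2-|E|}$ already gives failure probability $\le C\exp\!\big(-c(2-|E|)\sqrt{M}\big)$, which \emph{does} dominate the target. Your same dyadic computation now produces the upper bound $\text{Im}\,m(E_0+i\eta)\le C'\sqrt{2-|E|}\cdot (N\eta/M)=C'\sqrt{2-|E|}/A$ on the gap event, and choosing $A$ large relative to the LSL constants yields the contradiction. The lower bound $M\ge c(\log N)^4/(2-|E|)$ is then exactly what is needed to ensure $\eta=M/(AN)\ge(\log N)^4/N$, so that Theorem~\ref{thm:semi} applies at this scale.
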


This lemma was proven in Theorem E1 of  \cite{ERSY}, see also  Theorem 3.3 of \cite{ESY3},
and  the proof will not be repeated here. We only mention 
the main idea,  that the local semicircle law,  Theorem \ref{thm:semi},
provides a positive {\it lower bound} on the number of eigenvalues 
in any interval $I$ of size $|I|\ge A(\log N)^4/\big[N\big|2-|x|\big|\big]$
around the point $x$ with a sufficiently large constant $A$.  
In particular, it follows that there is at least one eigenvalue
in each  such interval $I$ with a very high probability. \qed

\bigskip

{\bf Proof of Proposition \ref{prop:fluc}. }
 We  choose $M, K$ positive
numbers, depending on $N$, such that
\be
 M\kappa^6\ge N^{2\delta},
\qquad K \le cN\kappa^{3/2}
\quad\mbox{and}\quad c(\log N)^4/\kappa \le M \le CN\kappa
\label{KM}
\ee
with some sufficiently small $c>0$ and large $C>0$ constants.
Let
\be
  \Phi: = 2N^{-1/2+2\delta} K^{-1/2} + \frac{2KM}{N}.
\label{Phidef}
\ee

Consider an index $i$ with
$  CN\kappa^{3/2} \le
i\le N(1-C\kappa^{3/2})$, then $|2-|\gamma_i||\ge C\kappa$.
We first show that  $|2-|x_i||\ge C\kappa^{2}$
with a very high probability. Suppose, in the contrary, that
$x_i< -2+C\kappa^{2}$
for some $i\ge CN\kappa^{3/2}$
(the case $x_i \ge 2- C\kappa^{2}$ is treated analogously).
{F}rom \eqref{a1} and \eqref{3.1} it follows that $x_1\ge -2-C_0N^{-1/4+\delta}$
with a very high probability. But then the interval
$[-2-C_0N^{-1/4+\delta}, -2+C\kappa^{2}]$ of length $ C_0N^{-1/4+\delta}
+ C\kappa^{2}\ll \kappa^{3/2}$ would contain
$CN\kappa^{3/2}$ eigenvalues,  an event with an extremely low probability by 
\eqref{upbound}.

Knowing that $|2-|x_i||\ge \kappa^{2}$
with a very high probability,
we can use \eqref{gap} to conclude that
for any index $i$ with
$  CN\kappa^{3/2} \le
i\le N(1-C\kappa^{3/2})$ we have
\be\label{expgap}
\P\Big( x_{i+1}- x_i  \ge \frac{M}{N}\Big) \le e^{ - c\sqrt{M\kappa^6}}
+ Ce^{-cN^{\delta}} \le  Ce^{-cN^{\delta}}
\ee
by \eqref{KM}.
Then
\be
\P \left ( \Big|x_i - \frac{1}{2 K + 1} \sum_{|j-i| \le K} x_j \Big|
\ge  \frac{KM}{N}
\right ) \le Ce^{ - cN^{\delta}}.
\label{xi}
\ee
Similarly to the calculation in Theorem 3.1 of \cite{ESY1},
by using the logarithmic Sobolev inequality \eqref{logsob}, we have
\be\label{8.3}
\P\Bigg(
\Big| \frac{1}{2 K + 1} \sum_{j\; : \; |j-i| \le K} x_j -
\frac{1}{2 K + 1} \sum_{j\; : \; |j-i| \le K} \E x_j \Big| \ge
N^{-1/2+2\delta}K^{-1/2}\Bigg) \le Ce^{ - cN^{\delta}}.
\ee
This bound holds for any index $i$ with the remark that
if $i< K$ or $i> N-K$, then the averaging over the indices $j$
is done asymmetrically.

Combining this estimate with \eqref{xi} we have, apart from
a set of very small probability, that
\be\label{7.3}
\Big| x_i- \frac{1}{2 K + 1} \sum_{j \; : \; |j-i| \le K} \al_j\Big| \le
\frac{\Phi}{2}
\ee
for any  $  CN\kappa^{3/2} \le
i\le N(1-C\kappa^{3/2})$.
Taking expectation, and using the tail estimate \eqref{xntail}
to control $x_N$ on the event of very small probability where
\eqref{7.3} may not hold,
 we also obtain for these $i$ indices that
\be\label{aa}
\Big| \al_i- \frac{1}{2 K + 1} \sum_{j \; : \;|j-i| \le K} \al_j\Big| \le
\frac{\Phi}{2}.
\ee
Subtracting the last two inequalities yields  
$$
 \P \Big( |x_i-\al_i| \ge \Phi \Big)\le C e^{-cN^{\delta}}
$$
and combining this bound with the estimate \eqref{3.1}
for the extreme indices, we obtain
$$
\P\Big( \frac{1}{N}\sum_{i=1}^N |x_i-\al_i| \ge C\kappa^{3/2}N^{-1/2+\delta} +
 \Phi \Big)\le  C e^{-cN^{\delta}}.
$$
The inequalities \eqref{xifluc}
and \eqref{xal} now follow if we
 choose the parameters as 
\be
  \kappa = N^{-1/18+\delta}, \quad M=N^{1/3}, \qquad K= N^{1/9},
\label{choice}
\ee
which choice is compatible with the conditions \eqref{KM}.
\qed

\subsection{Deviation of the eigenvalues from their classical 
locations}\label{sec:dev}

The next Proposition \ref{prop:xg} below estimates the distance of
the eigenvalues from their
location given by the semicircle law. This will justify
that the convex extension of the potential $W_j$ affects only
regimes of very small probability.

\begin{proposition}\label{prop:xg}
For any small $\delta>0$ and for any $j=1,2, \ldots N$  we have
\be
  \P \Big( |x_j-\gamma_j| \ge CN^{-1/5+\delta}\Big) \le Ce^{-cN^{\delta}}
\label{xgamma}
\ee
and we also have
\be
    \max_m  |\al_m -\gamma_m |\le C N^{-1/5+\delta}.
\label{ag}
\ee
The constants $C$ and $c$ depend on $\delta$ and $\theta$ but are independent of $N$.
\end{proposition}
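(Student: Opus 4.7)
The plan is to decompose $|x_j-\gamma_j|\le|x_j-\al_j|+|\al_j-\gamma_j|$ and handle the two pieces by different methods. The random fluctuation $|x_j-\al_j|$ is controlled for every $j$ by Proposition \ref{prop:bob}, which yields the bound $N^{-1/2+\delta}$, far smaller than the target $N^{-1/5+\delta}$. Thus \eqref{xgamma} will follow from \eqref{ag}, i.e., from the deterministic estimate $|\al_j-\gamma_j|\le CN^{-1/5+\delta}$ on the expected locations.

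For \eqref{ag} the idea is to convert the $L^1$ bound \eqref{NNint} into a pointwise comparison between $\al_j$ and $\gamma_j$, using the $N^{-1/2+\delta}$ concentration around $\al_j$ to pin down the expected counting function $n(E):=N^{-1}\E\#\{x_i\le E\}$. From Proposition \ref{prop:bob} and the ordering $x_1<x_2<\cdots<x_N$ we obtain, for every $i$,
\[
\P(x_i\le \al_j+N^{-1/2+\delta})\ge 1-e^{-cN^\delta} \text{ if } i\le j, \qquad \P(x_i\ge \al_j-N^{-1/2+\delta})\ge 1-e^{-cN^\delta} \text{ if } i\ge j.
\]
Summing in $i$ yields $n(\al_j\pm N^{-1/2+\delta})=j/N+O(e^{-cN^\delta})$, so $n$ takes the value $j/N$ essentially at $\al_j$, while $n_{sc}(\gamma_j)=j/N$ by definition. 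Assume now $h:=\al_j-\gamma_j>0$ (the opposite sign is symmetric). On $[\gamma_j,\al_j-N^{-1/2+\delta}]$, monotonicity gives $n_{sc}(E)-n(E)\ge\int_{\gamma_j}^{E}\varrho_{sc}(t)\,\rd t-e^{-cN^\delta}$, and integrating in $E$ produces $\int|n-n_{sc}|\,\rd E\gtrsim\int_0^h\int_0^s\varrho_{sc}(\gamma_j+t)\,\rd t\,\rd s$. Comparison with \eqref{NNint} then yields either $h^2\varrho_{sc}(\gamma_j)\lesssim N^{-6/7}$ (when $h$ is small compared with $|2\pm\gamma_j|$, so $\varrho_{sc}$ is essentially constant on the integration interval) or $h^{5/2}\lesssim N^{-6/7}$ (when $h$ is larger, so $\varrho_{sc}(\gamma_j+t)\sim\sqrt{t}$ dominates). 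Using $\varrho_{sc}(\gamma_j)\gtrsim N^{-1/3}$ for all $1\le j\le N-1$ (which follows from $|2\pm\gamma_j|\gtrsim N^{-2/3}$), both regimes give $h\le CN^{-11/42}$ or $h\le CN^{-12/35}$, well below $N^{-1/5+\delta}$.

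The preceding analysis requires the integration interval to lie inside $[-2,2]$. The endpoint case where $\al_j$ exceeds the spectral edge $\pm 2$ cannot be controlled by \eqref{NNint}, since $\varrho_{sc}$ vanishes there and horizontal shifts of $n$ become invisible to the $L^1$ comparison. Instead one invokes Lemma \ref{lm:trunca}, which directly yields $\al_N\le 2+CN^{-1/4+\delta}$ and the symmetric lower bound for $\al_1$. Combining this with the previous bulk argument applied to $j=N-1$ (respectively $j=2$), for which $\gamma_j$ is at distance $cN^{-2/3}$ from $\pm 2$, together with the monotonicity $\al_{N-1}\le\al_N$ and $\al_1\le\al_2$, produces $|\al_j-\gamma_j|\le CN^{-1/5+\delta}$ for $j=1,N$ as well, completing \eqref{ag} and hence \eqref{xgamma}. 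The main obstacle in the whole proof is precisely this edge treatment: the vanishing of $\varrho_{sc}$ at $\pm 2$ prevents the bulk argument from reaching the endpoints, and the sharp extreme-eigenvalue bound of Lemma \ref{lm:trunca} (which itself relies on the truncation argument and Vu's theorem) is needed to close the remaining gap.
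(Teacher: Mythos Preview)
Your approach is correct and is a genuine alternative to the paper's argument, though built from the same ingredients (Proposition~\ref{prop:bob}, the $L^1$ bound \eqref{NNint}, and Lemma~\ref{lm:trunca}). The paper proceeds in two stages: it first derives the \emph{averaged} bound
\[
\frac{1}{N}\sum_{j}|\al_j-\gamma_j|=\int|\wt n-n_{sc}|\,\rd E\le CN^{-1/2+\e}
\]
by comparing the counting function $\wt n(E)=N^{-1}\#\{\al_j\le E\}$ with $n(E)$ (via Proposition~\ref{prop:bob}) and then $n$ with $n_{sc}$ (via \eqref{NNint}); it then upgrades this $L^1$ control to a pointwise bound by a pigeonhole argument exploiting that if $|\al_m-\gamma_m|$ is large then many nearby $|\al_j-\gamma_j|$ are also large. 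You instead bypass the averaged bound entirely and compare $n(E)$ with $n_{sc}(E)$ directly on the interval between $\gamma_j$ and $\al_j$, reading off the pointwise estimate from \eqref{NNint} alone. Your route is slightly more efficient quantitatively (your exponents $11/42$ and $12/35$ beat the paper's $1/5$), at the cost of a more delicate case analysis near the edges.

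One point where your write-up is too quick: your ``large $h$'' regime, where you claim $\varrho_{sc}(\gamma_j+t)\sim\sqrt{t}$ and hence $h^{5/2}\lesssim N^{-6/7}$, only covers the situation where the integration interval moves \emph{into} the bulk (e.g.\ $\gamma_j$ near $-2$ with $\al_j>\gamma_j$, or $\gamma_j$ near $2$ with $\al_j<\gamma_j$). In the opposite direction --- $\gamma_j$ near $2$ with $\al_j>\gamma_j$ --- the density $\varrho_{sc}(\gamma_j+t)$ vanishes as $t\uparrow 2-\gamma_j$ and the double integral saturates at roughly $(N-j)/N\cdot h$, which gives no useful bound on $h$ from \eqref{NNint}. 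Your endpoint paragraph only explicitly treats $j=1,N$, but the same mechanism is needed for all $j$ near the edges. The fix is immediate from what you already have: by monotonicity and Lemma~\ref{lm:trunca}, $\al_j\le\al_N\le 2+CN^{-1/4+\delta}$ for every $j$, so whenever $2-\gamma_j\le CN^{-1/5+\delta}$ one gets $\al_j-\gamma_j\le (2-\gamma_j)+CN^{-1/4+\delta}\le CN^{-1/5+\delta}$ directly; if instead $2-\gamma_j>CN^{-1/5+\delta}$ then your ``small $h$'' case applies on the sub-interval $[\gamma_j,\gamma_j+cN^{-1/5+\delta}]$, where $\varrho_{sc}\gtrsim N^{-1/10+\delta/2}$, and already yields a contradiction with \eqref{NNint}. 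With this clarification the argument is complete.
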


We remark that in the bulk $\al_m-\gamma_m$ is expected to
be   bounded by $O(N^{-1+\e})$ (in the hermitian case
it was proven in \cite{Gu}, see also \cite{TV});
near the edges one  expects   $\al_m-\gamma_m \sim O(N^{-2/3})$.  Our estimate
is not optimal, but it gives a short proof that is sufficient for our purpose.
 We remark that after submitting this paper, these conjectures
were proven in \cite{EYY3}.

\bigskip

{\bf Proof of Proposition \ref{prop:xg}.}
We define
$$
 \wt n(E): = \frac{1}{N} \#\{ \al_j\le E\}
$$
to be the counting function of the expected locations of the eigenvalues.
We compare $\wt n(E)$ with $n(E)$ defined in \eqref{ndefi}.
Using the fluctuation
bound \eqref{3.1},  we have
$$
 n\big( E- N^{-1/2+\e}\big) - Ce^{-cN^{\e}}  \le \wt n(E) \le n\big(E+N^{-
1/2+\e}
\big) + Ce^{-cN^{\e}}
$$
for any $E\in \bR$. In fact, the upper bound on the density \eqref{upbound} 
guarantees
that $\wt n(E)$ and $n(E)$ are Lipschitz continuous on any scale much bigger 
than $(\log N)^2/N$,
i.e.
\be
|\wt n(E)-n(E)|\le CN^{-1/2+\e} +Ce^{-cN^{\e}} \le CN^{-1/2+\e}
\label{compp}
\ee
for any $E$.

We  write
$$
\sum_{j=1}^N |\al_j-\gamma_j| =  \sum_{j: \al_j\ge \gamma_j} (\al_j
-\gamma_j)
+  \sum_{j: \al_j\le \gamma_j} (\gamma_j-\al_j).
$$
For the first term
\be
\begin{split}
  \sum_{j: \al_j\ge \gamma_j} (\al_j-\gamma_j)
  = & \int \rd E \sum_{j: \al_j\ge \gamma_j} {\bf 1}( \gamma_j
\le E \le \al_j)
  =  \int \rd E  \sum_{j: \al_j\ge \gamma_j} {\bf 1}\Big( \wt n(E) \ge 
\frac{j}{N}
 >  n_{sc}(E)\Big) \\
  = & N\int \rd E \; {\bf 1}(n_{sc}(E)\le\wt n(E))
(n_{sc}(E)-\wt n(E))
\end{split}
\ee
and the second term is analogous.
We thus  have
\be\label{algamma1}
  \frac{1}{N}\sum_{j=1}^N |\al_j-\gamma_j| = \int |\wt n(E)-n_{sc}(E)|\rd E.
\ee
By Lemma \ref{lm:trunca}
\[
\int_{|E|\ge 3} |\wt n(E)-n_{sc}(E)|\rd E =0 .
\]
For the energy range $|E|\le 3$, we use \eqref{compp}:
\be
\begin{split}
\int_{|E|\le 3} |\wt n(E)-n_{sc}(E)|\rd E  \le & \; CN^{-1/2+\e}
+ \int_{|E|\le 3} | n(E)-n_{sc}(E)|\rd E\\
\le &  CN^{-1/2+\e}
\end{split}
\ee
from \eqref{NNint}. Thus we obtain from \eqref{algamma1}
\be\label{algamma}
  \frac{1}{N}\sum_{j=1}^N |\al_j-\gamma_j| \le   C N^{-1/2+\e}.
\ee
with an $\e$ dependent constant.

\medskip

To estimate $|\al_m-\gamma_m|$, we can assume,
without loss of generality,
that $\al_m \ge \gamma_m$, the other case is treated
analogously. Let $\lambda>0$ be a parameter that will be
optimized later.
Set $m_0=[CN\lambda^{3/2}]$ with a sufficiently large constant $C$.
Since $n_{sc} (-2+ \delta)\sim \delta^{3/2}$, for any small $\delta>0$,
the parameter $\lambda$
is  roughly the energy difference from the edge to the $m_0$-th eigenvalue.

First we consider  an index $m$ such that $m_0 \le m \le N- m_0$.
For a small positive number $\ell$, define
$$
S:= \{ j\; : \; \gamma_j \in [\gamma_m , \gamma_m+\ell]\}.
$$
{F}rom the property $n_{sc} (-2+ \delta)\sim \delta^{3/2}$ for small $\delta$,    
we have
$$
  |S|  \ge cN\ell \lambda^{1/2}.
$$
Now set $\ell = \min \{ \frac{1}{2}|\alpha_m-\gamma_m|, c\lambda\}$
with some small positive constant $c$.
Since for all $j\in S$
$$
  \al_j -\gamma_j \ge \al_m - (\gamma_m+\ell) \ge \frac{1}{2}(\al_m-
\gamma_m) \ge \ell,
$$
we have
$$
 \sum_{j=1}^N |\al_j-\gamma_j| \ge \sum_{j\in S} \ell
  \ge  cN\ell^2\lambda^{1/2}.
$$
Combining this estimate with \eqref{algamma},
we have
$$
 \ell \le C\lambda^{-1/4}N^{-1/4+\e/2}.
$$
Assuming that $\lambda\ge CN^{-1/5+2\e/5}$, we see that
$\ell =  \frac{1}{2}|\alpha_m-\gamma_m|$ and we obtain
\be\label{8.8}
   |\alpha_m-\gamma_m| \le C N^{-1/5+\e}
\ee
for any $m$ with $m_0 \le m \le N- m_0$.

For the extreme indices, we use that
if $m\le m_0$, then from Lemma  \ref{lm:trunca} and \eqref{8.8}, we have
$$
   -2- C N^{-1/4+\delta}
  \le \al_1\le \al_m \le \al_{m_0} \le \gamma_{m_0} + C N^{-1/5+\e}
   \le -2 + C\lambda + C N^{-1/5+\e}
$$
and
$$
   -2 \le \gamma_m \le -2 +C\lambda
$$
for all $m\le m_0$. Thus
$$
   |\al_m - \gamma_m|\le  C\lambda + C N^{-1/5+\e}
$$
with $C$ depending on $\e$.
Similar estimates hold at the upper
edge of the spectrum, i.e. for $m\ge N-m_0$.
Choosing $\lambda = CN^{-1/5+\e}$,
we conclude the proof  of \eqref{ag}. The proof of
\eqref{xgamma} then follows from \eqref{3.1} and this
concludes the proof
of Proposition \ref{prop:xg}. \qed

\bigskip
The following Proposition is a strengthening of the bound \eqref{algamma}
used previously.

\medskip

\begin{proposition}\label{prop:algam} With the previous notations, we have
\be
 \frac{1}{N} \sum_i |\al_i-\gamma_i|\le CN^{-5/9+2\delta}.
\label{algg}
\ee
for any $\delta>0$ and with a constant $C$ depending on $\delta$ and $\theta$.
\end{proposition}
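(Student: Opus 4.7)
My plan is to start from the identity \eqref{algamma1} already established in the proof of Proposition \ref{prop:xg},
\[
\frac{1}{N}\sum_{j=1}^N |\al_j - \gamma_j| = \int_\R |\wt n(E) - n_{sc}(E)|\,\rd E,
\]
and to sharpen the right hand side by improving the bound on $\int|\wt n - n|\,\rd E$. Splitting $|\wt n - n_{sc}| \le |\wt n - n| + |n - n_{sc}|$ and invoking \eqref{NNint}, the second piece contributes at most $CN^{-6/7}$, already below the target $N^{-5/9+2\delta}$, so the whole task reduces to bounding $\int|\wt n - n|\,\rd E$.

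The central observation is the Monge--Kantorovich identity in one dimension: for any two sorted $N$-tuples $a_1\le\cdots\le a_N$ and $b_1\le\cdots\le b_N$, the $L^1$-distance between the associated empirical CDFs equals $\frac{1}{N}\sum_i |a_i - b_i|$. Introduce the random empirical CDF $G(E) := \frac{1}{N}\sum_i {\bf 1}(x_i \le E)$, so that $n = \E\, G$, while $\wt n$ is the deterministic CDF of $\frac{1}{N}\sum_i \delta_{\al_i}$. Since $x_1 < \cdots < x_N$ forces $\al_1 \le \cdots \le \al_N$, both sequences are already sorted, and the identity yields the sample-path equality
\[
\int |\wt n(E) - G(E)|\,\rd E = \frac{1}{N}\sum_{i=1}^N |x_i - \al_i|.
\]
Jensen's inequality then gives $\int |\wt n - n|\,\rd E \le \E \frac{1}{N}\sum_i |x_i - \al_i|$.

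The last step is to invoke the estimate \eqref{xal} from Proposition \ref{prop:fluc}: with probability at least $1 - Ce^{-cN^\delta}$ one has $\frac{1}{N}\sum_i |x_i - \al_i| \le N^{-5/9+2\delta}$. To pass to an expectation bound I would handle the complementary rare event by Cauchy--Schwarz, using the crude bounds $\max_i |x_i|^2 \le \mbox{Tr}\, H^2$ and $\E\, \mbox{Tr}\, H^2 \le CN$ so that the bad-event contribution is subexponentially small. Combining everything yields $\frac{1}{N}\sum_i |\al_i - \gamma_i| \le CN^{-5/9+2\delta}$.

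I expect the only subtle ingredient to be the sorted-coupling identity; this is what unlocks the improved rate, since directly inserting the pointwise bulk fluctuation bound \eqref{xifluc} into the argument used for \eqref{algamma} would still leave uncontrolled edge contributions, whereas the Wasserstein-type identity automatically averages over the whole spectrum.
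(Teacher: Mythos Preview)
Your argument is correct and is in fact cleaner than the paper's. The paper also starts from \eqref{algamma1} and the splitting $|\wt n-n_{sc}|\le |\wt n-n|+|n-n_{sc}|$, but to control $\int|\wt n-n|\,\rd E$ it works pointwise: on a bulk interval $[E_-,E_+]$ with $E_\pm=\pm(2-C\kappa)$, $\kappa=N^{-1/18+\delta}$, it proves $|n(E)-\wt n(E)|\le C\Phi$ via the individual fluctuation bound \eqref{xifluc} (comparing the random index $j_0(E)=\max\{j:x_j\le E\}$ with the deterministic $j_1(E)=\max\{j:\al_j\le E+\Phi\}$), then uses the cruder Lipschitz bound \eqref{compp} on the edge region and tail bounds outside. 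Your route sidesteps all of this: the same Monge--Kantorovich identity that underlies \eqref{algamma1}, applied instead to $\wt n$ and the random empirical CDF $G$, together with Jensen, gives $\int|\wt n-n|\,\rd E\le \E\,\frac{1}{N}\sum_i|x_i-\al_i|$ in one stroke, after which \eqref{xal} (plus a routine tail bound on the rare event) finishes the job. The gain is brevity and the avoidance of any bulk/edge case distinction; what the paper's longer argument buys is the pointwise estimate \eqref{ke1}, $|n(E)-\wt n(E)|\le C\Phi$ on the bulk, which is a stronger intermediate statement even though it is not used elsewhere in the paper.
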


\begin{proof} Recalling the definition of $\Phi$ from \eqref{Phidef}, we
will prove that 
\be
\frac{1}{N} \sum_i |\al_i-\gamma_i|\le
C\Phi+C\kappa N^{-1/2+\delta},
\label{algg1}
\ee
which gives \eqref{algg} with the choice of parameters \eqref{choice}.
We proceed similarly to the proof of Proposition \ref{prop:xg}
but we notice that in addition to \eqref{compp}, a stronger
bound on $|n(E)-\wt n(E)|$ is available  for $E\in I:=[E_-, E_+]$, where
$E_\pm:= \pm(2-C_2\kappa)$, with some large constant $C_2$ and setting
 $\kappa:= N^{-1/18+\delta}$ as in Proposition \ref{prop:fluc}.
 To obtain
an improved bound, note
that for any $E$ in this interval
\be
\begin{split}\label{ke}
  n(E) = & \frac{1}{N}\sum_j \E \; {\bf 1}(  x_j \le E) \le
 \frac{1}{N}\sum_j \E \; {\bf 1}(  \al_j \le E +\Phi)  +  Ce^{-cN^\delta}
 =\wt n(E+\Phi) +  Ce^{-cN^\delta}.
\end{split}
\ee
To see this inequality,   define the random index
$$
j_0= j_0(E):=\max\{ j \; : \; x_j\le E\} = \sum_j \; {\bf 1}(  x_j \le E)
$$
and the deterministic index
$$
 j_1=j_1(E) := \max\{ j \; : \; \al_j\le E+\Phi\} = \sum_j \; {\bf 1}( \al_j \le 
E+\Phi).
$$
The estimate \eqref{ke} will then follow if we prove that $j_0 \le j_1$,
i.e. $\al_{j_0}\le E+\Phi$,
with a very high probability.
By \eqref{xgamma} we have, with a very high probability, that
$$
   \gamma_{j_0} - CN^{-1/5+\delta} \le x_{j_0}\le E\le
 x_{j_0+1}\le \gamma_{j_0+1} +CN^{-1/5+\delta} \le \gamma_{j_0} +CN^{-
1/5+\delta}.
$$
Therefore, with a very high probability, $\gamma_{j_0}$ is in
the $CN^{-1/5+\delta}$ vicinity of $E\in I$,
and thus $CN\kappa^{3/2} \le j_0 \le N(1-C\kappa^{3/2})$ holds
for any fixed $C$ if $C_2$ in the definition of $E_\pm$ is sufficiently large.
Thus $|x_{j_0}-\al_{j_0}|\le \Phi$ with  a very high probability by 
\eqref{xifluc},
so $x_{j_0}\le E$ implies $\al_{j_0}\le E+\Phi$ and this proves
\eqref{ke}.

The proof of the lower bound
$$
n(E)\ge \wt n(E-\Phi) -  Ce^{-cN^\delta}
$$
is analogous.
Finally, by the Lipschitz continuity of $\wt n(E)$  on a scale bigger than 
$(\log N)^2/N$,
we have
\be
\begin{split}\label{ke1}
  |n(E) -\wt n(E)| \le C\Phi,\qquad \forall E\in [E_-, E_+],
\end{split}
\ee
where we also used that $\Phi\ge C\exp (-cN^\delta)$.

Define the interval $J=[-2-C_1N^{-1/4+\delta}, 2+ C_1N^{1/4+\delta}]$
with a constant $C_1$  larger than the constant $C_0$ in \eqref{a1}.
Using \eqref{algamma1}, we have
\be\label{ins}
  \frac{1}{N}\sum_{j=1}^N |\al_j-\gamma_j| \le (I)+(II)+(III)
 + \int |n(E)-n_{sc}(E)|\rd E
\ee
with
$$
  (I): =\int_{I} |\wt n(E)-n(E)|\rd E, \qquad
(II): = \int_{J\setminus I} |\wt n(E)-n(E)|\rd E, \qquad
(III): =\int_{J^c} |\wt n(E)-n(E)|\rd E.
$$
{F}rom \eqref{ke1} and \eqref{compp}, we have
$$
  (I) \le C\Phi, \qquad (II)\le CN^{-1/2+\delta}|J\setminus I| \le C\kappa N^{-
1/2+\delta}
$$
since $|J\setminus I|\le C\kappa + N^{-1/4+\delta}\le C\kappa$.
Finally, $\wt n(E) \equiv 0$ for $E<-2-C_0N^{-1/4+\delta}$ and $\wt n(E) \equiv 1$
for $E>2+C_0N^{-1/4+\delta}$  by \eqref{a1}.  
Since $C_1>C_0$,
combining these estimates with the
fluctuation \eqref{3.1} and with the tail estimate \eqref{xntail},
we obtain that  $n(E)(1-n(E))\le C\exp\big[ - cN^{1/4}\big]$
 for any $E\in J^c$, and it decays exponentially for large $|E|$,
 therefore
$$
  (III)\le Ce^{-cN^{1/4}}.
$$
Collecting all these estimates, inserting them into \eqref{ins} and using
\eqref{NNint}, we obtain \eqref{algg1} and  conclude the proof of Proposition \ref{prop:algam}.
\end{proof}

\bigskip

Finally, we can complete the 
proof of Lemma \ref{lm:timeindep}. 
By \eqref{xal} and \eqref{algg},
we have
\[
 \frac 1 N \sum_{ k } |x_k - \gamma_k|
\le    \frac 1 N \sum_{k}
\Big[|\alpha_k - \gamma_k|+ |x_k- \alpha_k |\Big]
\le   C N^{-5/9+2\delta}
\]
apart from a set of probability $C\exp\big[-cN^\delta\big]$.
Combining it with the tail estimate  \eqref{xntail} 
on $\max|x_k|$, we obtain \eqref{Qest} with any $\fa < 1/18$. 
The inequality \eqref{xgamma1} in  Lemma \ref{lm:timeindep}
follows immediately from \eqref{xgamma}
with any $\fc>0$ sufficiently small and 
with $\fb < 1/5-\fc$. \qed

\appendix

\section{Some Properties of the Eigenvalue Process}

In the main part of the paper we did not specify the function
spaces in which the equations \eqref{dy} and \eqref{dytilde} are solved.
In this appendix we summarize some basic properties of these equations.
In particular, we justify the integration by parts in \eqref{1.5}.
For simplicity, we consider the most singular $\beta=1$ case only.

The Dyson Brownian motion as a stochastic process
was rigorously constructed in Section 4.3.1 of
\cite{G}. It was proved that   the eigenvalues do not collide
with probability one and thus  \eqref{dy} holds in a weak sense
on the open set $\Sigma_N$.
The coefficients of $L$ have a
$(x_i-x_{j})^{-1}$ singularity near the coalescence hyperspace
$x_i=x_{j}$. We focus only on the single collision singularities,
i.e. on the case $j=i\pm 1$. By the ordering of the eigenvalues,
higher order collision points form a zero measure set
on the boundary of $\Sigma_N$ and can thus be neglected.
In an open neighborhood near the coalescence hyperspace $x_i=x_{i+1}$, the 
generator
has the form
$$
   L = \frac{1}{2N} \Big( \pt_i^2+\pt_{i+1}^2 +
 \frac{1}{x_{i+1}-x_{i}}(\pt_{i+1} -\pt_{i}) \Big)+L_{reg}
 =  \frac{1}{4N} \Big(\pt_v^2+\pt_u^2 +
 \frac{1}{u}\pt_u \Big)+ L_{reg}, \qquad u>0,
$$
after a change of variables, $v=\frac{1}{2}(x_i+x_{i+1})$,
$u=\frac{1}{2}(x_{i+1}-x_{i})$, where $L_{reg}$ has regular
coefficients. The boundary condition at $u=0$ is given by
the standard boundary condition of the generator of the
Bessel process, $\pt_u^2 +\frac{1}{u}\pt_u$,
which is $uf'(u)\to 0$ as $u\to 0+$. Thus $L$ is defined on
functions $f\in C^2(\Sigma_N)$ with sufficient
decay at infinity and with boundary conditions
\be
  \lim_{x_{i+1}-x_i\to 0} (x_{i+1}-x_i)(\pt_{i+1}-\pt_i)f
   \to 0
\label{bc}
\ee
for each $i$.

The generator $\wt L$ of \eqref{dytilde}
differs from $L$ only in drift terms with  bounded coefficients,
hence the boundary conditions of $\wt L$ and $L$ coincide.
Finally, we need some non-vanishing and regularity property of
the solution of  \eqref{dytilde}:

\begin{lemma}  Let $\Omega \subset \Sigma^N$ be a bounded open set
 such that
$$
  \overline\Omega \cap \bigcup_{i<j\; : \; (i,j)\ne (1,2)}
\{ \bx \; : \; x_i=x_j\}
 =\emptyset
$$
i.e. $\overline\Omega$ intersects
at most one of the coalescent hyperplanes, namely the
$\{ x_1=x_2\}$. Then any weak solution  $q_t(\bx)$ of \eqref{dytilde}
with boundary conditions \eqref{bc}
is $C^2$ on $(t, \bx)\in R_+\times\overline\Omega$ and for any $t>0$ we have
$$
    0 < \inf_{\overline\Omega} q_t \le  \sup_{\overline\Omega} q_t <\infty
$$
\end{lemma}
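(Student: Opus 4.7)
The plan is to split $\overline{\Omega}$ into an interior part (bounded away from all coalescence hyperplanes) and a neighborhood of the single coalescence hyperplane $\{x_1=x_2\}$, and then treat each region separately by standard parabolic regularity, with a change of variables to handle the Bessel-type singularity.

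First I would handle the easy case: on any compact subset $K\subset\overline{\Omega}\cap\{x_1<x_2\}$ strictly away from $\{x_1=x_2\}$, the generator $\wt L$ (and $L$) has smooth, bounded coefficients by the hypothesis that $\overline\Omega$ avoids all other coalescence hyperplanes. Hence $\wt L$ is uniformly elliptic on $K$ and standard parabolic Schauder theory (or De Giorgi–Nash–Moser) applied to \eqref{dytilde} gives $q_t\in C^\infty$ on $(0,\infty)\times K$, together with strict positivity for $t>0$ via the parabolic Harnack inequality, provided $q_t$ is not identically zero on $K$.

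The core step is the neighborhood of $\{x_1=x_2\}$. Here I would use the change of variables $v=(x_1+x_2)/2$, $u=(x_2-x_1)/2$ noted in the excerpt, which puts $\wt L$ in the form
$$
\wt L = \frac{1}{4N}\Bigl(\pt_v^2+\pt_u^2+\frac{1}{u}\pt_u\Bigr)+\wt L_{reg}, \qquad u>0,
$$
with $\wt L_{reg}$ having smooth bounded coefficients in the remaining variables on $\overline\Omega$. The key observation is that $\pt_u^2+\frac{1}{u}\pt_u$ together with the boundary condition $uq_t'(u)\to 0$ as $u\to 0+$ is exactly the action of the $2$-dimensional Laplacian $\Delta_w=\pt_{w_1}^2+\pt_{w_2}^2$ on radial functions $q_t(v,|w|,x_3,\ldots,x_N)=:\tilde q_t(v,w,x_3,\ldots,x_N)$. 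Lifting \eqref{dytilde} to the enlarged domain (where $u=|w|\ge 0$ is now replaced by $w\in\R^2$) produces a parabolic equation with smooth bounded coefficients on a neighborhood of $\{w=0\}$; interior parabolic Schauder estimates then give $\tilde q_t\in C^\infty$ in a full 2D neighborhood of the origin, and $C^2$ regularity of $q_t$ up to and including the boundary $u=0$ follows by restricting to radial functions.

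Finally, for the strict positivity $\inf_{\overline\Omega}q_t>0$, I would apply the parabolic strong maximum principle to the lifted equation on the enlarged domain. Since $q_t=f_t/\psi$ and $f_t$ is a probability density on $\Sigma_N$, $q_t$ is not identically zero on $\Sigma_N$; propagating positivity through the connected (after lifting) domain by Harnack yields $\tilde q_t>0$, hence $q_t>0$, on a full neighborhood of $\overline\Omega$ for every $t>0$. The upper bound $\sup_{\overline\Omega}q_t<\infty$ follows from the same local parabolic regularity, or alternatively from a comparison argument with a constant supersolution plus the boundedness of the drift coefficients of $\wt L$ on $\overline\Omega$. The main obstacle is the rigorous lifting argument that converts the Bessel-singular $1$-dimensional parabolic problem into a smooth $2$-dimensional one; once this is in place and the boundary condition \eqref{bc} is identified with smoothness of the radial extension, everything else reduces to off-the-shelf parabolic regularity and Harnack.
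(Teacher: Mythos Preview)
Your proposal is correct and follows essentially the same route as the paper: both arguments isolate the singular direction $u=\tfrac12(x_2-x_1)$, recognize $\partial_u^2+\tfrac{1}{u}\partial_u$ with the boundary condition \eqref{bc} as the radial part of a two-dimensional Laplacian, lift $q_t$ to a function $\widehat q_t(a,b,\cdot)=q_t(\sqrt{a^2+b^2},\cdot)$ of one more variable, and then invoke standard interior parabolic regularity and the strong maximum principle on the resulting uniformly elliptic operator. The paper's write-up is terser (it absorbs $v$ into the $\by$-variables and does not separate out the interior region or name Harnack explicitly), but the mechanism is identical.
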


\begin{proof}
The statement follows from regularity properties
of the Bessel process with generator $\pt_u^2+\frac{1}{u}\pt_u$.
In a small neighborhood of the coalescence line
$x_1=x_2$ one can introduce a local coordinate system $(u, \by)=\Phi(\bx)$, 
where
$u=\frac{1}{2}(x_2-x_1)>0$, $\by\in \bR^{N-1}$, so that, in the case for GOE,
$$
  \wt L = \frac{1}{4N} \Big[ \pt_u^2 + \frac{1}{u} \pt_u \Big] + L_{reg},
$$
where $L_{reg}$ is an elliptic operator with second derivatives in
the $\by$ variables and with bounded  coefficients
on the compact set $\Phi(\overline\Omega)$.
The solution in the new coordinates is $\wt q_t(u, \by) =
q_t ( \Phi^{-1}(u, \by))$.
Introducing a function
$\widehat q_t(a, b, \by) := \wt q_t ( \sqrt{a^2+b^2}, \by)$ defined in
$N+1$ variables, we see
that $\wh q_t$ satisfies $\pt_t \wh q_t = \wh L \wh q_t$, where
$$
  \wh L = \frac{1}{N}\Big[ \partial_a^2 + \pt_b^2 \Big] +  L_{reg}.
$$
i.e. $\wh L$ is elliptic with bounded coefficients in the new variables.
Notice that the boundary condition \eqref{bc} implies that,
in the two dimensional plane
of $(a, b)$, the support of the test function for the equation
$\pt_t \wh q_t = \wh L \wh q_t$
is allowed to include the origin $(0,0)$.

By standard parabolic regularity, we obtain that  the solution is $C^2$ and
is bounded from above and below.

\end{proof}

\bigskip

This lemma justifies the integration by parts in \eqref{1.5}. Since
$q\in C^2$ and it is separated away from zero, $h=\sqrt{q}$ has no singularity
on the coalescence lines. Since the  function $\exp (-\wt \cH)$ vanishes 
whenever
$x_i=x_j$ for some $i\neq j$, the boundary terms of the form
$$
  \int_{x_i=x_j} \pt \sqrt{q}\;\;  \pt^2\!\sqrt{q} \; e^{-\wt \cH} \rd\bx\;
$$
in the integration by parts vanish.

\section{Logarithmic Sobolev inequality for convolution measures}\label{sec:lsi}

\begin{lemma}\label{lm:lsiconv} Suppose $K$ and $H$ are two probability 
densities
on $\R$  so that the logarithmic Sobolev inequality holds
with constants $a$ and $b$, respectively. Then 
logarithmic Sobolev inequality holds for their convolution
$K\ast H$
as
\be\label{A1}
\int_\R f(x) \log f(x) K\ast H(x) \rd x \le  \max (a, b)
\int_\R \big(\nabla \sqrt {f(x)}\big)^2 K\ast H(x) \rd x
\ee
for any $f$ with $\int f(x)  K\ast H(x) \rd x=1$. Here $\nabla=\rd/\rd x$.
\end{lemma}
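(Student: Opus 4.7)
The plan is to deduce the LSI for $K \ast H$ from the tensorization principle applied to the product measure $K \otimes H$, combined with a lifting trick that realizes $K \ast H$ as a pushforward. The natural lift of any probability density $f$ with respect to $K \ast H$ is $\widetilde f(x, y) := f(x+y)$ on $\R^2$; the first step is to check that $\widetilde f$ is a probability density with respect to $K \otimes H$ (which is immediate by Fubini and the definition of convolution) and that the entropy transfers exactly,
\[
\int_{\R^2} \widetilde f \log \widetilde f \, K(x) H(y) \rd x \rd y = \int_\R f(z) \log f(z) (K \ast H)(z) \rd z,
\]
which again follows from the change of variables $z = x+y$.

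Next, I would invoke the standard tensorization property of LSI: if $K$ and $H$ satisfy LSI on $\R$ with constants $a$ and $b$ respectively, then $K \otimes H$ satisfies LSI on $\R^2$ with constant $\max(a,b)$. Applied to $\widetilde f$, this bounds the entropy of $\widetilde f$ by the full two-dimensional Dirichlet form. A direct computation then gives
\[
|\nabla_{x,y} \sqrt{\widetilde f(x,y)}|^2 = (\partial_x \sqrt{\widetilde f})^2 + (\partial_y \sqrt{\widetilde f})^2 = 2 \bigl( (\sqrt f)'(x+y) \bigr)^2,
\]
and reversing the change of variables $z = x+y$ converts the $\R^2$-integral back into a one-dimensional integral against $K \ast H$, producing an inequality of the desired form.

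The only delicate point is the factor of $2$ arising from the gradient identity above: the raw tensorization argument produces the constant $2\max(a,b)$ rather than $\max(a,b)$. For the application in Section~\ref{sec:best}, only the boundedness of the resulting LSI constant matters---the discussion immediately following the lemma statement uses merely that the constant is ``the maximum of the LSI constant of $\nu$ and of the Ornstein-Uhlenbeck kernel, bounded uniformly in time''---so this discrepancy is harmless. An alternative additive bound of the form $a+b$ can be obtained from the standard conditional decomposition of entropy on the product space, which is sharp in the Gaussian example $\gamma_{\sigma_1} \ast \gamma_{\sigma_2} = \gamma_{\sqrt{\sigma_1^2+\sigma_2^2}}$; in either formulation the lemma suffices for the rest of the paper.
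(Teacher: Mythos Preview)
Your tensorization-plus-pushforward argument is correct and gives an LSI constant $2\max(a,b)$; you are also right that the constant $\max(a,b)$ in the stated inequality cannot hold as written, since the Gaussian convolution $\gamma_{\sigma_1^2}\ast\gamma_{\sigma_2^2}=\gamma_{\sigma_1^2+\sigma_2^2}$ already forces the constant $a+b$.

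The paper takes a different route, namely the conditional (martingale) decomposition you allude to at the end. Concretely, it sets $g(y)=\int f(x)K(x-y)\,\rd x$ and splits
\[
\int f\log f\,\rd(K\ast H)=\int\!\!\int f(x)\log\frac{f(x)}{g(y)}\,K(x-y)H(y)\,\rd x\,\rd y+\int g(y)\log g(y)\,H(y)\,\rd y.
\]
The first term is handled by the LSI for the shifted measure $K(\cdot-y)$, giving the factor $a$. The second is the entropy of $g$ with respect to $H$, controlled by $b\int(\nabla\sqrt{g})^2 H$; an integration by parts $\nabla_y K(x-y)=-\nabla_x K(x-y)$ and Cauchy--Schwarz in the $x$-integral then bound this by $b\int(\nabla\sqrt{f})^2\,\rd(K\ast H)$. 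The outcome is the additive constant $a+b$, which is sharper than your $2\max(a,b)$ and is in fact optimal in view of the Gaussian example. Your tensorization argument is shorter and more conceptual, but loses the sharp constant because the lift $\widetilde f(x,y)=f(x+y)$ is constant along the anti-diagonal and hence wastes one direction of the two-dimensional Dirichlet form; the paper's conditional argument avoids this by working directly in one variable at a time. For the application in Section~\ref{sec:best} either bound is adequate, as you correctly observe.
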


{\bf Proof.} The following proof is really a special case
of the martingale approach used in \cite{LuY} to prove LSI.  Let
\[
g(y) = \int_\R f(x)K(x-y) \rd x.
\]
Then the left side of \eqref{A1} is equal to
\be
\int  \int_\R  \Big[ f(x) \log [f(x)/g(y) ] +   f(x) \log g(y) \Big] K(x-y) \rd 
x
 H(y)  \rd y.
\label{mart}
\ee
For any fixed $y$, from the LSI w.r.t. the measure $ K(x-y) \rd x$,
the first term on the right hand side is bounded by
\[
a \int_\R \int_\R ( \nabla \sqrt {f(x)} )^2  K(x-y) \rd x  H(y)  \rd y.
\]
Since $\int_\R g H=1$, the second term in \eqref{mart} is estimated by
\[
\int_\R g(y) \log g(y) H(y) \rd y \le b \int_\R (\nabla \sqrt {g(y)} )^2 H(y) 
\rd y
= \frac b 4  \int_\R  g(y)^{-1}  \left ( \int_\R f(x)\nabla_y K(x-y) \rd x
\right  )^2 H(y) \rd y.
\]
Integrating by parts, we can rewrite the last term as
\[
\frac b 4  \int_\R  g(y)^{-1}  \left ( \int f'(x) K(x-y) \rd x \right )^2 H(y) 
\rd y
\le  b  \int_\R   \int_\R (\nabla \sqrt {f(x)} )^2 K(x-y) \rd x H(y) \rd y,
\]
where we have used $f'(x) = 2 \sqrt {f(x)} \nabla \sqrt {f(x)}$
and the Schwarz inequality.
Combining these inequalities, we have proved the Lemma.
\qed

\section{Proof of Lemma \ref{lm:upper}}  \label{C}

\medskip

 For any $1\le k\le N$, let $H^{(k)}$ denote
the $(N-1)\times (N-1)$ minor that is obtained from the
Wigner matrix $H$ by removing the $k$-th row and column.
Let $\ba^{(k)} = (h_{k1}, h_{k2}, \ldots h_{k,k-1}, h_{k,k+1}, \ldots
h_{kN})^t$ be the $k$-th column of $H$ without the $h_{kk}$ element.
Let $\lambda_1^{(k)} < \lambda_2^{(k)}< \ldots <\lambda_{N-1}^{(k)}$ be
the eigenvalues and $\bu_1^{(k)}, \bu_2^{(k)}, \ldots$
the corresponding eigenvectors of $H^{(k)}$ and
set 
$$
 \xi_\al^{(k)}:= N|\ba^{(k)}\cdot \bu_\al^{(k)}|^2, \qquad \al=1,2,\ldots, N-1.
$$
It is well known (see, e.g. Lemma 2.5 of \cite{ESY1}), that
the eigenvalues of $H^{(k)}$ and $H$ are interlaced for each $k$, i.e. 
\be\label{interlace}
   x_1< \la_1^{(k)} < x_2 < \la_2^{(k)} < \ldots < x_{N-1} < \la_{N-1}^{(k)} < x_N.
\ee
Expressing the resolvent $G=(H-z)^{-1}$ of $H$ at a spectral parameter $z=E+i\eta$,
$\eta>0$, in terms of the resolvent of $H^{(k)}$, we
obtain
\be
   G_z(k,k)= \frac{1}{h_{kk} -z -  \ba^{(k)} \cdot (H^{(k)}-z)^{-1}\ba^{(k)}}
 =\Bigg[ h_{kk}-z - \frac{1}{N}
\sum_{\al=1}^{N-1}\frac{\xi_\al^{(k)}}{\lambda_\al^{(k)}- z}\Bigg]^{-1}.
\label{G11}
\ee
By considering only the imaginary part, we obtain
\be
  |G_z(k,k)|  \leq \eta^{-1} \Bigg| 1+ \frac{1}{N}
\sum_{\al=1}^{N-1}\frac{\xi_\al^{(k)}}{(\lambda_\al^{(k)}-E)^2
+\eta^2 }\Bigg|^{-1}.
\label{gz}
\ee

For the interval $I\subset\bR$ given in Lemma \ref{lm:upper},
set $E$ to be its midpoint and $\eta:=|I|$, i.e. $I= [E-\frac{\eta}{2},
E+\frac{\eta}{2}]$. Clearly
$$
  \cN_I \le C\sum_{j=1}^N \frac{\eta^2}{(x_j -E)^2+\eta^2} = \frac{C\eta}{\pi}
  \im \mbox{Tr}\, G(z),
$$
thus from \eqref{gz}  we obtain
\be
  \cN_I\leq C\eta\sum_{k=1}^N |G_z(k,k)| \leq
 CN\eta^2\sum_{k=1}^N \Big|
\sum_{\al: \lambda_\al^{(k)}\in I}\xi_\al^{(k)}\Big|^{-1},
\label{nxi}
\ee
where we restricted the $\al$ summation in \eqref{gz}
only to eigenvalues
lying in $I$.

For each $k=1,2,\ldots N$, we define the event
$$
  \Omega_k :=\Big\{ \sum_{\al: \lambda_\al^{(k)}\in I}\xi_\al^{(k)}
  \leq \delta (\cN_I-1)\Big\}
$$
for some small $\delta>0$.
By the interlacing property of the $\mu_\al$ and $\lambda_\al^{(k)}$
eigenvalues, we know that there
are at least $\cN_I-1$ eigenvalues of $H^{(k)}$ in $I$.
Since the logarithmic Sobolev inequality \eqref{logsob} implies that
the tail of the distribution $\nu$  has a Gaussian bound \cite{Le},
we can apply  Lemma \ref{lm:bour} below to conclude
that there exists a positive universal constant $c$
such that $\P(\Omega_k)\leq \E \exp{\big[-c\sqrt{\cN_I-1}\big]}$.
Setting $\wt\Omega
= \bigcup_{k=1}^N\Omega_k$, we see that
\be
\P(\wt\Omega \; \text{and} \; \cN_I\ge KN|I|)
\leq N \, \E \Big[ e^{-c\sqrt{\cN_I-1}}\cdot {\bf 1}\big( \cN_I\ge KN|I|\big)\Big] 
\leq e^{-c'\sqrt{KN|I|}}
\label{wt}
\ee
if $K$ is sufficiently large, recalling that $\eta=|I|\ge (\log N)^2/N$.
On the complement event, $\wt\Omega^c$, we have from \eqref{nxi} that
$$
  \cN_I\leq \frac{CN^2\eta^2}{ \delta(\cN_I-1)}
$$
i.e. $\cN_I\leq (C/\delta)^{1/2} N\eta$. Choosing $K$
sufficiently large, we obtain \eqref{upbound} from \eqref{wt}.
 This proves Lemma \ref{lm:upper}.

\medskip

\begin{lemma}\label{lm:bour}
Let the components of the vector $\bb\in \R^{N-1}$
be real i.i.d. variables with a common distribution $\rd \nu$
that satisfies a Gaussian decay condition \eqref{gauss}
for some positive $\delta_0>0$.
Let $\xi_\al = |\bb \cdot \bv_\al|^2$, where $\{ \bv_\al\}_{\al\in
{\cal I}}$
is an orthonormal set in $\R^{N-1}$. Then for
$\delta\leq 1/2$ there is a constant $c>0$
such that
\be
  \P\big\{ \sum_{\al\in {\cal I}}\xi_\al \leq \delta m\big\} \leq
e^{-c\sqrt{m}} \;
\label{lm:xii}
\ee
holds for any ${\cal I}$,
where $m=|{\cal I}|$ is the cardinality of the index set ${\cal I}$.
\end{lemma}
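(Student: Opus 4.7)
The plan is to apply the exponential Chebyshev inequality
$\P(\|P\bb\|^2\le\delta m)\le e^{s\delta m}\E e^{-s\|P\bb\|^2}$,
where $P := \sum_{\al\in\cI}\bv_\al\bv_\al^*$ is the rank-$m$ orthogonal projection (so $\|P\bb\|^2=\sum_\al\xi_\al$), and to estimate $\E e^{-s\|P\bb\|^2}$ via the Gaussian integral representation
\[
e^{-s\|P\bb\|^2} = (2\pi)^{-m/2}\int_{\R^m}e^{-|\bg|^2/2}\exp\big(i\sqrt{2s}\,\bb\cdot\bw(\bg)\big)\,d\bg,\qquad \bw(\bg) := \sum_{\al\in\cI}g_\al\bv_\al\in\R^{N-1},
\]
with the key identity $\sum_i w_i(\bg)^2=|\bg|^2$ coming from the orthonormality of the $\bv_\al$. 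Averaging over the i.i.d.\ components of $\bb$ turns the inner exponential into $\prod_i\hat\nu(\sqrt{2s}w_i)$, reducing the task to a purely analytic estimate of a Gaussian integral against a product of characteristic functions.

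The next step exploits the Gaussian decay \eqref{gauss} together with the Wigner normalization ($\int x\,d\nu=0$, $\int x^2\,d\nu=1$): these imply a bound of the form $|\hat\nu(t)|^2 \le \exp(-c\min(t^2,1))$ for some $c>0$ depending only on $\delta_0$. Combined with the elementary inequality $\sum_i\min(y_i,1)\ge\min(\sum_i y_i, 1)$ (valid for $y_i\ge 0$), this gives
$\prod_i|\hat\nu(\sqrt{2s}w_i)| \le \exp\big(-c\min(2s|\bg|^2,1)/2\big)$.
The resulting $\bg$-integral then splits into the region $\{|\bg|^2\le 1/(2s)\}$, where one gets an explicit $(1+4cs)^{-m/2}$, and its complement, bounded by $e^{-c/2}\cdot\P(|G|^2>1/(2s))$ for a standard Gaussian $G$ on $\R^m$. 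A careful choice of $s$ that balances the Markov factor $e^{s\delta m}$ against these contributions is meant to produce the $e^{-c\sqrt m}$ decay for $\delta\le 1/2$.

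The hardest part will be the characteristic-function bound and the optimization in $s$. First, the inequality $|\hat\nu(t)|\le e^{-c\min(t^2,1)}$ can fail for atomic or lattice $\nu$ on a periodic set; one circumvents this either by smoothing $\nu$ against a tiny Gaussian (which does not change the tail estimate materially) or by carefully using $|\hat\nu(t)|\le 1$ outside a compact set and absorbing the loss in the optimization. Second, the naive choice $s\asymp 1/m$ produces only constant-order decay; one either refines the $\min$-inequality by using the pointwise bound $|w_i|\le|\bg|\sqrt{P_{ii}}\le|\bg|$ together with $\sum_i P_{ii}=m$, or, more robustly, bypasses the characteristic function entirely via a Hanson--Wright style decomposition
\[
\bb^T P\bb = m + \sum_i P_{ii}(b_i^2-1) + \sum_{i\ne j}P_{ij}b_ib_j.
\]
For this decomposition the diagonal part is a sum of independent sub-exponential variables, controlled by Bernstein using $\sum_i P_{ii}^2\le\sum_i P_{ii}=m$ and $|P_{ii}|\le 1$; the off-diagonal part is a centered quadratic form controlled by decoupling and a higher-moment (Chernoff) argument using $\|P\|_F^2\le m$ and $\|P\|_{\mathrm{op}}=1$. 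Combining the two tail bounds on the event $\{R_d\ge -m/4\}\cap\{|R_o|\le m/4\}$ yields $\bb^T P\bb\ge m/2\ge\delta m$ off a set of probability $\le 2e^{-c\sqrt{m}}$, which is the claim.
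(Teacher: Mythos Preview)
Your proposal eventually lands on exactly the approach the paper takes, but only after a detour. The paper's proof is a three-line application of the classical Hanson--Wright inequality (cited as a black box): write $X:=\sum_{\al\in\cI}\xi_\al-m=\sum_{i,j}P_{ij}(b_ib_j-\E b_ib_j)$ with $P_{ij}=\sum_\al v_\al(i)v_\al(j)$, compute $A^2:=\sum_{i,j}|P_{ij}|^2=\mbox{tr}\,P^2=m$, and conclude
\[
\P\Big(\sum_\al\xi_\al\le\delta m\Big)\le\P\big(|X|\ge m/2\big)\le 4\exp\big(-c\min\{m/(2A),\,m^2/(4A^2)\}\big)\le e^{-c\sqrt m}.
\]
Your ``Hanson--Wright style decomposition'' is the same idea, except that instead of quoting the inequality you sketch its proof via the diagonal/off-diagonal split. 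That is fine and in fact, with the sharp form of Hanson--Wright that separates $\|P\|_F$ and $\|P\|_{\mathrm{op}}$, your argument would give the stronger bound $e^{-cm}$ (the paper uses the 1971 version with only the Frobenius norm, hence the $\sqrt m$).

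The characteristic-function route you begin with is a genuine detour, and you correctly diagnose its failure mode: the bound $|\hat\nu(t)|\le e^{-c\min(t^2,1)}$ is simply false for lattice $\nu$, and the paper's hypotheses allow discrete $\nu$ (Corollary~2.4 even treats Bernoulli). Your smoothing fix (``convolve with a tiny Gaussian'') is not as innocuous as you suggest: adding $\eps\bg$ to $\bb$ shifts $\|P\bb\|^2$ by cross terms of size $\eps\sqrt m\cdot\|P\bb\|$, and controlling these already requires some concentration input of the Hanson--Wright type. Since you end up at Hanson--Wright anyway, I would drop the Fourier paragraph entirely and present only the quadratic-form argument.
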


\bigskip

{\it Proof of Lemma \ref{lm:bour}.}  We will need the following
result of Hanson and Wright \cite{HW},
extended to non-symmetric variables by Wright \cite{Wr}.
We remark that this statement can also be extended
to complex random variables (Proposition 4.5 \cite{ESY3}).

\begin{proposition}\label{prop:HW}\cite{HW, Wr}
Let $b_j$, $j=1,2,\ldots N$ be a sequence of real i.i.d. random
variables with distribution $\rd\nu$
satisfying the Gaussian decay \eqref{gauss} for some $\delta_0>0$.
 Let $a_{jk}$, $j,k=1,2,\ldots N$ be
arbitrary real numbers and let $\cA$ be the $N\times N$ matrix
with entries $\cA_{jk}:= |a_{jk}|$.   Define
$$
 X:=\sum_{j,k=1}^N a_{jk} \big[ b_j {b}_k -\E  b_j {b}_k\big]\; .
$$
Then  there exists a constant $c>0$, depending only on $\delta_0, D$
from \eqref{gauss},
such that for any $\delta>0$
$$
\P (|X|\ge \delta)\leq 4\exp\big( -c\min\{\delta/A,\; \delta^2/A^2\}\big)\;,
$$
where $A:= (\text{Tr}\, \cA\cA^t)^{1/2}=\big[\sum_{j,k} |a_{jk}|^2\big]^{1/2}$. 
\end{proposition}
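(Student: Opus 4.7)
The plan is to prove this Hanson--Wright/Wright concentration inequality for the quadratic form $X$ via a moment generating function bound followed by Chernoff's inequality. The mixed exponent $\min\{\delta/A,\delta^2/A^2\}$ is of Bernstein type and arises from an MGF estimate of the form $\E\exp(\lambda X)\le \exp(C\lambda^2 A^2)$ valid on the strip $|\lambda|\le c/A$. First I would symmetrize $a_{jk}\mapsto\tfrac12(a_{jk}+a_{kj})$, which changes $A$ by at most a factor $\sqrt 2$, so we may assume $\cA=\cA^t$. Next I split
\begin{equation*}
X = X_d + X_o,\qquad X_d:=\sum_j a_{jj}(b_j^2-\E b_j^2),\qquad X_o:=\sum_{j\ne k}a_{jk} b_j b_k,
\end{equation*}
and prove the claimed tail bound for each, losing only a factor $2$ in the final constants via a union bound.

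For the diagonal piece, the Gaussian decay hypothesis \eqref{gauss} on $\nu$ guarantees that $b_j^2-\E b_j^2$ is a centered subexponential random variable: its MGF is finite on a neighborhood of $0$ and $\E\exp[\lambda(b_j^2-\E b_j^2)]\le \exp(C\lambda^2)$ for $|\lambda|\le c$, where $c,C$ depend only on $\delta_0,D$. By independence,
\begin{equation*}
\E\exp(\lambda X_d)\le \exp\Bigl(C\lambda^2 \sum_j a_{jj}^2\Bigr)
\quad\text{for}\quad |\lambda|\le c/\max_j|a_{jj}|.
\end{equation*}
Since $\sum_j a_{jj}^2\le A^2$ and $\max_j|a_{jj}|\le A$, the standard Bernstein/Chernoff optimization in $\lambda$ delivers $\P(|X_d|\ge \delta)\le 2\exp(-c'\min\{\delta/A,\delta^2/A^2\})$.

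For the off-diagonal piece I would use decoupling. Introduce an independent copy $b'=(b_1',\dots,b_N')$ of $b$ together with i.i.d.\ Bernoulli$(\tfrac12)$ selectors $\epsilon_j$ independent of everything, and set $S:=\{j:\epsilon_j=1\}$. For any convex function $\Phi$, Jensen's inequality applied to the uniform distribution on $S$ (and the fact that $\E\epsilon_j(1-\epsilon_k)=\tfrac14$ for $j\ne k$) yields the de la Pe\~na--Montgomery-Smith style bound
\begin{equation*}
\E\,\Phi(X_o)\le \E\,\Phi\bigl(4\,Y\bigr),\qquad Y:=\sum_{j,k}a_{jk} b_j b_k'.
\end{equation*}
Conditionally on $b'$, the variable $Y$ is a linear combination of the independent subgaussian $b_j$'s with coefficients $c_j:=\sum_k a_{jk}b_k'$, so $\E_b e^{\lambda Y}\le \exp\bigl(C\lambda^2\|\cA b'\|_2^2\bigr)$. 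It remains to take expectation over $b'$. Here I would exploit that the scalar variable $Z:=\|\cA b'\|_2^2=\sum_{j,k,\ell}a_{jk}a_{j\ell}b_k'b_\ell'$ is again a quadratic form with $\E Z=\|\cA\|_F^2=A^2$, and iterate the same decoupling argument one step (or, equivalently, apply Chebyshev/Markov bounds using $\|\cA\|_{\mathrm{op}}\le A$) to conclude
\begin{equation*}
\E_{b'}\exp\bigl(C\lambda^2\|\cA b'\|_2^2\bigr)\le \exp(C'\lambda^2 A^2)\qquad\text{whenever}\quad |\lambda|\le c/A.
\end{equation*}
Combining the two MGF estimates and optimizing $e^{-\lambda\delta}\E e^{\lambda X}$ over $\lambda\in(0,c/A]$ produces $\P(|X|\ge\delta)\le 4\exp(-c''\min\{\delta/A,\delta^2/A^2\})$.

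The main obstacle is the decoupling step outside the Gaussian case: for Gaussian entries the rotation invariance of $(b+b',b-b')$ yields the identity $\E F(X_o)=\E F(\sum a_{jk}b_jb_k')$ almost immediately, but for a general subgaussian law $\nu$ (which is Wright's extension of Hanson--Wright) one must handle the diagonal bias via the Bernoulli selector construction and track the numerical constants carefully. Once decoupling is in place, the remaining ingredients---the subexponential moments of $b_j^2$ and the conditional subgaussian MGF of the linearized form---are routine consequences of \eqref{gauss}.
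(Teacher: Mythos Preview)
The paper does not prove this proposition: it is quoted verbatim as a known result of Hanson--Wright \cite{HW}, with Wright's extension \cite{Wr} to non-symmetric laws, and is then immediately applied to derive Lemma~\ref{lm:bour}. There is thus no ``paper's own proof'' to compare against; your proposal supplies what the paper deliberately outsources.

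Your sketch follows the modern route to Hanson--Wright (decoupling via Bernoulli selectors, conditional subgaussian MGF, Bernstein optimization), which is different in spirit from the original combinatorial moment arguments of \cite{HW,Wr} but is by now the standard textbook approach and is adequate for the weak form stated here (with $A=\|\cA\|_F$ appearing in both places of the $\min$, rather than the sharper $\|\cA\|_{\mathrm{op}}$ in the linear slot). Two small points to tighten: first, the proposition does not assume $\E b_j=0$, so your off-diagonal piece should read $\sum_{j\ne k}a_{jk}(b_jb_k-\E b_jb_k)$, and you should center the $b_j$'s before decoupling; this costs only an additional linear term that is trivially subgaussian. Second, the step $\E_{b'}\exp(C\lambda^2\|\cA b'\|_2^2)\le\exp(C'\lambda^2A^2)$ for $|\lambda|\le c/A$ is correct but is itself a quadratic-form MGF bound; rather than ``iterating decoupling'' it is cleanest to diagonalize $\cA^t\cA$ and use the one-dimensional subexponential bound on each eigencomponent, invoking $\|\cA\|_{\mathrm{op}}\le A$ to stay in the admissible range of $\lambda$.
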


We will apply this  result for
$$
 X= \sum_{i, j=1}^N a_{ij} \big[ b_i{b}_j - \E \, b_i b_j\big],
\qquad \mbox{with}\qquad
  a_{ij}: = \sum_{\al\in {\cal I}} {v}_\al(i)v_\al(j)\, .
$$
Notice that  $\sum_{\al\in {\cal I} }\xi_\al = X + | {\cal I}|=X+m$
since  $\E \,\xi_\al =1$. By  $\delta\le 1/2$ we therefore obtain
$$
 \P\big\{ \sum_{\al\in  {\cal I}}\xi_\al \leq \delta m\big\}
\leq \P\big\{ |X| \ge \frac{m}{2}\big\}\; .
$$
Since
$$
  A^2:= \sum_{i,j=1}^N |a_{ij}|^2 = \sum_{\al,\beta\in  {\cal I}}
\sum_{i,j=1}^N{v}_\al(i)v_\al(j)v_\beta(i){v}_\beta(j) = m \; ,
$$
by Proposition \ref{prop:HW}, we obtain
$$
 \P\big\{ \sum_{\al\in  {\cal I}}\xi_\al \leq \delta m\big\}
\leq \P\big\{ |X| \ge \frac{m}{2}\big\}\;
\leq 4\exp\Big( -c\min\big\{
\frac{m}{2A}, \frac{m^2}{4A^2}
\big\} \Big)\leq  e^{-c\sqrt{m}}.
$$
for some $c>0$. \qed

\bigskip
\noindent
{\bf Acknowledgement}: We thank  Jun Yin  for
several helpful comments and pointing out some errors
in the preliminary  versions of this paper. 
We are also grateful to the referees for their suggestions
to improve the presentation.

\bigskip
\thebibliography{hhhh}

\bibitem{BE} Bakry, D.,  \'Emery, M.: Diffusions hypercontractives. in: 
S\'eminaire
de probabilit\'es, XIX, 1983/84, {\bf 1123} Lecture Notes in Mathematics, 
Springer,
Berlin, 1985, 177--206.

\bibitem{BP} Ben Arous, G., P\'ech\'e, S.: Universality of local
eigenvalue statistics for some sample covariance matrices.
{\it Comm. Pure Appl. Math.} {\bf LVIII.} (2005), 1--42.

\bibitem{BI} Bleher, P.,  Its, A.: Semiclassical asymptotics of
orthogonal polynomials, Riemann-Hilbert problem, and universality
in the matrix model. {\it Ann. of Math.} {\bf 150} (1999): 185--266.

\bibitem{BG}
Bobkov, S. G., G\"otze, F.: Exponential integrability
and transportation cost related to logarithmic
Sobolev inequalities. {\it J. Funct. Anal.} {\bf 163} (1999), no. 1, 1--28.

\bibitem{BH} Br\'ezin, E., Hikami, S.: Correlations of nearby levels induced
by a random potential. {\it Nucl. Phys. B} {\bf 479} (1996), 697--706, and
Spectral form factor in a random matrix theory. {\it Phys. Rev. E}
{\bf 55} (1997), 4067--4083.

\bibitem{DKMVZ1} Deift, P., Kriecherbauer, T., McLaughlin, K.T-R,
Venakides, S., Zhou, X.: Uniform asymptotics for polynomials
orthogonal with respect to varying exponential weights and applications
to universality questions in random matrix theory.
{\it  Comm. Pure Appl. Math.} {\bf 52} (1999):1335--1425.

\bibitem{DKMVZ2} Deift, P., Kriecherbauer, T., McLaughlin, K.T-R,
Venakides, S., Zhou, X.: Strong asymptotics of orthogonal polynomials
with respect to exponential weights.
{\it  Comm. Pure Appl. Math.} {\bf 52} (1999): 1491--1552.

\bibitem{Dy} Dyson, F.J.: A Brownian-motion model for the eigenvalues
of a random matrix. {\it J. Math. Phys.} {\bf 3}, 1191-1198 (1962).

\bibitem{ESY1} Erd{\H o}s, L., Schlein, B., Yau, H.-T.:
Semicircle law on short scales and delocalization
of eigenvectors for Wigner random matrices.
{\it Ann. Probab.} {\bf 37}, No. 3, 815--852 (2009)

\bibitem{ESY3} Erd{\H o}s, L., Schlein, B., Yau, H.-T.:
Wegner estimate and level repulsion for Wigner random matrices.
{\it Int. Math. Res. Notices.} {\bf 2010}, No. 3, 436-479 (2010)

\bibitem{ERSY}  Erd{\H o}s, L., Ramirez, J., Schlein, B., Yau, H.-T.:
Universality of sine-kernel for Wigner matrices with a small Gaussian
 perturbation. {\it Electr. J. Prob.} {\bf 15},  Paper 18, 526--604 (2010)

\bibitem{EPRSY}
Erd\H{o}s, L., P\'ech\'e, S., Ram\'irez, J.,  Schlein, B. and Yau, H.-T.: Bulk 
universality
for Wigner matrices. {\it Commun. Pure  Applied Math.}
{\bf 63},  895-925, (2010) 

\bibitem{ERSTVY}
Erd\H{o}s, L., Ram\'irez, J.,  Schlein, B., Tao, T., Vu, V. and Yau, H.-T.:
Bulk universality for Wigner hermitian matrices with subexponential decay.
To appear in Math. Res. Letters. Preprint arXiv:0906.4400

\bibitem{ESYY} Erd{\H o}s, L., Schlein, B., Yau, H.-T., Yin, J.:
The local relaxation flow approach to universality of the local
statistics for random matrices.  
Preprint arXiv:0911.3687

\bibitem{EYY} Erd{\H o}s, L.,  Yau, H.-T., Yin, J.: 
Bulk universality for generalized Wigner matrices. 
Preprint arXiv:1001.3453

\bibitem{EYY2} Erd\H{o}s, L., Yau, H.-T., Yin, J.: 
Universality for generalized Wigner matrices with Bernoulli distribution.
Preprint arXiv:1003.3813

\bibitem{EYY3} Erd\H{o}s, L., Yau, H.-T., Yin, J.: 
Rigidity of Eigenvalues of Generalized Wigner Matrices.
Preprint arxiv:1007.4652

\bibitem{G} Guionnet, A.: Large random matrices: Lectures
on Macroscopic Asymptotics. \'Ecole d'E\'t\'e de Probabilit\'es
de Saint-Flour XXXVI-2006. Springer.

\bibitem{Gu} Gustavsson, J.: Gaussian fluctuations of eigenvalues in the
GUE, {\it Ann. Inst. H. Poincar\'e, Probab. Statist.} {\bf 41} (2005), no.2,
151--178

\bibitem{HW} Hanson, D.L., Wright, F.T.: A bound on
tail probabilities for quadratic forms in independent random
variables. {\it The Annals of Math. Stat.} {\bf 42} (1971), no.3,
1079-1083.

\bibitem{J} Johansson, K.: Universality of the local spacing
distribution in certain ensembles of Hermitian Wigner matrices.
{\it Comm. Math. Phys.} {\bf 215} (2001), no.3. 683--705.

\bibitem{Le} Ledoux, M.: The concentration of measure phenomenon.
 Mathematical Surveys and Monographs, {\bf 89}
      American Mathematical Society, Providence, RI, 2001.

\bibitem{LuY}
Lu, S.-L. and Yau, H.-T.: Spectral gap and logarithmic Sobolev
inequality for Kawasaki and Glauber dynamics,
{\em Comm. Math. Phys.} {\bf 156}, 399--433, 1993.

\bibitem{M} Mehta, M.L.: Random Matrices. Academic Press, New York, 1991.

\bibitem{PS} Pastur, L., Shcherbina M.:
Bulk universality and related properties of Hermitian matrix models.
J. Stat. Phys. {\bf 130} (2008), no.2., 205-250.

\bibitem{SS} Sinai, Y. and Soshnikov, A.:
A refinement of Wigner's semicircle law in a neighborhood of the spectrum edge.
{\it Functional Anal. and Appl.} {\bf 32} (1998), no. 2, 114--131.

\bibitem{Sosh} Soshnikov, A.: Universality at the edge of the spectrum in
Wigner random matrices. {\it  Comm. Math. Phys.} {\bf 207} (1999), no.3.
697-733.

\bibitem{TV} Tao, T. and Vu, V.: Random matrices: Universality of the
local eigenvalue statistics.
 Preprint arXiv:0906.0510.

\bibitem{Vu} Vu, V.: Spectral norm of random matrices. {\it Combinatorica},
{\bf 27} (6) (2007), 721-736.

\bibitem{Wr} Wright, F.T.: A bound on tail probabilities for quadratic
forms in independent random variables whose distributions are not
necessarily symmetric. {\it Ann. Probab.} {\bf 1} No. 6. (1973),
1068-1070.

\bibitem{Y} Yau, H. T.: Relative entropy and the hydrodynamics
of Ginzburg-Landau models, {\it Lett. Math. Phys}. {\bf 22} (1991) 63--80.

\end{document}